\documentclass{article}

\usepackage{latexsym}
\usepackage{amsmath,amssymb,amscd}
\usepackage{graphicx}
\usepackage{xspace}
\usepackage{amsthm,amsmath,amssymb,amscd,amsfonts,latexsym}
\usepackage{bussproofs}
\usepackage{color}
\usepackage{multicol}
\usepackage[english]{babel}
\usepackage{tikz}
\usepackage{authblk}
\usepackage{fancyhdr}
\usepackage{orcidlink}

\usepackage{mathrsfs}
\usepackage{mathtools}
\usepackage{rotating}
\usepackage{natbib}
\usepackage{csquotes}
\usepackage{comment}

\usepackage[inline]{enumitem}
\usepackage{url}
\usepackage{hyperref}

\usepackage{lineno}

\usepackage{newtxmath}

\makeatletter
\def\namedlabel#1#2{\begingroup
    #2%
    \def\@currentlabel{#2}%
    \phantomsection\label{#1}\endgroup
}
\makeatother

\newtheorem{theorem}{Theorem}[section]
\newtheorem{corollary}{Corollary}[theorem]
\newtheorem{definition}{Definition}[section]
\newtheorem{lemma}[theorem]{Lemma}

\usetikzlibrary{matrix}
\usetikzlibrary{positioning}
\usepackage{yfonts}
\usepackage{epsfig}
\usetikzlibrary{arrows,automata}

\newcommand{\Pts}{\sf{PtS}}
\newcommand{\Bes}{\sf{BeS}}
\newcommand{\At}{\sf{At}}

\newcommand{\LJ}{\sf{LJ}}
\newcommand{\LK}{\sf{LK}}

\newcommand{\NEp}{\sf{NEp}}

\newcommand\eg{\hbox{\textit{e.g.}}}

\newcommand\iimp{\to_i}
\newcommand\cimp{\to_c}
\newcommand\ivee{\vee_i}
\newcommand\cvee{\vee_c}

\title{Connecting Base-Extension Semantics, Beth Semantics, and Ecumenical Logic}

\author{Victor Barroso-Nascimento\,\orcidlink{0000-0002-3990-5996}}
\author{Joaquim T. Waddington\,\orcidlink{0009-0009-3362-2467}}
\author{Elaine Pimentel\,\orcidlink{0000-0002-7113-0801}}
\affil{Department of Computer Science, University College London, UK}

\newcommand{\runningtitle}{Ecumenical semantics}

\newcommand{\runningauthors}{Barroso-Nascimento, Waddington, Pimentel}

\date{}

\begin{document}

\maketitle

\begin{abstract}
We investigate the relationship between base-extension semantics, Beth semantics, and Prawitz's ecumenical logic. We first present a base-extension semantics for the propositional fragment of Prawitz's system and prove that it is sound and complete. The semantics interprets intuitionistic and classical operators directly within a common proof-theoretic framework, without reducing the classical operators to their intuitionistic counterparts by means of a translation. We then analyse the effect of incorporating Beth-style clauses into base-extension semantics. When arbitrary bases are admitted and $\bot$ is governed by atomic explosion, these clauses trivialise. By contrast, when $\bot$ is treated as a basic sentence and attention is restricted to consistent bases, the Beth-style clauses for atoms and disjunction characterise, respectively, classical atoms and classical disjunction in Prawitz's system. We also obtain equivalent, path-independent formulations of the classical clauses for atoms, disjunction, and implication. These results reveal that the consistency constraint is not merely a technical restriction: within base-extension semantics, it transforms clauses originally associated with intuitionistic semantics into semantic characterisations of classical operators.
\end{abstract}

\section{Introduction}
%
%
%
%
%

The relatively young discipline of proof-theoretic semantics ($\Pts$) provided a fresh breath of air into the semantic study of intuitionistic logic~\cite{sep-proof-theoretic-semantics}. Grounded in inferentialist interpretations of constructivism that regard proofs as semantic ingredients of meaning~\cite{Brandom2000}, $\Pts$ provides new semantic frameworks in which the core semantic values are defined in terms of proof-theoretic structures corresponding to constructive proofs of propositions. Aside from making explicit the relationship between semantics and proofs, $\Pts$ has the additional benefit of allowing use of proof-theoretic instruments for the proof of semantic results.

An important instance of this approach is base-extension semantics ($\Bes$)~\cite{Sandqvist2015-SANBSF}, in which validity is defined relative to systems of rules over basic sentences and their extensions. The resulting semantics exhibits a clear structural affinity with Kripke semantics: bases may be regarded as information states, base extension behaves like an accessibility relation, and the clause for implication quantifies over all extensions of the current base~\cite{Goldfarb2015-GOLODP-2,Piecha2015FailComplete-PIEFOC}. 
This affinity is more than suggestive. Pym, Ritter, and Robinson~\cite{DBLP:journals/sLogica/PymRR25} analyse Sandqvist's
semantics by means of presheaf and sheaf constructions, clarifying its relationship with intuitionistic Kripke models and, in particular, the distinctive behaviour of disjunction in $\Bes$. Correspondences between base-extension and Kripke semantics have also been developed for several classical and intuitionistic modal logics by Eckhardt and Pym~\cite{DBLP:journals/igpl/EckhardtP25} and by Buzoku and Pym~\cite{DBLP:conf/tableaux/BuzokuP25}.


The relationship between $\Bes$ and traditional model-theoretic semantics is nevertheless far from exhausted. In particular, its
connection with Beth semantics~\cite{BethSemantics} remains largely unexplored. This is especially intriguing because the two approaches treat disjunction in formally similar ways~\cite{waddington2026inferentialistgamesemanticsextended}. One might
therefore expect Beth's clauses to provide an alternative route to the intuitionistic operators of $\Bes$. However, what happens is almost the opposite! The behaviour of the clauses depends of the treatment of absurdity, but in neither case we have intuitionistic logic.

When Beth-style clauses are transplanted into the usual base-extension setting, in which $\bot$ is understood through atomic explosion, the resulting clauses trivialise: the presence of a greatest base, containing all atomic rules, forces the relevant conditions to hold indiscriminately. The picture changes completely when $\bot$ is itself treated as a basic sentence and the semantics is restricted to consistent bases. Under this constraint, Beth-style clauses cease to be trivial and instead characterise {\em classical} operators. 
Quite surprisingly, this brings Beth semantics into contact with logical ecumenism, which is  an approach to logic that aims at providing frameworks allowing rival logics to peacefully coexist~\cite{Pimentel2025-PIMAEV}. In this setting, Beth's clause for disjunction provides semantics to the classical disjunction of {\em Prawitz's ecumenical system}~\cite{DBLP:journals/Prawitz15}. 

The version of $\Bes$ developed in~\cite{Pimentel2025-PIMAEV} for Barroso-Nascimento's ecumenical system~\cite{Barroso-Nascimento} already points in this direction. There, intuitionistic validity is grounded in the traditional understanding of proofs as constructions, whereas the consistency constraint gives rise to a genuinely classical notion of proof, echoing Hilbert's association between truth and proofs of consistency~\cite{ConsistencyHillbert61554c58-c869-34f0-b322-2cff263d9ae0,Hilbert1979-HILMPL}.

The present paper develops this connection for Prawitz's ecumenical logic. Since by using Beth's semantic definitions in this setting we obtain semantic characterisations of Prawitz's classical connectives,  such clauses, originally devised to provide a \textit{genuinely intuitionistic definition of proof}, can also be used to provide a \textit{genuinely classical definition of proof}. Moreover, analogously to what happens in \cite{Pimentel2025-PIMAEV}, the metalanguage of the semantics has to be ecumenical, but classical reasoning principles (which now include set-theoretic as well as logical reasoning) are required \textit{only to establish properties of classical connectives}. Thus, the coexistence between classical and intuitionistic reasoning is reflected not only in the object language, but also in the methods used to study its semantics.

This gives rise to the following motto:

\begin{center}
    {\em Beth-style clauses $+$ consistency constraint $=$ classical operators.}
\end{center}

The contributions of this work are threefold. First, we present a base-extension semantics for the propositional fragment of Prawitz's ecumenical system and prove it sound and complete. Unlike the existing Kripke semantics, our account interprets the classical operators directly, rather than reducing them to their intuitionistic counterparts through a double-negation translation.

Second, we investigate the behaviour of Beth-style clauses over the frames induced by base extension. Over the unrestricted class of bases, with $\bot$ interpreted through atomic explosion, the clauses for atoms and disjunction trivialise. Over the class of consistent bases, however, these same clauses characterise Prawitz's classical atoms and classical disjunction.

Third, we show that these path-based clauses admit equivalent formulations stated entirely in terms of base extension. Besides dispensing with explicit reference to paths, these formulations yield an analogous characterisation of classical implication.

The paper is organised as follows. Section~\ref{sec:overview} recalls the relevant background on logical ecumenism, base-extension semantics, and Beth semantics. Section~\ref{sec:BeS} introduces our base-extension semantics for Prawitz's propositional ecumenical system and establishes soundness and completeness. Section~\ref{sec:Beth} studies the interaction between Beth-style clauses and the frames induced by unrestricted and consistent bases. Finally, Section~\ref{sec:conc} discusses the significance of the results and directions for further work. 

\section{Overview}\label{sec:overview}
Three strands come together in the results of this paper: Prawitz's logical ecumenism, base-extension semantics, and Beth semantics. Each approaches the boundary between classical and intuitionistic reasoning from a different direction: Ecumenism places their operators within a common language; base-extension semantics gives meaning in terms of inferences and their possible extensions; and Beth semantics follows what becomes established along every possible path. In this section, we recall the aspects of these three approaches that will allow them to meet in the technical development that follows.

\subsection{Logical ecumenism}
{\em Logical ecumenism} is a philosophical and proof-theoretic framework introduced by Peter Krauss~\cite{Krauss-P1,Krauss-P2}, Dag Prawitz~\cite{DBLP:journals/Prawitz15} and Gilles Dowek~\cite{DBLP:journals/Dowek16a,DBLP:conf/fscd/BlanquiDGHT21, DBLP:conf/types/Grienenberger22, DBLP:journals/lmcs/BlanquiDGHT23, DBLP:phd/hal/Grienenberger25}, aimed at harmonising opposing logical systems, most notably intuitionistic and classical logic. Rather than viewing constructivism and classicism as mutually exclusive concepts,  ecumenism posits that both can peacefully coexist within a single, unified language. Under this view, logical disagreement is not an intractable dispute over the nature of {\em truth}, but rather a distinction in the logical constants being expressed. 

At its structural core, an ecumenical system achieves such a harmony by sharing a common set of foundational connectives, while bifurcating others into explicitly variants. For example, an intuitionistic implication demands a constructive method to transform proofs, whereas its classical counterpart accommodates non-constructive principles like {\em double negation elimination}. This setup allows classicists and intuitionists to reason in the same workspace while precisely tracking where non-constructive assumptions are deployed.

This is actually the core behind Prawitz's ecumenism~\cite{DBLP:journals/Prawitz15}, where classicists and intuitionists share what are called {\em neutral connectives}: $\wedge,\bot,\forall$ for conjunction, falsity and universal quantification. Thinking of the proof-theoretical nature of these connectives, they all share the same inference rules in. \eg, Gentzen's $\LK$ and $\LJ$ sequent systems~\cite{gentzen1969}. On the other hand, implication, disjunction and the existential quantifier have a quite different behavior in these systems, and this is reflected in Prawitz ecumenism, where there are classical and intuitionistic versions of them. Prawitz's  claim is that they are not, in fact, the same connectives.

Negation deserves separate discussion. In both classical and intuitionistic logic, proving the negation of a sentence $A$ typically amounts to deriving a contradiction from the assumption $A$. As Prawitz puts it:
\begin{quote}
There is no other direct way of getting a ground for asserting $\neg A$ classically, although the classical use of negation differs from the intuitionistic. 
\end{quote}
This naturally raises the question of how the classical use of negation can be captured proof-theoretically. As noted in~\cite{DBLP:journals/Prawitz15}, introducing separate classical and intuitionistic negation constants equipped with their expected inference rules would cause the two connectives to collapse into one, a phenomenon first observed by Popper in~\cite{popper1948deduction2}. While this already provides strong evidence that negation is, in fact, a neutral connective as defended by Prawitz, in this work we present an even stronger proof by showing that classical and intuitionistic negation also collapse at the semantic level (see Section~\ref{sec:BeS}).

Prawitz's formulation of ecumenism is particularly well suited to the present work because it internalises the classical/intuitionistic coexistence in a single deductive system. This makes it a natural testing ground for an inferentialist semantics: If the distinction between its operators is fundamentally a distinction between rules, then their semantic interpretation should reflect those rules directly.

\subsection{Base-extension Semantics}

Whereas model-theoretic semantics~\cite{sep-model-theory} explains
the meaning of logical expressions through their interpretation in
set-theoretic structures, with Tarski's definition of truth as the
paradigmatic example~\cite{Tarski1944}, Proof-theoretic semantics ($\Pts$)~\cite{Prawitz-2006,pts-91,sep-proof-theoretic-semantics}  takes proofs and inferences as
fundamental. Its philosophical basis is {\em inferentialism}~\cite{Brandom2000}: the meaning of a
logical expression is determined by the inferential rules governing it.
Logical constants are therefore characterised not by the conditions under
which they are true in a model, but by the role they play in reasoning.

$\Pts$ has recently attracted considerable attention, see, \eg, \cite{stafford2025logics,DBLP:journals/sLogica/GheorghiuGP26, DBLP:journals/rsl/Ayhan26,Daragona-Prawitz}. By grounding {\em validity} in {\em derivability},  $\Pts$ makes it possible to employ the rich machinery of proof theory in giving meaning to logical operators.

$\Bes$ is a strand of $\Pts$ in which proof-theoretic validity is defined relative to a specified collection of inference rules, called a {\em base}, governing the basic formulas of the language~\cite{Sandqvist2015-SANBSF}.
Such bases provide the underlying proof-theoretic environment from which the semantic behaviour of complex formulas is determined.
These basic formulas are typically atomic propositions, whose meaning is determined by their inferential role within the base.

To illustrate the idea, suppose that one does not know the meaning of the
sentence
\begin{center}
``Giant squids live in the aphotic zone.''
\end{center}
If, however, we are given the inference rules
{\tiny
\begin{prooftree}
 \AxiomC{Giant squids live in the aphotic zone}
 \UnaryInfC{Giant squids live in the ocean}
    \DisplayProof
    \qquad
   \AxiomC{Giant squids live in the aphotic zone}
   \UnaryInfC{Giant squids live in areas with little or no sunlight}
\end{prooftree}
}

\begin{prooftree} \scriptsize
    \AxiomC{Giant squids live in the ocean}
    \AxiomC{Giant squids live in areas with little or no sunlight}
    \BinaryInfC{Giant squids live in the aphotic zone}
\end{prooftree}
then the inferential role of ``aphotic zone'' tells us that, in this setting,
it corresponds to a part of the ocean with little or no sunlight.

Bases do more than specify the inferential roles of basic formulas: they
themselves constitute inference systems in which such formulas can be
derived. For example, let $r,w,u,m,d$ represent, respectively, the sentences
``It's raining'', ``I might get wet'', ``I have an umbrella'', ``I bring the
umbrella with me'', and ``I will stay dry'', and consider a base $S$
containing the following rules
\begin{prooftree}
    \AxiomC{$r$}
    \UnaryInfC{$w$}
    \DisplayProof
    \qquad
    \AxiomC{$w$}
    \AxiomC{$u$}
    \BinaryInfC{$m$}
    \DisplayProof
    \qquad
    \AxiomC{$[r]$}
    \noLine
    \UnaryInfC{$\vdots$}
    \noLine
    \UnaryInfC{$m$}
    \UnaryInfC{$d$}
\end{prooftree}
Then, within $S$, we can derive that ``I will stay dry'' from `` I have an umbrella"
\begin{prooftree}
    \AxiomC{$[r]$}
    \UnaryInfC{$w$}
    \AxiomC{$u$}
    \BinaryInfC{$m$}
    \UnaryInfC{$d$}
\end{prooftree}

The derivation above takes place entirely within the base $S$. A central
feature of $\Bes$, however, is that we do not consider bases in isolation.
A base $S'$ is an {\em extension} of $S$, written $S' \supseteq S$, if it contains all the rules of
$S$, possibly together with additional ones. 
Intuitively, extending a base amounts to enriching our inferential resources:
anything derivable in $S$ remains derivable in $S'$, but the additional
rules may enable new derivations.

For example, consider the extension $S' \supseteq S$  by adding the rule 
\begin{prooftree}
    \AxiomC{$r$}
    \UnaryInfC{$u$}
\end{prooftree}
expressing, for example, that whenever it rains I have an umbrella available.
In the extended base $S'$, from $r$ we can now derive both $w$
and $u$, and hence $m$
\begin{prooftree}
    \AxiomC{$r$}
    \UnaryInfC{$w$}
    \AxiomC{$r$}
    \UnaryInfC{$u$}
    \BinaryInfC{$m$}
\end{prooftree}
The passage from $S$ to $S'$ therefore represents an increase in inferential
information, rather than a change in the meaning of rules already
present in $S$.

This notion of extension plays a fundamental role in $\Bes$. Semantic
support is often required to be stable under extensions of a base: what is
supported given certain inferential resources must remain supported when
those resources are enriched. Moreover, the semantic clauses for some
logical operators explicitly quantify over extensions of the current base.
In this sense, base extensions play a role analogous to accessible future
states in Kripke semantics, although here the underlying structure is
proof-theoretic: what grows is not a state of information specified
model-theoretically, but the collection of available inference rules.

Sandqvist~\cite{Sandqvist2015-SANBSF} gave a sound and complete formulation
of $\Bes$ for intuitionistic propositional logic.
There, the notion of {\em consequence} in $\Bes$ is given by an inductively defined
semantic judgment, called {\em support}, whose base case is determined
by provability in a base. Thus, for any base $S$, the support of an atomic
formula $p$ is defined by
\[
\Vdash_{S} p \quad\mbox{iff}\quad p \mbox{ is provable in } S.
\]
The clauses for complex formulas are then defined inductively. Conjunction,
for example, behaves much as it does in Kripke semantics:
\[
\Vdash_{S} A \land B
\quad\mbox{iff}\quad
\Vdash_{S} A \mbox{ and } \Vdash_{S} B.
\]

Beyond this point, however, the similarity with Kripke semantics begins to
fade. In particular, $\bot$ is often understood as {\em absurdity}, following
its second-order characterisation instantiated over atoms (as in the system
$\mathsf{Fat}$~\cite{DBLP:journals/jsyml/FerreiraF13}):
\[
\Vdash_{S} \bot
\quad\mbox{iff}\quad
\Vdash_{S} p \mbox{ for every atom } p.
\]
Simply requiring $\bot$ to be unsupported, as one might expect by analogy
with Kripke semantics, leads to serious problems. In particular, it allows
one to establish $\Vdash_{S}\neg\neg p$ for every atom
$p$~\cite{Piecha2015FailComplete-PIEFOC}. Perhaps surprisingly, this is not the case if we include $\bot$ in bases but require $\nvdash_{S} \bot$ to hold for all bases. As demonstrated in \cite{Pimentel2025-PIMAEV}, by doing this and using the other semantic clauses of standard $\Bes$ we obtain a semantics for intuitionistic logic.

As we shall show, $\Bes$ is particularly well suited to the ecumenical setting: it allows the interaction between neutral, intuitionistic, and classical operators to be captured directly, without appealing to a translation into one of the component logics.

\subsection{$\Bes$ and Beth: Constructivism and Consistency}\label{section: beth}
The impact of definitions of $\bot$ in intuitionistic logic and completeness proofs has a very rich history. Beth's original proof of completeness for his semantics contained an important flaw directly related to the possibility of models being inconsistent \cite[pg. 8]{Troelstra1999-TROTDO-5}. A result due to G{\"o}del and Kreisel shows that this is part of a more general phenomenon, as any completeness proof for semantics having a definition of $\bot$ in terms of unsatisfiability has to validate Markov's principle at the metalinguistic level \cite{GodelKreiselMarkov,Kreiselweak1d8699fd-bfc0-3036-b969-86808e50151e}. On the other hand, Veldman \cite{Weldman3ed2a5f6-4711-3646-ba01-000ad136b830} and de Swart \cite{deSwart5fa3bf73-e0dd-34ae-81ab-659ee7e64c9b} provided intuitionistic completeness proofs for Kripke and Beth first-order intuitionistic semantics by defining $\bot$ in terms of explosion. Sandqvist's  \cite{Sandqvist2015-SANBSF} and Gheorghiu's \cite{Gheorghiu10.1093/jigpal/jzaf062} completeness proofs follow along the same lines.

Beth semantics is defined over ordered frames $F=\langle W,\leq\rangle$, with paths represented by maximal chains.
Beth's approach yields semantics for different logics depending on the class of frames used~\cite{BeerBeth}. Historically, Beth gave particular importance to {\em spreads} and {\em fans}: a spread is a tree in which every node has a successor, and a fan is a finitely branching spread~\cite[p.~679]{TroelstraConstructivismVol2}, \cite{BethSemantics}.
This choice reflects the connection between Beth's frames and Brouwer's free-choice sequences, through which mathematical activity is understood as a process unfolding through successive choices~\cite[p.~21]{BeerBeth}. It also illustrates how both the selection of frames and their intended pre-theoretic interpretation shape the relationship between a semantics and constructive mathematics.

This becomes particularly important for $\Bes$ once we point out that similar readings are applicable to the concepts of \textit{bases} and \textit{extensions}. Given the standard reading of bases as collections of accepted inferences, the extension relation can be visualised as standing for a process of expansion of a knowledge base  \cite{Schroeder-HeisterAtomicTwo2016-SCHASI-16}. If {\em all bases} are allowed in our semantics, we are considering all possible extension process, and if we only allow {\em consistent bases}, we are only considering extensions preserving consistency. 

Despite this conceptual similarity, structural differences makes transplantation of Beth's interpretation to $\Bes$ untenable. This is so since the set of all bases can be characterised as the powerset of the set of all rules \cite{Followingall10.1093/analys/anac100} and, since there are countably infinitely many rules, there are \textit{uncountably many bases}.  This entirely precludes the establishment of any meaningful conceptual relationship between $\Bes$ and choice sequences: while Beth semantics, when defined over fans, is capable of providing infinitary definition with strong, finitary roots, $\Bes$ quantifies over an uncountable set and characterises its extension processes in terms of the \textit{continuum} \footnote{This lends further credence to the argument in \cite[pgs. 127-129]{Nascimentophd} to the effect that, since $\Bes$ forces us to consider all extensions at once in semantic definitions, its treatment of infinity might not be intuitionistically acceptable. Barroso-Nascimento proposes an alternative proof-theoretic treatment of infinity in which the semantics consider only the structure obtained by extending any base finitely many times, then going on to argue that this replaces the notion of completed infinity used in $\Bes$ by one of potential infinity, but this definition does not seem to be compatible with $\Bes$ itself. Although this is of philosophical importance, it does not have any bearing on our technical results, so we do not deal with this issue here. }. This makes it so that the set-theoretic metalinguistic structure employed in $\Bes$ exhibits strong classical features.

Traditional definitions of Beth semantics use a definition of atomic valuation different from that of Kripke semantics \cite{BeerBeth,TroelstraConstructivismVol2}, although it is also possible to define Beth semantics using Kripke valuations \cite{sep-logic-intuitionistic}. The basic difference between both is that Beth valuations allows the value of an atom at a point to affect its value in \textit{antecessors} of that point, whereas Kripke valuations only allows a point to affect its \textit{successors}. The formal definitions are as follows. 

\begin{definition}\label{def:Bethvaluation}
Given any ordered frame $F=\langle W,\leq\rangle$, a function $b : W \mapsto \mathcal{P}(\At)$ is a \textit{Beth valuation} if, for all atoms $p$, if for all maximal chains $\alpha$ such that $w \in \alpha$ there is some $w' \in \alpha$ such that $p \in b(w')$, then $p \in b(w)$.
\end{definition}

\begin{definition}\label{def:Kripkevaluation}
Given any ordered frame $F=\langle W,\leq\rangle$, a function $k : W \mapsto \mathcal{P}(\At)$ is a \textit{Kripke valuation} if $p \in k(w)$ and $w \leq w'$ implies $p \in k(w')$.
\end{definition}

Intuitively, Beth and Kripke valuations establish which atomic facts hold at each point $w$ of the frame. The condition imposed on Beth valuations requires that, whenever an atomic fact is eventually established at some point $w'$ in any maximal chains passing through $w$, it is already be established at $w$. Although the requirement is only stated for atoms, it is possible to prove that the property extends to arbitrary formulas \cite[pg. 25]{BeerBeth}. When combined with fans, Beth valuations can intuitively be read as fixing that everything inevitably knowable in a finite amount of time can already be considered true.  As such, Beth semantics seems to be a particularly good fit for Michael Dummett's constructive concept of truth \cite{DummettTruth}, in which a statement is considered true (although possibly not yet provable or assertible) whenever some procedure leading to a construction of its object is available, even if the construction itself has not been obtained, and regardless of our knowledge of the fact that the procedure will lead to this construction \cite[pg. 31]{Nascimentophd}. It is also worthy of notice that Beth valuations are monotonic, meaning that every Beth valuation is also a Kripke valuation \cite{kripke1965semantical}, but not all Kripke valuations constitute Beth valuations.

Beth's semantic clauses are then defined by using a doubly indexed relation $\vDash^{F}_{w}$ to define validity in $w \in W$ of a frame $F = \langle W, \leq \rangle$, then defining $\vDash^{F}$ in terms of $\vDash^{F}_{w}$ and then $\vDash$ in terms of $\vDash^{F}$. For our purposes it suffices to consider the definitions using $\vDash^{F}_{w}$, as we only compare semantic clause for operators. The clauses for conjunction, implication and entailment correspond to those of Kripke semantics (and $\Bes$), but his clauses for atomic validity and disjunction are as follows:

 \begin{enumerate}
        \item $\vDash^{F}_{w} p$ iff $p \in b(w)$, for atomic $p$;

        \medskip

        \item $\vDash^{F}_{w} A \lor B$ iff for every path $\alpha$ with $w \in \alpha$ there is some $w' \in \alpha$ such that either $\vDash^{F}_{w'} A$ or $\vDash^{F}_{w'} B$;
    \end{enumerate}

Although the definition of atomic support in $\Bes$ seems to be closer to Kripke's definition of atomic validity, Beth's definition of disjunction is remarkably similar to that of $\Bes$. In particular, both allow a disjunction $A \lor B$ to be semantically valid in a point $w$ (or base $S$) even when neither $A$ nor $B$ are valid in that particular point. If we change Beth's clause to one in which $A \lor B$ is valid in $w$ iff either $A$ or $B$ is valid in $w$, we get the intuitionistic Kripke semantic, but if the same change is done in $\Bes$ this leads to incompleteness w.r.t. intuitionistic logic \cite{Piecha2015FailComplete-PIEFOC}. Since the semantic clauses of $\Bes$ can be interpreted as definitions over the frame containing all bases, ordered by set-theoretic inclusion \cite{Goldfarb2015-GOLODP-2,Piecha2015FailComplete-PIEFOC}, this result can be interpreted as a consequence of the fact that the frames use in $\Bes$ are much more restrictive than those used in Beth and Kripke semantics.

 It is important to note that, unlike Kripke semantics, Beth semantics is highly frame-sensitive \cite{BeerBeth}. To provide an example, while the set of finitary Kripke models is sound and complete w.r.t. intuitionistic logic, every finite Beth model is classical \cite[pg. 42]{BeerBeth}. The results we will prove are intimately connected to the frame structures implicitly defined by $\Bes$ and the behaviour they induce in Beth clauses. Since standard $\Bes$ with $\bot$ defined explosively and $\Bes$ with the consistency constraint implicitly define different frames, pairing Beth clauses with them has different effects. In short, the connections we will show between Beth semantics and standard $\Bes$ hold Since the set of all bases, when ordered by the set-inclusion relation, has a greatest element (the set of all rules), while the set of all \textit{consistent} bases, despite not having a greatest element, contains maximal consistent bases as maximal elements in all its maximal chains.
 
 \section{$\Bes$ for Prawitz's ecumenical system}\label{sec:BeS}
In this work, we adopt
Sandqvist's terminology and adapt the framework to the ecumenical setting,
following the developments in
\cite{Pimentel2025-PIMAEV,DBLP:conf/tableaux/BarrosoNascimentoPP25}. In particular, we treat $\bot$
itself as a basic formula while restricting attention to consistent bases.

The  propositional {\em base language} is assumed to have the set $\At$ consisting of a countable infinite set of atomic propositions $\{p_1, p_2,\ldots\}$, together with the unit $\bot$.
The elements of $\At$ will be called {\em basic sentences}. We use, as does Sandqvist, systems containing natural deduction rules over basic sentences for the semantical analysis, and we allow inference rules to discharge sets of basic hypotheses. Sets used in the definition of the derivability relation and semantic consequence are always assumed to be finite. Unlike Sandqvist, however, we allow the logical constant $\bot$ to be manipulated by the rules. 

In what follows, we will adopt the following notation: capital Greek letters with the subscript `$\At$' ($\Gamma_{\At}, \Delta_{\At}$) denote finite sets of basic sentences; capital letters ($A, B$) denote formulas; capital Greek letters without the subscript `$\At$' ($\Gamma, \Delta$) denote sets of formulas; commas between sets denote set union.

\begin{definition}[Base] \label{def:atomic_system}
An {\em atomic system} (a.k.a. {\em base}) $S$ is a (possibly empty) set of atomic rules of the form
\begin{prooftree}
   \AxiomC{ $[\Delta^1_{\At}]$}
    \noLine
    \UnaryInfC{$\vdots$}
    \noLine
    \UnaryInfC{$p_1$}    \AxiomC{\ldots}\AxiomC{$[\Delta^{n}_{\At}]$}
    \noLine
    \UnaryInfC{$\vdots$}
    \noLine
    \UnaryInfC{$p_n$}
    \TrinaryInfC{$p$}
\end{prooftree}
where $p_j, p \in \At$ and $\Delta^j_{\At}$ are (possibly empty) finite sets of basic sentences.The sequence $\langle p_1, \ldots, p_n\rangle$ of premises of the rule can be empty -- in this case the rule is called an {\em atomic axiom}. Labels will sometimes be written as the superscript of $\Delta^j_{\At}$ and to the right of a rule to denote that $\Delta^j_{\At}$ was discharged at that rule application.
\end{definition}

\begin{definition}[Extension]
\label{def:extension}
An atomic system $S'$ is an {\em extension} of an atomic system $S$ (written $S' \supseteq S$), if $S'$ results from adding a (possibly empty) set of atomic rules to $S$.
\end{definition}

\begin{definition}[Deducibility]
The {\em deducibility relation} $\vdash_S$ coincides with the usual notion in the system of natural deduction consisting of just the rules in $S$, that is, $p_1, \ldots, p_n \vdash_S p$ iff there exists a deduction with the rules of $S$ whose conclusion is $p$ and whose set of undischarged premises is a subset of $\{p_1, \ldots, p_n\}$.
\end{definition}
Note that, following this definition, if $p \in \{p_1, \ldots, p_n\}$ then $p_1, \ldots, p_n\vdash_S p$ holds regardless of  $S$, as a single occurrence of the assumption $p$ already counts as a ruleless deduction of $p$ from $p$.

\begin{definition}[Consistency]\label{def:deduc}
An atomic system $S$ is called {\em consistent} if $\nvdash_S \bot$.
\end{definition}
All the bases considered in this section are consistent. In Section~\ref{sec:Beth} we will discuss an alternative semantics where this restriction can be dropped.

Prawitz's ecumenical propositional base language $\mathcal{L}$~\cite{DBLP:journals/Prawitz15} is described as follows. We will use a subscript $c$ for the classical meaning and $i$ for the intuitionistic one, dropping such subscripts when formulas/connectives can have either meaning. 
$\bot\in\mathcal{L}$ and, if $p\in\At-\{\bot\}$, then $p_{c}, p_{i}\in\mathcal{L}$\footnote{Although one could initially distinguish between classical and intuitionistic notions of $\bot$, it is straightforward to show that they collapse, as observed in~\cite{DBLP:journals/Prawitz15,DBLP:journals/Dowek16a}. For clarity, we adopt the intuitionistic interpretation throughout and write $\bot$ for $\bot^i$.}.
The neutral logical connectives $\{\neg,\wedge\}$ are shared between the classical and intuitionistic fragments, while $\{\cimp,\cvee\}$ and $\{\iimp,\ivee\}$ are restricted to classical  and intuitionistic interpretations, respectively.

Prawitz's natural deduction system for the propositional fragment of ecumenical logic, denoted by $\NEp$, is presented in Figure~\ref{Fig:NEp}. We write $\Gamma\vdash_{\NEp}A$ when there is a $\NEp$-deduction with conclusion $A$ whose undischarged assumptions form a subset of $\Gamma$.

\begin{figure}[tp] 
{\sc Intuitionistic and neutral rules}
\begin{prooftree}
\AxiomC{$[A]$}
\noLine
\UnaryInfC{$\Pi$}
\noLine
\UnaryInfC{$B$}
\RightLabel{\scriptsize{$\to_i$-int}}
\UnaryInfC{$A \to_i B$}
\DisplayProof
\qquad
\AxiomC{$A \to_i B$}
\AxiomC{$A$}
\RightLabel{\scriptsize{$\to_i$-elim}}
\BinaryInfC{$B$}
\end{prooftree}

\begin{prooftree}
\AxiomC{$A_j$}
\RightLabel{\scriptsize{$\lor_i$-int}}
\UnaryInfC{$A_1\lor_i A_2$}
\DisplayProof
\qquad
\AxiomC{$A \lor_i B$}
\AxiomC{$[A]$}
\noLine
\UnaryInfC{$\Pi_1$}
\noLine
\UnaryInfC{$C$}
\AxiomC{$[B]$}
\noLine
\UnaryInfC{$\Pi_2$}
\noLine
\UnaryInfC{$C$}
\RightLabel{\scriptsize{$\lor_i$-elim}}
\TrinaryInfC{$C$}
\end{prooftree}

\begin{prooftree}
\AxiomC{$A$}
\AxiomC{$B$}
\RightLabel{\scriptsize{$\wedge$-int}}
\BinaryInfC{$A \wedge B$}
\DisplayProof
\qquad
\AxiomC{$A_1\wedge A_2$}
\RightLabel{\scriptsize{$\wedge$$-elim_j$}}
\UnaryInfC{$A_j$}
\end{prooftree}
\begin{prooftree}
\AxiomC{$\bot$}
\RightLabel{\scriptsize{$\bot$-elim}}
\UnaryInfC{$A$}
\DisplayProof
\qquad
            \AxiomC{$[A]$}
            \noLine
            \UnaryInfC{$\Pi$}
            \noLine
            \UnaryInfC{$\bot$}
            \RightLabel{\scriptsize{$\neg$-int}}
            \UnaryInfC{$\neg A$}
\DisplayProof
\qquad            
\AxiomC{$A$}
          \AxiomC{$\neg A$}
             \RightLabel{\scriptsize{$\neg$-elim}}
            \BinaryInfC{$\bot$}
        \end{prooftree}

{\sc Classical rules}
      \begin{prooftree}
            \AxiomC{$[\neg p_ i]$}
            \noLine
            \UnaryInfC{$\Pi$}
            \noLine
            \UnaryInfC{$\bot$} 
            \RightLabel{\scriptsize{$p_c$-int}} 
            \UnaryInfC{$p_c$}
\DisplayProof
\qquad            
            \AxiomC{$p_ c$}
            \AxiomC{$\neg p_i$}
            \RightLabel{\scriptsize{$p_c$-elim}}
            \BinaryInfC{$\bot$}         
        \end{prooftree}
        \begin{prooftree}
            \AxiomC{$[\neg A]$}
            \AxiomC{$[\neg B]$}
            \noLine
            \BinaryInfC{$\Pi$}
            \noLine
            \UnaryInfC{$\bot$} 
            \RightLabel{\scriptsize{$\lor_c$-int}} 
            \UnaryInfC{$A \lor_c B$}
\DisplayProof
\qquad            
\AxiomC{$A \lor_cB$}
            \AxiomC{$\neg A$}
            \AxiomC{$\neg B$}
            \RightLabel{\scriptsize{$\lor_c$-elim}}
            \TrinaryInfC{$\bot$}         
        \end{prooftree}
        \begin{prooftree}
            \AxiomC{$[A]$}
            \AxiomC{$[\neg B]$}
            \noLine
            \BinaryInfC{$\Pi$}
            \noLine
            \UnaryInfC{$\bot$} 
           \RightLabel{\scriptsize{$\to_c$-int}} 
            \UnaryInfC{$A \to_c B$}
\DisplayProof
\qquad            
            \AxiomC{$A \to_cB$}
            \AxiomC{$A$}
            \AxiomC{$\neg B$}
            \RightLabel{\scriptsize{$\to_c$-elim}}
            \TrinaryInfC{$\bot$}
\end{prooftree}
\caption{Ecumenical natural deduction system $\NEp$.}\label{Fig:NEp}
\end{figure}

An ecumenical version of $\Bes$ is given by the definition below. As usual, we start by giving semantic conditions for basic sentences in atomic systems,  expanding them through semantic clauses, having as a starting point Sandqvist's semantics for intuitionistic logic~\cite{Sandqvist2015-SANBSF}.

\begin{definition}[Validity]\label{def:validity} 
{\em Ecumenical $S$-validity ($\Vdash_S$)} and {\em validity ($\Vdash$)} are defined by the following clauses.
\begin{enumerate}\label{def:support}
    \item $\Vdash_S p_i$ iff $\vdash_S p$, for $p \in \At$;\label{Atm}

    \item  $\Vdash_S p_c$ iff $\forall S' (S' \supseteq S) : p \nvdash_{S'} \bot$, for $p \in \At$;\label{Atmc}

    \item  $\Vdash_S A \land B$ iff $\Vdash_S A$ and $\Vdash_S B$;\label{and}

    \item  $\Vdash_{S} A \lor_i B$ iff $\forall S' (S' \supseteq S)$ and all $p \in \At$, $A \Vdash_{S'}p_i$ and $B \Vdash_{S'} p_i$ implies $\Vdash_{S'} p_i$;\label{ori}

    \item  $\Vdash_{S} A \to_i B$ iff $A \Vdash_S B$;\label{toi}
    
    \item $\Vdash_{S} A \lor_cB$ iff $\forall S' (S' \supseteq S)$, $A \Vdash_{S'} \bot$ and $B\Vdash_{S'} \bot$ implies $\Vdash_{S'} \bot$;\label{orc}

    \item$\Vdash_{S} A \to_c B$ iff $\forall S' (S' \supseteq S)$, $\Vdash_{S'} A$ and $B\Vdash_{S'} \bot$ implies $\Vdash_{S'} \bot$;\label{toc}

    \item For non-empty $\Gamma$, $\Gamma \Vdash_S A$ iff for all $S'$ such that $S' \supseteq S$ it holds that, if $\Vdash_{S'}B$ for all $B \in \Gamma$, then $\Vdash_{S'} A$;\label{Inf}

    \item $\Gamma \Vdash A$ iff $\Gamma \Vdash_S A$ for all $S$.\label{Val}
\end{enumerate}
\end{definition}
A distinctive feature of Prawitz's system is that negation is shared by the classical and intuitionistic fragments. This may initially seem surprising, since negation is often regarded, together with implication, as one of the principal points of disagreement between classical and intuitionistic logic. Indeed, if negation is understood as \emph{implication into falsity}, then the presence of two implications might appear to call for two corresponding notions of negation.

Prawitz~\cite{DBLP:journals/Prawitz15} argues, however, that classical and intuitionistic negation do not merely happen to be interderivable~\cite{DBLP:journals/synthese/PimentelPP21}. They have the same grounds for assertion: In either setting, $\neg A$ is established by deriving $\bot$ from the assumption $A$. Since the two negations would therefore have the same introduction rule, they should be understood as one and the same connective. What differs is not the meaning of negation, but the circumstances in which reductio ad absurdum may be applied. In the ecumenical system, classical reductio is justified when its conclusion is classical, but not, in general, when the conclusion contains intuitionistic operators. The distinction between classical and intuitionistic reasoning is therefore carried by the formulas to which negation is applied and by the permitted uses of reductio, rather than by two different negation constants.

Here we provide independent semantic evidence that the two negations coincide.
Indeed, by Clause~\ref{toc}, $\Vdash_S A\to_c\bot$  holds iff, for every $S'\supseteq S$, if $\Vdash_{S'}A$ and $\bot\Vdash_{S'}\bot$ then $\Vdash_{S'}\bot.$
Since $\bot\Vdash_{S'}\bot$ always holds, this is equivalent to requiring that, for every $S'\supseteq S$, if
$\Vdash_{S'}A$ then $\Vdash_{S'}\bot.$
By Clause~\ref{Inf}, this condition is precisely $A\Vdash_S\bot$. But this is the exact definition of $\Vdash_S A\to_i\bot$ given by Clause~\ref{toi}. 
Thus, classical and intuitionistic implication yield exactly the same semantic notion of negation, that is
\begin{enumerate}
\item[{\em 5'.}] $\Vdash_{S} \neg A $ iff $A \Vdash_S \bot$;\label{neg}
\end{enumerate}

We end this section by highlighting the role played by $\bot$ in our semantics. As in~\cite{Pimentel2025-PIMAEV,DBLP:conf/tableaux/BarrosoNascimentoPP25}, $\bot$ belongs to the base language and may therefore occur in atomic rules. At the same time, it plays a central role in the semantic characterisation of the classical fragment. While neutral and intuitionistic formulas are supported directly by their derivability in a base, the support of a classical formula is characterised in terms of consequence to $\bot$.
However, the semantic clauses themselves remain proof-theoretic: support is defined entirely in terms of derivability in atomic bases, their extensions, and the consequence relation induced by them. In particular, no independently given notion of truth or model-theoretic interpretation is invoked in order to distinguish the classical and intuitionistic fragments.

In Section~\ref{sec:Beth}, we explore an alternative formulation in which the meaning of the classical connectives is given without appealing to $\bot$. This alternative essentially depends on use of the consistency constraint, as when combined with the traditional definition of $\Bes$ they yield only trivial connectives.

\subsection{Basic Lemmata}
In this section we prove some properties of our semantics. We omit some proofs for the intuitionistic and neutral connectives when they already appear in \cite{Sandqvist2015-SANBSF}.

We begin by showing that consistency is preserved when adding non-contradictory atomic axioms.
\begin{lemma} \label{pc lemma}
Let $S$ be a base, $p\in\At$ and $S'\supseteq S$ obtained by adding an atomic axiom with conclusion $p$.
Then $p \nvdash_S \bot$ iff $S'$  is consistent.
\end{lemma}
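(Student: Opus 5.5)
We prove both directions by contraposition, relating deductions in $S'$ from no assumptions to deductions in $S$ from the assumption $p$. Write $\rho$ for the new atomic axiom with conclusion $p$, so $S' = S \cup \{\rho\}$.

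For the direction ``$S'$ consistent implies $p \nvdash_S \bot$'', suppose $p \vdash_S \bot$, and let $\Pi$ be a deduction in $S$ of $\bot$ whose undischarged assumptions lie in $\{p\}$. Since $S \subseteq S'$, $\Pi$ is also a deduction in $S'$. We replace every undischarged occurrence of the assumption $p$ in $\Pi$ by an application of $\rho$, which has no premises. The result is a deduction of $\bot$ in $S'$ with no undischarged assumptions, so $\vdash_{S'} \bot$ and $S'$ is inconsistent. Occurrences of $p$ that are discharged by some rule of $S$ are left untouched; only the open ones are replaced. This is the standard fact that proofs of an axiom may be substituted for open assumptions.

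For the converse, ``$p \nvdash_S \bot$ implies $S'$ consistent'', suppose $\vdash_{S'} \bot$ via a deduction $\Pi'$ with no undischarged assumptions. We transform $\Pi'$ by replacing each application of $\rho$ (a leaf concluding $p$) with an assumption occurrence of $p$, left undischarged. Every other rule application in $\Pi'$ uses a rule of $S$, since $\rho$ is the only rule of $S' \setminus S$. The new $p$-leaves are fresh open assumptions and are never discharged by any rule application below them: discharge in $\Pi'$ applied only to assumption occurrences, and we simply do not mark the new leaves as discharged. The result is therefore a deduction in $S$ of $\bot$ whose undischarged assumptions are contained in $\{p\}$, that is, $p \vdash_S \bot$.

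The main point needing care is this bookkeeping of discharge in the second direction. Natural deduction with atomic rules discharging sets of hypotheses allows vacuous or partial discharge, so converting axiom leaves into assumption leaves must not accidentally make them discharged; we stipulate that the converted leaves carry no discharge label. Deducibility only requires the undischarged assumptions to be a \emph{subset} of $\{p\}$, so the case where $\rho$ is never used is also covered: $\Pi'$ is then already a deduction in $S$ of $\bot$ from no assumptions, which gives $p \vdash_S \bot$ trivially. If $p$ is itself $\bot$, the argument goes through unchanged.
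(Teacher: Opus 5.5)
Your proof is correct and takes essentially the same approach as the paper: in one direction you turn applications of the new axiom into open assumptions $p$, and in the other you plug the axiom in for the open $p$-assumptions, which the paper phrases as monotonicity ($p\vdash_{S'}\bot$) combined with $\vdash_{S'}p$. Your extra care about discharge is harmless but not needed: under an automatic-discharge convention a converted leaf might end up discharged, but that only shrinks the set of open assumptions, and a subset of $\{p\}$ is all that $p\vdash_S\bot$ requires.
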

\begin{proof} ($\Rightarrow$) 
Assume $p \nvdash_S \bot$ and suppose, by contradiction, that $\vdash_{S'} \bot$. Replacing every occurrence of a newly added
axiom with conclusion $p$ by an open assumption $p$ yields a
deduction of $p\vdash_S\bot$ using only rules from $S$, which is a contradiction. Thus $\nvdash_{S'} \bot$ and hence $S'$ is consistent. 

\noindent
($\Leftarrow$) Assume $S'$ consistent and $p \vdash_{S} \bot$. Since $S' \supseteq S$ we have $p\vdash_{S'} \bot$ and, together with $\vdash_{S'}p$, this implies that $\vdash_{S'} \bot$, thus violating the consistency requirement. Hence, $p \nvdash_S \bot$.
\end{proof}
We next show that the perfect match between derivability and semantics established by Clause~\ref{Atm} of Definition~\ref{def:validity} extends to contexts as well.
\begin{lemma} \label{Lemma 1}
    Let $S$ be any atomic system. Then $\Gamma_{\At} \Vdash_S p_i$ iff $\Gamma_{\At} \vdash_S p$ or $\Gamma_{\At} \vdash_S \bot$.  
\end{lemma}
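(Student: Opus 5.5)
The plan is to prove the two directions separately. Throughout, we use the standing convention of this section that every base is consistent, so the quantifier over extensions in Clause~\ref{Inf} of Definition~\ref{def:validity} ranges over consistent $S'\supseteq S$. We read $\Vdash_{S'} q$ for $q\in\Gamma_{\At}$ via Clause~\ref{Atm}, so it means $\vdash_{S'} q$. The case $\Gamma_{\At}=\emptyset$ is immediate: the left side is just $\vdash_S p$, and $\vdash_S\bot$ is excluded because $S$ is consistent. So we assume $\Gamma_{\At}=\{q_1,\dots,q_n\}$ is non-empty.

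($\Leftarrow$) Take any (consistent) $S'\supseteq S$ with $\vdash_{S'} q_j$ for every $j$. Composition of deductions gives the result in both cases.
\begin{itemize}
\item If $\Gamma_{\At}\vdash_S p$, we plug the deductions of the $q_j$ in $S'$ into the open assumptions of a deduction of $p$ from $\Gamma_{\At}$ in $S\subseteq S'$. This yields $\vdash_{S'} p$, that is, $\Vdash_{S'}p_i$.
\item If $\Gamma_{\At}\vdash_S\bot$, the same plugging gives $\vdash_{S'}\bot$. This contradicts the consistency of $S'$, so no such $S'$ exists and the condition of Clause~\ref{Inf} holds vacuously.
\end{itemize}
Either way $\Gamma_{\At}\Vdash_S p_i$.

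($\Rightarrow$) Let $S^\ast$ be $S$ extended with one atomic axiom with conclusion $q_j$ for each $j$. We then argue by a case split on whether $S^\ast$ is consistent. This uses excluded middle in the metalanguage, which is acceptable here.
\begin{itemize}
\item If $S^\ast$ is consistent, it is an admissible extension with $\vdash_{S^\ast}q_j$ for all $j$. The hypothesis $\Gamma_{\At}\Vdash_S p_i$ then gives $\vdash_{S^\ast} p$. Replacing every use of a new axiom with conclusion $q_j$ by an open assumption $q_j$ yields a deduction using only rules of $S$, witnessing $\Gamma_{\At}\vdash_S p$.
\item If $S^\ast$ is inconsistent, the same replacement applied to a deduction of $\bot$ in $S^\ast$ yields $\Gamma_{\At}\vdash_S\bot$.
\end{itemize}
This is exactly the argument of Lemma~\ref{pc lemma}, generalised from one added axiom to finitely many.

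The main point needing care is the ($\Rightarrow$) direction. There the restriction to consistent bases means that we cannot simply instantiate the hypothesis at $S^\ast$. This is precisely why the disjunct $\Gamma_{\At}\vdash_S\bot$ appears, and the case split on the consistency of $S^\ast$ handles it. The remaining point is checking that the axiom-to-assumption replacement preserves correctness of discharges. It does, because the new rules are axioms with no premises and discharge nothing, so turning their occurrences into open assumptions $q_j$ leaves every other rule application, including its discharges, unchanged.
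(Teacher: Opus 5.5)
Your proposal is correct and follows the paper's proof essentially step for step. It uses the same extension of $S$ by axioms for the members of $\Gamma_{\At}$, the same case split on its consistency with axiom-to-assumption replacement for ($\Rightarrow$), and the same substitution and vacuous-consistency arguments for ($\Leftarrow$). Your separate treatment of $\Gamma_{\At}=\emptyset$ is a small improvement, since Clause~\ref{Inf} is only stated for non-empty contexts and the paper's proof passes over that case silently.
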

\begin{proof} ($\Rightarrow$) Assume that $\Gamma_{\At} \Vdash_S p_i$, and let $S'$ be obtained from $S$ by adding, for each $q\in\Gamma_{\At}$, an atomic axiom
with conclusion $q$. 

Suppose first that $S'$ is consistent. By construction,
$\vdash_{S'}q$, and hence
$\Vdash_{S'}q^i$ for every $q\in\Gamma_{\At}$ by Clause~\ref{Atm}. Since
$\Gamma_{\At}\Vdash_S p_i$, Clause~\ref{Inf} gives
$\Vdash_{S'}p_i$. By Clause~\ref{Atm}, it follows that
$\vdash_{S'}p$. 

Now replace every occurrence in this deduction of a newly added
axiom with conclusion $q$ by an open assumption $q$. The resulting
deduction uses only rules from $S$ and derives $p$ from assumptions
belonging to $\Gamma_{\At}$. Therefore, $\Gamma_{\At}\vdash_S p$.

Suppose instead that $S'$ is inconsistent. Then
$\vdash_{S'}\bot$. Replacing every occurrence of a newly added
axiom with conclusion $q$ by an open assumption $q$ yields a
deduction using only rules from $S$, deriving $\bot$ from assumptions
belonging to $\Gamma_{\At}$. Hence $\Gamma_{\At}\vdash_S\bot$.

\noindent ($\Leftarrow$) 
Suppose first that $\Gamma_{\At}\vdash_S p$. Let $S'\supseteq S$
be any atomic system such that $\Vdash_{S'}q_i$ for every
$q\in\Gamma_{\At}$. By Clause~\ref{Atm},
$\vdash_{S'}q$ for every $q\in\Gamma_{\At}$. Moreover, since
derivability is preserved under extensions of the atomic system, $\Gamma_{\At}\vdash_{S'}p$. \\
Substituting the closed deductions of the members of
$\Gamma_{\At}$ for the corresponding open assumptions yields
$\vdash_{S'}p$. Thus, by Clause~\ref{Atm}, $\Vdash_{S'}p_i$.
Since $S'$ was arbitrary, Clause~\ref{Inf} gives
$\Gamma_{\At}\Vdash_S p_i$. \\
Finally, suppose that $\Gamma_{\At}\vdash_S\bot$. If there were an
atomic system $S'\supseteq S$ such that $\Vdash_{S'}q_i$ for every
$q\in\Gamma_{\At}$, then Clause~\ref{Atm} would give
$\vdash_{S'}q$ for every $q\in\Gamma_{\At}$. Since
$\Gamma_{\At}\vdash_S\bot$ and $S'\supseteq S$, substituting these
closed deductions for the open assumptions would yield
$\vdash_{S'}\bot$, contradicting the consistency of $S'$.
Consequently, there is no such extension $S'$, and the condition in
Clause~\ref{Inf} holds vacuously. Therefore,
$\Gamma_{\At}\Vdash_S p_i$.     
 \end{proof}

We now prove the standard result of monotonicity over bases.
\begin{lemma}[Monotonicity] \label{monotonicity}
If $\Gamma \Vdash_S D$ and $S' \supseteq S$ then $\Gamma \Vdash_{S'} D$.
\end{lemma}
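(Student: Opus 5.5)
We first treat the case of empty $\Gamma$, i.e. $\Vdash_S D$ implies $\Vdash_{S'} D$ for every $S'\supseteq S$, by induction on the structure of $D$. The non-empty case then follows directly from Clause~\ref{Inf}. If $\Gamma\Vdash_S D$ and $S'\supseteq S$, then any $S''\supseteq S'$ also satisfies $S''\supseteq S$ by transitivity of $\supseteq$. Hence the universally quantified condition defining $\Gamma\Vdash_{S'}D$ is an instance of the one defining $\Gamma\Vdash_S D$.

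For the empty context we proceed clause by clause.
\begin{itemize}
\item Intuitionistic atoms (Clause~\ref{Atm}): $\vdash_S p$ gives $\vdash_{S'}p$, since any deduction using only rules of $S$ is also a deduction in $S'$.
\item Conjunction (Clause~\ref{and}): immediate from the induction hypothesis.
\item Intuitionistic implication (Clause~\ref{toi}): $\Vdash_S A\to_i B$ means $A\Vdash_S B$, which is an instance of the non-empty case. The transitivity argument above applies to it directly and needs no induction hypothesis.
\item Classical atoms (Clause~\ref{Atmc}), intuitionistic disjunction (Clause~\ref{ori}), classical disjunction (Clause~\ref{orc}) and classical implication (Clause~\ref{toc}): each has the form ``for all $S''\supseteq S$, $\Phi(S'')$''. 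Every $S''\supseteq S'$ is an extension of $S$, so the condition restricted to extensions of $S'$ is inherited. No induction hypothesis is needed here either.
\item $\bot$: it counts as a basic sentence, and consistent bases never support it (all bases here are consistent). Reading $\Vdash_S\bot$ as $\vdash_S\bot$, the atomic argument applies; on consistent bases the claim is in any case vacuously true.
\end{itemize}

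This suggests structuring the proof more simply than a full induction: transitivity of extension handles every quantified clause uniformly, and only the atomic and conjunction cases need separate treatment. The one point to check is that all bases in the quantifiers range over consistent bases. If $S'$ is consistent, its extensions are the same consistent bases that appear as extensions of $S$, so restricting the quantifier to consistent bases causes no mismatch. I expect no real obstacle; the only care needed is making the treatment of $\bot$ and of the context-free implication clause consistent with Clause~\ref{Inf}.
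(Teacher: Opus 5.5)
Your proposal is correct and matches the paper's proof: the quantified clauses ($p_c$, $\lor_c$, $\to_c$) and the non-empty case of Clause~\ref{Inf} are each handled by transitivity of $\supseteq$, exactly as there. You also spell out the atomic, $\land$, $\lor_i$ and $\to_i$ cases (and thereby $\neg$, whose clause likewise has the form $A \Vdash_S \bot$), which the paper omits by citing Sandqvist, and your treatment of these cases is sound.
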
 

\begin{proof}
Suppose first that $\Gamma = \emptyset$. The proof is by case analysis on the structure of $D$.
\begin{enumerate}
    \item
     Assume that $\Vdash_S p_c$ and take any $S'' \supseteq S' \supseteq S$. Since $S''$ is, itself, an extension of $S$, by Clause~\ref{Atmc}, $p\nvdash_{S''} \bot$. Since $S''$ is an arbitrary extension of $S'$, Clause~\ref{Atmc} implies $\Vdash_{S'} p_c$.

    \item Assume that $\Vdash_S A \lor_c B$, and let $S'' \supseteq S' \supseteq S$ be such that $A \Vdash_{S''} \bot$ and $B \Vdash_{S''} \bot$. Since $S''$ is an extension of $S$, Clause~\ref{orc} yields $\Vdash_{S''}\bot$. As $S''$ was an arbitrary extension of $S'$ satisfying these conditions, another application of Clause~\ref{orc} yields $\Vdash_{S'} A \lor_c B$.

    \item Assume that $\Vdash_SA \to_c B$, and let $S'' \supseteq S' \supseteq S$ be such that $\Vdash_{S''} A$ and $B\Vdash_{S''} \bot$. Since $S''$ is an extension of $S$, Clause~\ref{toc} yields $\Vdash_{S''}\bot$. As $S''$ was an arbitrary extension of $S'$ satisfying these conditions, another application of Clause~\ref{toc} yields $\Vdash_{S'} A \to_c B$. 
\end{enumerate}
Now suppose that $\Gamma\neq\emptyset$, and let $S''\supseteq S'\supseteq S$ be any extension such that $\Vdash_{S''}C$ for every $C\in\Gamma$. Since $\Gamma\Vdash_S D$, Clause~\ref{Inf} yields $\Vdash_{S''} D$. As $S''$ was an arbitrary extension of $S'$ satisfying $\Vdash_{S''}C$ for every $C\in\Gamma$, another application of Clause~\ref{Inf} gives $\Gamma\Vdash_{S'} D$.

\end{proof}

The following result generalises the disjunction clause to arbitrary formulas.
\begin{lemma}[Disjunction Lemma] \label{disjunction lemma}
If $\Vdash_S A \lor_i B$, $A \Vdash_S E$ and $B \Vdash_S E$, then $\Vdash_S E$
\end{lemma}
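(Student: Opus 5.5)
Our plan is to argue by induction on the structure of $E$, following Sandqvist's argument and adding cases for the classical formulas. Throughout we use Monotonicity (Lemma~\ref{monotonicity}): the hypotheses $\Vdash_S A\lor_i B$, $A\Vdash_S E$ and $B\Vdash_S E$ all persist to any $S'\supseteq S$. This lets us apply the induction hypothesis at extensions, so we prove the statement for every base simultaneously.

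\emph{Base cases.} If $E=p_i$, the claim is immediate from Clause~\ref{ori} at $S'=S$. If $E=\bot$, we treat $\bot$ as the basic sentence it is. Clause~\ref{ori} quantifies over all $p\in\At$, and $\bot\in\At$. Moreover, $\Vdash_S\bot$ means $\vdash_S\bot$, under the reading of Clause~\ref{Atm} for basic sentences. So Clause~\ref{ori} with $p=\bot$ again gives the result. If needed, we first note that $\bot$-support behaves like atomic support under Clause~\ref{Atm}.

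\emph{Case $E=p_c$.} Take $S'\supseteq S$ and suppose $p\vdash_{S'}\bot$; we must derive a contradiction. By Lemma~\ref{Lemma 1}, $p\vdash_{S'}\bot$ yields $p_i\Vdash_{S'}\bot$, i.e.\ $p_i\Vdash_{S'}q_i$ for $q=\bot$. Next we show $A\Vdash_{S'}\bot$. Given $S''\supseteq S'$ with $\Vdash_{S''}A$, we get $\Vdash_{S''}p_c$. Then Clause~\ref{Atmc} gives $p\nvdash_{S''}\bot$, contradicting $p\vdash_{S''}\bot$, which is inherited from $S'$. Hence the implication holds, classically by vacuity. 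Symmetrically $B\Vdash_{S'}\bot$, so Clause~\ref{ori} with atom $\bot$ yields $\vdash_{S'}\bot$. This contradicts the consistency of $S'$, so $p\nvdash_{S'}\bot$ and therefore $\Vdash_S p_c$.

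\emph{Case $E=C\land D$.} We get $A\Vdash_S C$ and $A\Vdash_S D$ directly from the conjunction clause, similarly for $B$, and apply the induction hypothesis twice.

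\emph{Case $E=C\to_i D$.} Take $S'\supseteq S$ with $\Vdash_{S'}C$. From $A\Vdash_S C\to_i D$ we get $A\Vdash_{S'}D$: any $S''\supseteq S'$ with $\Vdash_{S''}A$ gives $C\Vdash_{S''}D$, and $\Vdash_{S''}C$ holds by monotonicity. Likewise $B\Vdash_{S'}D$. The induction hypothesis at $S'$ gives $\Vdash_{S'}D$.

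\emph{Case $E=C\lor_i D$.} Fix $S'\supseteq S$ and an atom $p$ with $C\Vdash_{S'}p_i$ and $D\Vdash_{S'}p_i$. Then $A\Vdash_{S'}p_i$, since any extension supporting $A$ supports $C\lor_i D$, and Clause~\ref{ori} applies there. Similarly $B\Vdash_{S'}p_i$, and the base case gives $\Vdash_{S'}p_i$.

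\emph{Classical cases $E=C\lor_c D$ and $E=C\to_c D$.} These mirror the intuitionistic ones but use the $\bot$-case of the induction hypothesis. For $\lor_c$, fix $S'$ with $C\Vdash_{S'}\bot$ and $D\Vdash_{S'}\bot$. Then $A\Vdash_{S'}\bot$ by Clause~\ref{orc} at extensions, likewise $B\Vdash_{S'}\bot$, and the $\bot$ base case finishes. For $\to_c$, fix $S'$ with $\Vdash_{S'}C$ and $D\Vdash_{S'}\bot$. Then $A\Vdash_{S'}\bot$ by Clause~\ref{toc} together with monotonicity of the two hypotheses, likewise for $B$, and we conclude as before.

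The main obstacle is the $p_c$ case. It is the only one that must turn a failure of derivability into a semantic statement, and it relies on Lemma~\ref{Lemma 1} together with the consistency restriction, reasoning by contradiction. The other delicate point is making sure $\bot$ is correctly covered as an instance of Clause~\ref{ori}.
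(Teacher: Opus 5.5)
Your proposal is correct and follows essentially the same route as the paper. Both argue by case analysis on $E$, and in each classical case ($p_c$, $C\lor_c D$, $C\to_c D$) both establish $A\Vdash_{S'}\bot$ and $B\Vdash_{S'}\bot$ (vacuously, in the $p_c$ case) and then instantiate the $\lor_i$ clause at the basic sentence $\bot$, exactly as the paper does; the only differences are that you spell out the neutral and intuitionistic cases the paper delegates to Sandqvist, and that your appeal to Lemma 1 in the $p_c$ case (to obtain $p_i\Vdash_{S'}\bot$) is never used afterwards and can simply be dropped.
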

\begin{proof} The proof is by case analysis on the structure of $E$. 

\begin{enumerate}
    \item If $E = p_i$ for some $p \in \At$, then the result follows immediately from Clause~\ref{ori}.
    
    \item  Assume that $\Vdash_S A\lor_i B$, $A\Vdash_S p_c$, and $B\Vdash_S p_c$. Suppose, by contradiction, that there exists  $S'\supseteq S$ such that $p\vdash_{S'}\bot$.

We first show that $\nVdash_{S''}A$ for every $S''\supseteq S'$. Indeed, if $\Vdash_{S''}A$ for some such $S''$, then $A\Vdash_S p_c$ yields $\Vdash_{S''}p_c$, and hence $p\nvdash_{S''}\bot$. On the other hand, since $p\vdash_{S'}\bot$ and $S''\supseteq S'$, monotonicity of derivability gives $p\vdash_{S''}\bot$, a contradiction. An analogous argument shows that $\nVdash_{S''}B$ for every $S''\supseteq S'$.

It follows vacuously that $A\Vdash_{S'}\bot$ and $B\Vdash_{S'}\bot$. Since $\Vdash_S A\lor_i B$, the clause for intuitionistic disjunction yields $\Vdash_{S'}\bot$,  contradicting the consistency of $S'$.

Thus, $p\nvdash_{S'}\bot$ for every extension $S'\supseteq S$, and hence $\Vdash_S p_c$.

    \item Assume first that $A \Vdash_S C \lor_c D$. Take $S' \supseteq S$ with $C \Vdash_{S'} \bot$ and $D \Vdash_{S'} \bot$, and take $S'' \supseteq S'$ with $\Vdash_{S''} A$. Thus, by monotonicity, $C \Vdash_{S''} \bot$ and $D \Vdash_{S''} \bot$ and hence Clause~\ref{Inf} yields $\Vdash_{S''} C \lor_c D$. Now, using Clause~\ref{orc}, we conclude that $\Vdash_{S''} \bot$. Since $S''$ is an arbitrary extension of $S'$ with $\Vdash_{S''} A$,  Clause~\ref{Inf} implies that $A \Vdash_{S'} \bot$. A similar argument shows that $B \Vdash_{S'} \bot$. 
    
Since $\Vdash_{S'}A\lor_i B$, $A\Vdash_{S'}\bot$, and $B\Vdash_{S'}\bot$, 
the clause for intuitionistic disjunction yields $\Vdash_{S'}\bot$. As $S'$ is an arbitrary extension of $S$, it follows that, for every $S'\supseteq S$, if $C\Vdash_{S'}\bot$ and $D\Vdash_{S'}\bot$, then $\Vdash_{S'}\bot$. Therefore, by Clause~\ref{orc}, $\Vdash_S C\lor_c D$.

    \item Assume that $A \Vdash_S C \to_c D$ and take any $S' \supseteq S$ with $\Vdash_{S'} C$ and $D \Vdash_{S'} \bot$. Now take any $S'' \supseteq S'$ with $\Vdash_{S''} A$. Thus, by monotonicity, $\Vdash_{S''} C$ and $D \Vdash_{S''} \bot$, and by Clause~\ref{Inf}, $\Vdash_{S''} C \to_c D$. Now we conclude that $\Vdash_{S''} \bot$ by Clause~\ref{toc}. Since $S''$ is an arbitrary extension of $S'$ with $\Vdash_{S'} A$, Clause~\ref{Inf} yields $A \Vdash_{S'} \bot$. A similar argument shows that $B \Vdash_{S'} \bot$. 
    
    Since $\Vdash_{S'} A \lor_i B$, $A \Vdash_{S'} \bot$ and $B \Vdash_{S'} \bot$, $\Vdash_{S'} \bot$ by definition. As $S'$ is an arbitrary extension of $S$, it follows that, for every $S'\supseteq S$, $\Vdash_{S'} C$ and $D \Vdash_{S'} \bot$ implies $\Vdash_{S'} \bot$. Hence, by Clause~\ref{toc},  $\Vdash_S C \to_c D$.    
\end{enumerate}
\end{proof}

The rest of this section is devoted to proving that $\NEp$ is sound and complete w.r.t. the proposed $\Bes$. We start with the soundness result.
\subsection {Soundness}
\begin{theorem}[Soundness]
    If $\Gamma \vdash_{\NEp} A$ then $\Gamma \Vdash A$.
\end{theorem}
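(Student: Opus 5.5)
We argue by induction on the structure of $\NEp$-deductions, establishing the stronger statement that for every deduction of $A$ from open assumptions $\Gamma$ and every consistent base $S$, we have $\Gamma \Vdash_S A$. Since Clause~\ref{Inf} quantifies over extensions and Lemma~\ref{monotonicity} gives monotonicity, each rule reduces to a \emph{local soundness} claim. If the premises are supported in some $S'\supseteq S$ (relative to the relevant discharged hypotheses), then the conclusion is supported in $S'$. The base case, an assumption $A \vdash A$, is immediate from Clause~\ref{Inf}. When $\Gamma$ is empty we use the corresponding $\Gamma=\emptyset$ reading directly.

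The neutral and intuitionistic rules are handled as in Sandqvist.
\begin{itemize}
\item Conjunction follows from Clause~\ref{and}.
\item $\to_i$-introduction and elimination follow from Clauses~\ref{toi} and~\ref{Inf}, together with monotonicity.
\item For $\lor_i$-introduction, take $S'\supseteq S$ with $A\Vdash_{S'}p_i$ and $B\Vdash_{S'}p_i$; since $\Vdash_{S'}A$ by monotonicity, we get $\Vdash_{S'}p_i$.
\item $\lor_i$-elimination is exactly Lemma~\ref{disjunction lemma}, applied at each extension satisfying $\Gamma$.
\item $\bot$-elimination holds vacuously: if $\Vdash_{S'}\bot$, i.e.\ $\vdash_{S'}\bot$, then $S'$ is inconsistent, so no consistent base supports $\bot$ and the required implication holds trivially.
\item $\neg$-introduction and elimination use clause 5$'$.
\end{itemize}

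The classical rules all have the pattern ``from $\Gamma$ plus negated hypotheses derive $\bot$'' (introduction), and ``from the classical formula plus negations derive $\bot$'' (elimination).

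\emph{$\lor_c$-introduction.} The induction hypothesis gives $\Gamma,\neg A,\neg B\Vdash_S\bot$. Take $S'$ supporting $\Gamma$, and $S''\supseteq S'$ with $A\Vdash_{S''}\bot$ and $B\Vdash_{S''}\bot$. By 5$'$ these say $\Vdash_{S''}\neg A,\neg B$, so $\Vdash_{S''}\bot$. Hence Clause~\ref{orc} holds at $S'$.

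\emph{$\lor_c$-elimination.} Clause~\ref{orc}, instantiated at $S'$ itself, together with 5$'$, gives $\Vdash_{S'}\bot$.

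\emph{$\to_c$.} Both rules are analogous, using Clause~\ref{toc}.

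\emph{$p_c$-introduction.} Assume $\Gamma,\neg p_i\Vdash_S\bot$ and $S'$ supports $\Gamma$. Suppose some $S''\supseteq S'$ had $p\vdash_{S''}\bot$. By Lemma~\ref{Lemma 1}, this gives $p_i\Vdash_{S''}\bot$ (because $p\vdash\bot$ yields the disjunct $\Gamma_{\At}\vdash\bot$). Hence $\Vdash_{S''}\neg p_i$, and so $\Vdash_{S''}\bot$, contradicting consistency. Thus $\Vdash_{S'}p_c$.

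\emph{$p_c$-elimination.} Given $\Vdash_{S'}p_c$ and $\Vdash_{S'}\neg p_i$, i.e.\ $p_i\Vdash_{S'}\bot$, add the axiom $p$ to $S'$. By Clause~\ref{Atmc} (applied at $S'$ itself), $p\nvdash_{S'}\bot$. So by Lemma~\ref{pc lemma} the extended base $S''$ is consistent, and $\Vdash_{S''}p_i$. Then $p_i\Vdash_{S'}\bot$ gives $\Vdash_{S''}\bot$, contradicting consistency. Hence no consistent $S'$ supports both premises, and the conclusion holds vacuously.

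The main obstacle is $\lor_i$-elimination with an arbitrary conclusion $E$. This rests on Lemma~\ref{disjunction lemma}, whose statement only lists $p_i$, $p_c$, $\lor_c$ and $\to_c$ among the cases for $E$. A complete treatment must also handle $E=\bot$ (immediate from Clause~\ref{ori} with $p=\bot$), $\wedge$, $\to_i$ and $\lor_i$ in Sandqvist's style. The $\to_i$ case uses monotonicity to push the disjunction to extensions, and the $\lor_i$ case unfolds Clause~\ref{ori}. Where that proof is missing, I would fill it in by induction on $E$ before completing the soundness argument.
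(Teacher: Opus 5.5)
Your proposal is correct and follows essentially the same route as the paper. Both argue by induction on $\NEp$-deductions with a rule-by-rule local soundness check, using monotonicity, clause 5$'$, Lemma~\ref{Lemma 1} for $p_c$-introduction and Lemma~\ref{disjunction lemma} for $\lor_i$-elimination, with the classical cases argued as in the paper. The differences are cosmetic. For $p_c$-elimination you add the axiom $p$ and invoke Lemma~\ref{pc lemma} directly, where the paper goes through Lemma~\ref{Lemma 1} (which inlines the same argument). Your extra case $E=\bot$ for the Disjunction Lemma is already covered by its case $E=p_i$, since $\bot\in\At$.
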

\begin{proof}
The proof is by induction on the length $m$ of the derivation $\Pi$ of $\Gamma \vdash_{\NEp} A$.

\textbf{Base case ($m=0$).} $\Pi$ consists in a single occurrence of a proof assumption $A$, hence being a deduction showing $\Gamma \vdash_{\NEp} A$ where $A\in\Gamma$. Since $A \Vdash_S A$ holds for every $S$, we  conclude that $\Gamma \Vdash A$ by Clause~\ref{Inf}.


\textbf{Inductive case ($m>0$)}. We will omit the proof of the intuitionistic and neutral connectives as they are exactly the same as in \cite{Sandqvist2015-SANBSF}. The result is proved by considering the last rule applied in $\Pi$, showing that each rule is semantically valid.
   
\begin{enumerate}
    \item[$p_c$-int]
        Assume, by inductive hypothesis, that $\Gamma, \neg p_i \Vdash \bot$. Let $S$ be an arbitrary atomic system. By Clause~\ref{Inf}, we have $\Gamma, \neg p_i\Vdash_S \bot$. Now let $S'\supseteq S$ be such that $\Vdash_{S'}C$ for every $C\in\Gamma$. Another application of Clause~\ref{Inf} yields $\neg p_i\Vdash_{S'} \bot$. We claim that $p\nvdash_{S''}\bot$ for every $S''\supseteq S'$. Suppose, by contradiction, that $p\vdash_{S''}\bot$ for some $S''\supseteq S'$. By Lemma~\ref{Lemma 1}, it follows that
        $p_i \Vdash_{S''} \bot$. Hence, by Clause~\ref{toi},  $\Vdash_{S''} \neg p_i$. Moreover, since $\neg p_i\Vdash_{S'}\bot$ and $S''\supseteq S'$, monotonicity gives $\neg p_i\Vdash_{S''} \bot$. Therefore, $\Vdash_{S''}\bot$, contradicting the consistency of $S''$. Thus, $p_i\nvdash_{S''}\bot$ for every $S''\supseteq S'$. By Clause~\ref{Atmc}, we conclude that $\Vdash_{S'}p_c$.
        Since $S'$ is an arbitrary extension of $S$ validating every formula in $\Gamma$, Clause~\ref{Inf} yields $\Gamma\Vdash_Sp_c$. Finally, since $S$ is arbitrary, we conclude that $\Gamma \Vdash p_c$ by Clause~\ref{Val}.
    \item[$p_c$-elim]
        Assume, by inductive hypothesis, that $\Gamma_1 \Vdash \neg p_i$ and  $\Gamma_2 \Vdash p_c$. Then, by Clause~\ref{Inf}, $\Gamma_1 \Vdash_S \neg p_i$ and $\Gamma_2 \Vdash_S p_c$ hold for any arbitrary base $S$. Now, take a base $S' \supseteq S$ such that $\Vdash_{S'} C$ for every $C \in \Gamma_1 \cup \Gamma_2$. Then, by the fact that $\Gamma_1 \Vdash_S \neg p_i$ and Clause~\ref{Inf} we conclude that $\Vdash_{S'} \neg p_i$. Similarly, by the the fact that $\Gamma_2 \Vdash_S p_c$ and Clause~\ref{Inf}, we conclude that $\Vdash_{S'} p_c$. By Clause~\ref{toi}, we have that $p_i \Vdash_{S'} \bot$ and by Lemma~\ref{Lemma 1} we conclude that $p \vdash_{S'} \bot$. Also, by the classical atom clause, since  $\Vdash_{S'} p_c$, we conclude that $p \nvdash_{S'} \bot$, which leads us to a contradiction. Therefore no extension $S'$ of $S$ can satisfy  every $C \in \Gamma_1 \cup \Gamma_2$. Therefore, we conclude vacuously that $\Gamma_1, \Gamma_2 \Vdash_{S} \bot$. Since $S$ is arbitrary, we conclude that $\Gamma_1, \Gamma_2 \Vdash\bot$ by Clause~\ref{Val}.
    \item[$\vee_c$-int]
        Assume, by  inductive hypothesis,  that $\Gamma, \neg A, \neg B \Vdash \bot$. Then by Clause~\ref{Inf} $\Gamma, \neg A, \neg B \Vdash_S \bot$ holds for any arbitrary $S$. Now, take any $S'\supseteq S$ such that  $\Vdash_{S'} C$ for every $C \in \Gamma$. Let $S''\supseteq S'$ arbitrary such that $ A \Vdash_{S''} \bot$ and $B \Vdash_{S''} \bot$. Clause~\ref{toi} yields $\Vdash_{S''} \neg A$ and $\Vdash_{S''} \neg B$. Moreover, monotonicity implies that $\Vdash_{S''} C$ for every $C \in \Gamma$. Hence $\Vdash_{S''} \bot$ by Clause~\ref{Inf}. Since $S''$ is an arbitrary extension where $A \Vdash_{S''} \bot$ and $B \Vdash_{S''} \bot$ implies $\Vdash_{S''} \bot$, we  conclude, using Clause~\ref{orc}, that $\Vdash_{S'} A \lor_c B$. Since $S',S$ are arbitrary, Clause~\ref{Inf} implies that $\Gamma \Vdash_S A \lor_c B$ and hence  $\Gamma \Vdash A \lor_c B$ by Clause~\ref{Val}. 
    \item[$\lor_c$-elim]
        Assume, by  inductive hypothesis, that $\Gamma_1 \Vdash A \lor_c B$, $\Gamma_2, A \Vdash \bot$ and $\Gamma_3, B \Vdash \bot$. Let $S' \supseteq S$ be arbitrary such that $\Vdash_{S'} C$ for all $C \in \Gamma_1 \cup \Gamma_2 \cup \Gamma_3$. By Clause~\ref{Inf},  $\Gamma_1 \Vdash_S A \lor_c B$, $\Gamma_2, A \Vdash_S \bot$ and $\Gamma_3, B \Vdash_S \bot$, and also $\Vdash_{S'} A\lor_c B$, $A \Vdash_{S'} \bot$ and $B \Vdash_{S'} \bot$. Thus $\Vdash_{S'} \bot$ by Clause~\ref{orc}. Since $S',S$ are arbitrary, Clause~\ref{Inf} implies that $\Gamma_1, \Gamma_2, \Gamma_3 \Vdash_S \bot$ and hence $\Gamma_1, \Gamma_2, \Gamma_3 \Vdash \bot$ by Clause~\ref{Val}.
    \item[$\to_c$-int]
        Assume, by  inductive hypothesis, that $\Gamma, A, \neg B \Vdash \bot$. By Clause~\ref{Inf}, $\Gamma,A, \neg B \Vdash_S \bot$ for any arbitrary $S$. Take any arbitrary $S' \supseteq S$ such that $\Vdash_{S'} C$ for all $C \in \Gamma$, and any arbitrary $S''\supseteq S'$ such that $\Vdash_{S''} A$ and $B \Vdash_{S''} \bot$. By Clause~\ref{toi}, we conclude  that $\Vdash_{S''} \neg B$. By monotonicity, we have that $\Vdash_{S''} C$. From $\Vdash_{S''} A$, $\Vdash_{S''} \neg B$, $\Vdash_{S''} C$ for every $C \in \Gamma$ and $\Gamma, A, \neg B \Vdash_{S''} \bot$, we conclude that $\Vdash_{S''} \bot$ by Clause~\ref{Inf}. Now, Clause~\ref{toc} yields $\Vdash_{S'} A \to_c B$ and, since $S'$ is an arbitrary extension, Clause~\ref{Inf} implies that $\Gamma \Vdash_{s} A \to_c B$. Finally, since $S$ is an arbitrary extension, we conclude, using Clause~\ref{Val}, that $\Gamma \Vdash A \to_c B$.  
    \item[$\to_c$-elim]
        Assume, by  inductive hypothesis, that $\Gamma_1 \Vdash A \to_c B$, $\Gamma_2 \Vdash A$ and $\Gamma_3, B \Vdash \bot$. Take an arbitrary $S' \supseteq S$ such that $\Vdash_{S'} C$ for all $C \in \Gamma_1 \cup \Gamma_2 \cup \Gamma_3$.
        Clause~\ref{Inf} then infers that $\Gamma_1 \Vdash_S A \to_c B$, $\Gamma_2 \Vdash_S A$ and $\Gamma_3, B \Vdash_S \bot$, as well as $\Vdash_{S'} A\to_c B$, $ \Vdash_{S'} A$ and $B \Vdash_{S'} \bot$. By Clause~\ref{toc},  $\Vdash_{S'} \bot$. Since $S'$ is an arbitrary extension of $S$, by Clause~\ref{Inf}, it follows that $\Gamma_1, \Gamma_2, \Gamma_3 \Vdash_S \bot$. Finally, Since $S$ is an arbitrary extension, we conclude, using Clause~\ref{Val}, that $\Gamma_1, \Gamma_2, \Gamma_3 \Vdash \bot$
    \item[$\bot$] 
        Assume, by  inductive hypothesis, $\Gamma \Vdash \bot$. Take any $S'\supseteq S$ such that $\Vdash_{S'} C$ for all $C \in \Gamma$. Now, $\Vdash_{S'} \bot$ by Clause~\ref{Inf} and $\vdash_{S'} \bot$ by Clause~\ref{Atm}. But, by the consistency requirement, bases cannot derive $\bot$. Therefore, if $\Gamma \Vdash_S \bot$, no valid extension validates $\Gamma$. Hence we vacuously conclude that every extension of $S$ that validates $\Gamma$ also validates $A$ for any arbitrary $A$. Thus, by Clause~\ref{Inf}, $\Gamma \Vdash_S A$. Since $S$ is arbitrary,  $\Gamma \Vdash A$ by Clause~\ref{Val}.         
\end{enumerate}
\end{proof}

\subsection{Completeness}
We will now prove completeness. The idea is to use an adaptation of Sandqvist's completeness proof in~\cite{Sandqvist2015-SANBSF} with some changes in relation to the consistency constraint, presented in \cite{Pimentel2025-PIMAEV}.

\subsubsection{Preliminary definitions}
The first step is to set an extended notion of subformulas in order to consider the negated discharged formulas in the rules for the classical connectives.
\begin{definition}\label{def:subformulaprop}
{\em Subformulas} are defined as follows.
    \begin{enumerate}
        \item $A$ is a subformula of $A$;
        \item $p_i$ and $\neg p_i$ are subformulas of $p_c$;
        \item $A$ is a subformula of $\neg A$;
        \item All subformulas of $A$ and $B$ are subformulas of $A \land B$, $A \lor_i B$, $A  \to_i B$;
        \item All subformulas of $A$, $B$, $\neg A$ and $\neg B$ are subformulas of $A \lor_{c} B$;
         \item All subformulas of  $A$, $B$ and $\neg B$ are subformulas of $A \to_{c} B$.
    \end{enumerate}
\end{definition}

We then proceed by producing a mapping $\alpha$ that assigns to each formula $A$ a unique $p^A\in\At$.

\begin{definition}
   Let $\Gamma^{*}$ be the set of all subformulas of formulas in $\Gamma$. An {\em atomic mapping} is an injective function $\alpha : \Gamma^{*} \mapsto \At$ such that $\alpha (\bot) = \bot$ and $\alpha (p_i) = p$ for all $p_i \in \Gamma^{*}$. For any other formula $A \in \Gamma^*$, we denote by $p^{A}$ the unique atom to which  $A$ is assigned.
\end{definition}

Notice that, since $\Gamma$ is finite by definition and there are infinitely many atoms in the language, so there are always enough atoms to supply such a mapping. Following Sandqvist's strategy, we start by using the mapping $\alpha$ to construct an atomic system $\mathcal{N}$.

\begin{definition}[Simulation base] 
The rules of the {\em simulation base} $\mathcal{N}$ are depicted in Figure~\ref{Fig:NEp-sim}.
\end{definition}
\begin{figure}[hp] 
{\sc Intuitionistic and neutral rules}
\begin{prooftree}
\AxiomC{$[p^A]$}
\noLine
\UnaryInfC{$\Pi$}
\noLine
\UnaryInfC{$p^B$}
\RightLabel{\scriptsize{$\to_i$-int}}
\UnaryInfC{$p^{A \to_i B}$}
\DisplayProof
\qquad
\AxiomC{$p^{A \to_i B}$}
\AxiomC{$p^A$}
\RightLabel{\scriptsize{$\to_i$-elim}}
\BinaryInfC{$p^B$}
\end{prooftree}

\begin{prooftree}
\AxiomC{$p^{A_j}$}
\RightLabel{\scriptsize{$\lor_i$-int}}
\UnaryInfC{$p^{A_1\lor_i A_2}$}
\DisplayProof
\qquad
\AxiomC{$p^{A \lor_i B}$}
\AxiomC{$[p^A]$}
\noLine
\UnaryInfC{$\Pi_1$}
\noLine
\UnaryInfC{$q$}
\AxiomC{$[p^B]$}
\noLine
\UnaryInfC{$\Pi_2$}
\noLine
\UnaryInfC{$q$}
\RightLabel{\scriptsize{$\lor_i$-elim}}
\TrinaryInfC{$q$}
\end{prooftree}

\begin{prooftree}
\AxiomC{$p^A$}
\AxiomC{$p^B$}
\RightLabel{\scriptsize{$\wedge$-int}}
\BinaryInfC{$p^{A \wedge B}$}
\DisplayProof
\qquad
\AxiomC{$p^{A_1\wedge A_2}$}
\RightLabel{\scriptsize{$\wedge$$-elim_j$}}
\UnaryInfC{$p^{A_j}$}
\end{prooftree}
\begin{prooftree}
\AxiomC{$\bot$}
\RightLabel{\scriptsize{$\bot$-elim}}
\UnaryInfC{$q$}
\DisplayProof
\qquad
\AxiomC{$[p^A]$}
\noLine
\UnaryInfC{$\Pi$}
\noLine
\UnaryInfC{$\bot$}  \RightLabel{\scriptsize{$\neg$-int}}
\UnaryInfC{$p^{\neg A}$}
\DisplayProof
\qquad            
\AxiomC{$p^A$}
\AxiomC{$p^{\neg A}$}   \RightLabel{\scriptsize{$\neg$-elim}}
\BinaryInfC{$\bot$}
\end{prooftree}
{\sc Classical rules}
      \begin{prooftree}
            \AxiomC{$[p^{\neg p_i]}$}
            \noLine
            \UnaryInfC{$\Pi$}
            \noLine
            \UnaryInfC{$\bot$} 
            \RightLabel{\scriptsize{$p_c$-int}} 
            \UnaryInfC{$p^{p_c}$}
\DisplayProof
\qquad            
            \AxiomC{$p^{p_c}$}
            \AxiomC{$p^{\neg p_i}$}
            \RightLabel{\scriptsize{$p_c$-elim}}
            \BinaryInfC{$\bot$}         
        \end{prooftree}
\begin{prooftree}
\AxiomC{$[p^{\neg A}]$}
\AxiomC{$[p^{\neg B}]$}
\noLine
\BinaryInfC{$\Pi$}
\noLine
\UnaryInfC{$\bot$} 
\RightLabel{\scriptsize{$\lor_c$-int}} 
\UnaryInfC{$p^{A \lor_c B}$}
\DisplayProof
\qquad            
\AxiomC{$p^{A \lor_c B}$}
\AxiomC{$p^{\neg A}$}
\AxiomC{$p^{\neg B}$}   \RightLabel{\scriptsize{$\lor_c$-elim}}
\TrinaryInfC{$\bot$}         
\end{prooftree}
\begin{prooftree}
\AxiomC{$[p^A]$}
\AxiomC{$[p^{\neg B}]$}
\noLine
\BinaryInfC{$\Pi$}
\noLine
\UnaryInfC{$\bot$} 
\RightLabel{\scriptsize{$\to_c$-int}} 
\UnaryInfC{$p^{A \to_c B}$}
\DisplayProof
\qquad            
\AxiomC{$p^{A \to_c B}$}
\AxiomC{$p^A$}
\AxiomC{$p^{\neg B}$}            
\RightLabel{\scriptsize{$\to_c$-elim}}
\TrinaryInfC{$\bot$}
\end{prooftree}
\caption{Simulation Base for the Ecumenical natural deduction system $\NEp$.}\label{Fig:NEp-sim}
\end{figure}

We now add atomic rules to $\mathcal{N}$ for all formulas $D \in \Gamma^*$, according to the following criteria:

\begin{enumerate}
    \item For every formula D with shape $A \to_i B,A \land B,\neg A, p_c,A \to_c B$ or $A \lor_c B$, we add $p^D$-int and $p^D$-elim to $\mathcal{N}$;
    \item For every formula $D$ with shape $A \lor_i B$ we add the rules $p^D$-int to $\mathcal{N}$, and for every $D$ with shape $A \lor_i B$ and every $q \in \At$ we add $p^D$-elim with minor premises and conclusion $q$ to $\mathcal{N}$; 
    \item For every formula $D$ with shape  we add $p^D$-int and $p^D$-elim to $\mathcal{N}$;
    
    \item For every $q \in \At$ we add $\bot$-elim to $\mathcal{N}$; 
    \item $\mathcal{N}$ contains no rules other than those added by this procedure.
\end{enumerate}

Since all bases in this section are required to be consistent,  we must show that this is also the case for $\mathcal{N}$. In~\cite{Pereira2017-PERNSA-2} it has been proved that $\mathcal{N}$ is (weak) normalising, hence consistency follows directly from this result.

\begin{theorem}    
$\bot$ is not derivable in $\mathcal{N}$.  \label{consistency}
\end{theorem}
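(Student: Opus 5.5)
The strategy is to transfer the consistency of $\NEp$ to $\mathcal{N}$ by a translation argument, rather than re-proving normalisation from scratch. Define a map $\tau$ from atoms of $\mathcal{N}$ back to formulas. For each atom $p^D$ in the range of $\alpha$, set $\tau(p^D)=D$. In particular $\tau(\bot)=\bot$ and $\tau(p)=p_i$. Any atom $q$ outside the range of $\alpha$ is mapped to a fresh atom $q_i$. Then translate, by induction on its structure, every deduction $\Pi$ in $\mathcal{N}$ of $q$ from assumptions $q_1,\ldots,q_n$ into an $\NEp$-deduction of $\tau(q)$ from $\tau(q_1),\ldots,\tau(q_n)$.

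The rule-by-rule check is immediate because Figure~\ref{Fig:NEp-sim} is a literal image of Figure~\ref{Fig:NEp} under $\alpha$. Each rule of $\mathcal{N}$ becomes, under $\tau$, an instance of the corresponding $\NEp$ rule, with discharges preserved. The cases are as follows.
\begin{itemize}
\item $\to_i$, $\wedge$, $\neg$ and the classical rules translate directly, using $\tau(p^{\neg A})=\neg A$ and $\tau(p^{\neg p_i})=\neg p_i$.
\item The $\lor_i$-elim rules with arbitrary conclusion $q$ become $\lor_i$-elim with minor conclusion $\tau(q)$, which is permitted since $C$ is arbitrary in $\NEp$.
\item The $\bot$-elim rules with conclusion $q$ become $\bot$-elim with conclusion $\tau(q)$.
\end{itemize}
Injectivity of $\alpha$ ensures that $\tau$ is well defined.

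Consequently, a closed derivation of $\bot$ in $\mathcal{N}$ yields $\vdash_{\NEp}\bot$. It therefore remains to show that $\NEp$ itself is consistent. The cleanest route is the soundness theorem just proved. If $\vdash_{\NEp}\bot$, then $\Vdash\bot$, so $\Vdash_{\emptyset}\bot$, i.e. $\vdash_{\emptyset}\bot$ by Clause~\ref{Atm}. But the empty base has no rules, so no deduction of $\bot$ without assumptions exists.

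One must make sure this use of soundness is not circular. Soundness only quantifies over consistent bases, and the empty base is trivially consistent, so the argument goes through. As an alternative, one can invoke weak normalisation of $\NEp$ (or of $\mathcal{N}$, as in~\cite{Pereira2017-PERNSA-2}): a closed normal deduction must end in an introduction rule, and no introduction rule concludes $\bot$. However, a closed normal deduction could a priori end with an elimination into $\bot$, such as $\neg$-elim or a classical elimination, and ruling this out requires the usual analysis that the main branch of a closed normal deduction must begin with an undischarged assumption.

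The main obstacle is therefore the bookkeeping in the translation. The delicate points are the treatment of atoms outside $\alpha$'s range and ensuring that discharged sets of hypotheses correspond exactly. If the translation proves awkward, the fallback is a direct semantic argument: show that $\mathcal{N}$ is sound for the consistent base-extension semantics, or build a classical valuation on the atoms of $\mathcal{N}$ under which every rule is truth-preserving and $\bot$ is false. The valuation would interpret $p^D$ by the classical truth value of $D$ with all $p$ false, reading $\lor_i,\lor_c$ as disjunction and $\to_i,\to_c$ as implication. Its construction and the rule-by-rule check are routine, and this argument avoids normalisation entirely.
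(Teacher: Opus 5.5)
Your argument is correct, and it takes a different route from the paper. The paper proves the theorem in one line by citing weak normalisation of $\mathcal{N}$ from~\cite{Pereira2017-PERNSA-2}. It leaves implicit the standard main-branch analysis needed to get from normal forms to consistency, which you rightly flag.

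You instead pull a closed $\mathcal{N}$-deduction of $\bot$ back along $\tau$ to a closed $\NEp$-deduction of $\bot$. This is essentially the back-translation the paper itself uses in both cases of the completeness proof, so it costs nothing extra; your choice of sending out-of-range atoms $q$ to $q_i$ is, if anything, tidier than the paper's ``left unchanged''. You then refute $\vdash_{\NEp}\bot$ with the paper's own soundness theorem at the empty base. This makes the development self-contained, with no external normalisation theorem. What the paper's citation buys in exchange is a stronger syntactic fact that consistency alone does not need.

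One small correction to your circularity remark. The appeal to soundness is non-circular because the soundness proof never uses $\mathcal{N}$ or Theorem~\ref{consistency}; both only enter in the completeness section. The consistency of the empty base does a different job: it guarantees at least one admissible base at which $\Vdash\bot$ fails rather than holding vacuously.

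Your fallback is also valid: assign each atom $q$ of $\mathcal{N}$ the classical truth value of $\tau(q)$ and check each rule. It is the most elementary option, a finitary two-valued check needing neither normalisation nor quantification over uncountably many bases.
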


\subsubsection{Completeness proof}
The following result is an adaptation of Lemma~\ref{Lemma 1} to the simulation base case.
\begin{lemma}
 $\forall S\supseteq\mathcal{N}$,   $\Gamma_{\At} \Vdash_S p$ iff $\Gamma_{\At} \vdash_S p$.  \label{Atomic Lemma}
\end{lemma}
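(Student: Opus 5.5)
The statement of Lemma~\ref{Atomic Lemma} should be read with $p$ denoting the intuitionistic atom $p_i$ (the only reading under which $\Vdash_S p$ is defined by Clause~\ref{Atm}). The plan is to obtain it as a corollary of Lemma~\ref{Lemma 1}. That lemma gives, for every base $S$, that $\Gamma_{\At}\Vdash_S p_i$ iff $\Gamma_{\At}\vdash_S p$ or $\Gamma_{\At}\vdash_S\bot$. So it suffices to show that, for $S\supseteq\mathcal{N}$, the disjunct $\Gamma_{\At}\vdash_S\bot$ already implies $\Gamma_{\At}\vdash_S p$.

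This is exactly where the simulation base is used. By construction (item 4 of the procedure building $\mathcal{N}$), $\mathcal{N}$ contains the rule $\bot$-elim with conclusion $q$ for every $q\in\At$, in particular for $q=p$. Since $S\supseteq\mathcal{N}$, this rule is also in $S$. Given a deduction in $S$ of $\bot$ from assumptions in $\Gamma_{\At}$, we append one application of $\bot$-elim with conclusion $p$. This yields a deduction in $S$ of $p$ with the same undischarged assumptions, so $\Gamma_{\At}\vdash_S p$.

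Putting this together:
\begin{itemize}
\item[($\Rightarrow$)] Apply Lemma~\ref{Lemma 1} and, in the $\bot$ case, close the deduction off with $\bot$-elim.
\item[($\Leftarrow$)] This is immediate from the right-to-left direction of Lemma~\ref{Lemma 1}, taking the first disjunct.
\end{itemize}

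The argument needs $S$ to range over consistent bases, as everywhere in this section. This is meaningful because $\mathcal{N}$ itself is consistent by Theorem~\ref{consistency}, so the class of extensions in question is nonempty. There is no real obstacle here: the only point needing care is that $\bot$-elim is available for every atom in $\mathcal{N}$. The case $p=\bot$ is trivial, since then both sides of the disjunction coincide.
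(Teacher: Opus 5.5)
Your proposal is correct and follows essentially the same route as the paper. The paper also invokes Lemma~\ref{Lemma 1}, closes the case $\Gamma_{\At}\vdash_S\bot$ with the $\bot$-elim rule for $p$ that $\mathcal{N}$ (hence $S$) contains, and reads the converse off directly from Lemma~\ref{Lemma 1}.
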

\begin{proof} Assume $\Gamma_{\At}\Vdash_S p$. By Lemma~\ref{Lemma 1} we have either $\Gamma_{\At} \vdash_S p$ or $\Gamma_{\At} \vdash_S \bot$. If $\Gamma_{\At} \vdash_S p$ then the proof is already finished. If $\Gamma_{\At} \vdash_S \bot$, then we can apply rule $\bot$-elim at the end of the deduction to obtain $\Gamma_{\At} \vdash_S p$. The converse follows directly from Lemma~\ref{Lemma 1}.
\end{proof}

\begin{lemma}\label{lemma:completenessmain}
   Given any finite $\Gamma$, let $D \in \Gamma^*$, $\alpha$ be an atomic mapping for $\Gamma^*$ and $\mathcal{N}$ a simulation base defined over $D$. Then, $\forall S \supseteq \mathcal{N}$, $\Vdash_{S} D$ iff $\vdash_{S}p^{D}$.
\end{lemma}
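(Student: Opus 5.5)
We argue by induction on the complexity of $D$, following Sandqvist's strategy. The inductive hypothesis is stated for every $S\supseteq\mathcal{N}$ simultaneously, because most clauses quantify over extensions. Throughout we use Lemma~\ref{Atomic Lemma} to turn statements of the form $\Gamma_{\At}\Vdash_{S'} q$ into $\Gamma_{\At}\vdash_{S'} q$. We also use monotonicity (Lemma~\ref{monotonicity}) and the elementary fact that derivations in $S$ persist in every $S'\supseteq S$.

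\textbf{Base cases.} For $D=\bot$ the claim is Clause~\ref{Atm} with $\alpha(\bot)=\bot$. For $D=p_i$ it is Clause~\ref{Atm} with $\alpha(p_i)=p$.

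\textbf{Neutral and intuitionistic cases.} The cases $\wedge$, $\to_i$, $\lor_i$ and $\neg$ go exactly as in Sandqvist. For $\neg A$, the left-to-right direction runs as follows. From $A\Vdash_S\bot$, add an axiom for $p^A$ to get some $S'$. If $S'$ is inconsistent, then $p^A\vdash_S\bot$ directly. Otherwise the inductive hypothesis gives $\Vdash_{S'}A$, hence $\vdash_{S'}\bot$, and replacing the axiom by an open assumption yields $p^A\vdash_S\bot$. In either case $\neg$-int gives $\vdash_S p^{\neg A}$. For the converse, $\neg$-elim and the inductive hypothesis show that any $S'\supseteq S$ with $\Vdash_{S'}A$ derives $\bot$. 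For $\lor_i$ we use the $\lor_i$-elim rules with arbitrary atomic conclusion $q$, together with the Disjunction Lemma~\ref{disjunction lemma}, when going from $\vdash_S p^{A\lor_i B}$ back to support.

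\textbf{Classical cases.} These all share one pattern, with negated subformulas serving as the bridge. Since $\neg A,\neg B$ lie in $\Gamma^*$ by Definition~\ref{def:subformulaprop}, the $\neg$ case gives: for every $S'\supseteq\mathcal{N}$, $A\Vdash_{S'}\bot$ iff $\Vdash_{S'}\neg A$ iff $\vdash_{S'}p^{\neg A}$.

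For $A\lor_c B$, left to right: let $S'$ be $S$ plus axioms for $p^{\neg A}$ and $p^{\neg B}$. If $S'$ is consistent, the inductive hypothesis gives $\Vdash_{S'}\neg A$ and $\Vdash_{S'}\neg B$, so $A\Vdash_{S'}\bot$ and $B\Vdash_{S'}\bot$. Clause~\ref{orc} then yields $\vdash_{S'}\bot$. If $S'$ is inconsistent, $\vdash_{S'}\bot$ holds outright. Either way, removing the axioms gives $p^{\neg A},p^{\neg B}\vdash_S\bot$, and $\lor_c$-int yields $\vdash_S p^{A\lor_c B}$. Right to left: given $S'\supseteq S$ with $A\Vdash_{S'}\bot$ and $B\Vdash_{S'}\bot$, we get $\vdash_{S'}p^{\neg A}$ and $\vdash_{S'}p^{\neg B}$, and $\lor_c$-elim gives $\vdash_{S'}\bot$.

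The cases $A\to_c B$ and $p_c$ are identical in shape. For $\to_c$ we add axioms for $p^A$ and $p^{\neg B}$. For $p_c$ we add an axiom for $p^{\neg p_i}$, and use Lemma~\ref{Lemma 1} (or Lemma~\ref{Atomic Lemma}) to identify $p\vdash_{S'}\bot$ with $p_i\Vdash_{S'}\bot$, i.e.\ with $\Vdash_{S'}\neg p_i$. For $p_c$ right to left: if $p\vdash_{S'}\bot$ for some $S'\supseteq S$, then $\vdash_{S'}p^{\neg p_i}$, and $p_c$-elim gives $\vdash_{S'}\bot$, contradicting consistency. Hence Clause~\ref{Atmc} holds.

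\textbf{Main obstacle.} The delicate point is the consistency constraint. Whenever we extend by fresh axioms we must case-split on whether the extension is consistent, as in the proof of Lemma~\ref{Lemma 1}. We must also check that all bases involved extend $\mathcal{N}$, so that Lemma~\ref{Atomic Lemma} and $\bot$-elim are available. In the $\lor_i$ direction from derivability to support, the conclusion $E$ may be classical, which is exactly why the Disjunction Lemma was proved for classical $E$ as well.
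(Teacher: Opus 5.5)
Your proof is correct and follows the paper's overall plan: induction over subformulas, the simulated introduction and elimination rules, and case splits on consistency. The classical cases are organised differently, though.

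\textbf{How the paper handles the classical cases.} It never applies the inductive hypothesis to $\neg A$, $\neg B$ or $\neg p_i$.
\begin{itemize}
\item For $\lor_c$ and $\to_c$, it quantifies over extensions $S'$ deriving $p^{\neg A},p^{\neg B}$ (resp.\ $p^{A},p^{\neg B}$). It uses the hypothesis only for $A$ and $B$, together with the simulated $\neg$-int and $\neg$-elim rules. It then turns $p^{\neg A},p^{\neg B}\Vdash_S\bot$ into $p^{\neg A},p^{\neg B}\vdash_S\bot$ via Lemma~\ref{Atomic Lemma}.
\item For $p_c$, it uses only $p^{p_i}=p$, Lemma~\ref{pc lemma} and a metalevel reductio.
\end{itemize}

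\textbf{How you handle them.} You prove the $\neg$ case once, which the paper leaves implicit, and reuse it as a bridge: $A\Vdash_{S'}\bot$ iff $\vdash_{S'}p^{\neg A}$. You also inline the argument of Lemma~\ref{Lemma 1} by adding the relevant axioms and splitting on consistency. This is more uniform.

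\textbf{What you need to make explicit.} Your induction is not structural. $\neg p_i$ is not smaller than $p_c$, and $\neg A$ has as many connectives as $A\lor_c q_i$. So ``induction on complexity'' in the usual sense does not license your appeals to the hypothesis. You should say that the induction runs along the extended subformula relation of Definition~\ref{def:subformulaprop}. That relation is well founded, as the following weight shows:
\begin{itemize}
\item $w(p_i)=w(\bot)=0$ and $w(p_c)=2$;
\item $w(\neg A)=w(A)+1$;
\item $w(A\circ B)=w(A)+w(B)+1$ for $\circ\in\{\wedge,\lor_i,\to_i\}$;
\item $w(A\lor_c B)=w(A\to_c B)=w(A)+w(B)+2$.
\end{itemize}
Every subformula in the sense of that definition then has strictly smaller weight. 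Alternatively, inline the $\neg$ argument inside each classical case, as the paper does. For $p_c$ the bridge is not needed at all. With $p^{\neg p_i}$ as an axiom, $\neg$-elim and $p^{p_i}=p$ give $p\vdash_{S'}\bot$ directly.

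\textbf{The Disjunction Lemma.} It plays no role here, so your closing rationale for it is mistaken. Clause~\ref{ori} quantifies only over atomic conclusions. Passing from $\vdash_S p^{A\lor_i B}$ to $\Vdash_S A\lor_i B$ therefore needs only the simulated $\lor_i$-elim rules for arbitrary $q$ and Lemma~\ref{Atomic Lemma}. The lemma could not help in that direction anyway, since it presupposes $\Vdash_S A\lor_i B$. Its classical cases serve the soundness of $\lor_i$-elim, not completeness.
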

\begin{proof} The proof is by structural induction on $D$.

\textbf{Base case.} If $D=p_i$, the result follows directly from Clause~\ref{Atm}.

\textbf{Inductive case.} Consider $D$ to be
\begin{enumerate}
    \item[$p_c$.]
$(\Rightarrow)$
Assume that $\Vdash_S p^c$. Then, for every extension $S'\supseteq S$,
$p\nvdash_{S'}\bot$.
Recall that, by definition, $p^{p_i}=p$. 

Suppose that $p^{\neg p_i}\nvdash_S\bot$. By Lemma~\ref{pc lemma}, $S'= S\cup\{\text{axiom }p^{\neg p_i}\}$ is consistent. But the
simulated negation-elimination rule gives $p\vdash_{S'} \bot$, contradicting $\Vdash_S p^c$. Hence $p^{\neg p_i}\vdash_S \bot$ by classical
reductio, and simulated $p_c$-introduction yields $\vdash_S p^{p_c}$.

$(\Leftarrow)$
Assume that $\vdash_Sp^{p_c}$. Suppose, by contradiction, that
there is an extension $S'\supseteq S$ such that $p\vdash_{S'}\bot$.
Since $p=p^{p_i}$, the $p^{\neg A}$-int rule yields $\vdash_{S'}p^{\neg p_i}.$
Moreover, since $S'\supseteq S$, monotonicity of derivability gives
$ \vdash_{S'}p^{p_c}.$
Applying the $p_c$-elim rule, we obtain
$\vdash_{S'}p^\bot$, and hence $\vdash_{S'}\bot$, contradicting the
consistency of $S'$. Consequently, $p\nvdash_{S'}\bot$ for every  $S'\supseteq S$. 
By Clause~\ref{Atmc}, we conclude that $\Vdash_Sp^c$. 
\item[$A \lor_c B$.]
        ($\Rightarrow$) Assume that $\Vdash_S A \lor_c B$. Now, take any extension $S'$ of $S$ such that $ \Vdash_{S'} p^{\neg A}$ and $ \Vdash_{S'} p^{\neg B}$, as well as an extension $S'' \supseteq S'$ such that $\Vdash_{S''} A$. By the inductive hypothesis, we conclude $\vdash_{S''} p^A$. Since we have $\Vdash_{S'} p^{\neg A}$ we also have $\Vdash_{S''} p^{\neg A}$. Using Clause~\ref{Atm} we conclude $\vdash_{S''} p^{\neg A}$. Since we have $\vdash_{S''} p^{\neg A}$ and $\vdash_{S''} p^A$ we can use the $\neg$-elim to conclude $\vdash_{S''} \bot$. Clause~\ref{Atm} yields $\Vdash_{S''} \bot$. Thus, using Clause~\ref{Inf}, we conclude $A \Vdash_{S'} \bot$. A similar argument shows that $B \Vdash_{S'} \bot$. 
        
        Since we assumed $\Vdash_S A \lor_c B$ and we have $A \Vdash_{S'} \bot$ and $B \Vdash_{S'} \bot$, by  Clause~\ref{orc} we conclude $\Vdash_{S'} \bot$. Now, Clause~\ref{Inf} can be used to obtain $p^{\neg A}, p^{\neg B} \Vdash_S \bot$. Using Lemma \ref{Atomic Lemma}, we obtain $p^{\neg A}, p^{\neg B} \vdash_S \bot$. Finally, we  use the $\lor_{c}$-int rule to conclude $\vdash_Sp^{A \lor_c B}$. 
        
        ($\Leftarrow$) Assume that $\vdash_S p^{A\lor_c B}$. Take any extension $S' \supseteq S$ such that $A \Vdash_{S'} \bot$ and $B \Vdash_{S'} \bot$, and any extension $S'' \supseteq S'$ such that $\Vdash_{S''} p^A$. Then, Clause~\ref{Atm} yields $\vdash_{S''} p^A$. By the inductive hypothesis we obtain $\Vdash_{S''} A$. Since $A \Vdash_{S'} \bot$ and $S'' \supseteq S'$, we conclude $A \Vdash_{S''} \bot$. Using Clause~\ref{Inf} we obtain $\Vdash_{S''} \bot$. Since $S''$ is an arbitrary extension of $S'$ where $\Vdash_{S''} p^A$ holds, we can use Clause~\ref{Inf} to obtain $p^A \Vdash_{S'} \bot$. A similar argument shows that $p^B \Vdash_{S'} \bot$. Using Lemma \ref{Atomic Lemma} we obtain $p^A \vdash_{S'} \bot$ and $p^B \vdash_{S'} \bot$. Using the $\neg$-int rule we conclude $ \vdash_{S'} p^{\neg A}$ and $\vdash_{S'} p^{\neg B}$. Since $S' \supseteq S$, we also have that $\vdash_{S'} p^{A\lor_c B}$. Now, from $ \vdash_{S'} p^{\neg A}$, $\vdash_{S'} p^{\neg B}$ and $\vdash_{S'} p^{A\lor_c B}$, we can use the $\lor_{c}$-elim to obtain a deduction showing that $\vdash_{S'} \bot$. Clause~\ref{Atm}, yields $\Vdash_{S'} \bot$. Therefore, since $S'$ is an arbitrary extension containing $A \Vdash_{S'} \bot$ and $B \Vdash_{S'} \bot$, we conclude that, for every $S' \supseteq S$, $A \Vdash_{S'} \bot$ and $B \Vdash_{S'} \bot$ implies $\Vdash_{S'} \bot$. Finally, Clause~\ref{orc} gives $\Vdash_S A \lor_c B$.   
    \item[$A \to_c B$]
        ($\Rightarrow$) Assume that $\Vdash_S A \to_c B$. Take any extension $S'\supseteq S$ such that $ \Vdash_{S'} p^{A}$ and $ \Vdash_{S'} p^{\neg B}$, and any extension $S'' \supseteq S'$ such that $\Vdash_{S''} B$. Then, using the inductive hypothesis, we conclude $\vdash_{S''} p^B$. Since we have $\Vdash_{S'} p^{\neg B}$ we also have $\Vdash_{S''} p^{\neg B}$. Using Clause~\ref{Atm} we conclude $\vdash_{S''} p^{\neg B}$. Since we have $\vdash_{S''} p^{\neg B}$ and $\vdash_{S''} p^B$ we can use the $\neg$-elim to conclude $\vdash_{S''} \bot$. Clause~\ref{Atm} yields $\Vdash_{S''} \bot$. Thus, using Clause~\ref{Inf}, we conclude $B \Vdash_{S'} \bot$. Using Clause~\ref{Atm}, we conclude $\vdash_{S'} p^A$ from $ \Vdash_{S'} p^{A}$. Using the inductive hypothesis again, we conclude $\Vdash_{S'} A$. Since we assumed $\Vdash_S A \to_c B$ and we have $\Vdash_{S'} A$ and $B \Vdash_{S'} \bot$, we can use Clause~\ref{toc} to conclude $\Vdash_{S'} \bot$. Now, Clause~\ref{Inf} can be used to obtain $p^{A}, p^{\neg B} \Vdash_S \bot$. Using Lemma \ref{Atomic Lemma}, we obtain $p^{A}, p^{\neg B} \vdash_S \bot$. Finally, we can use the $\to_{c}$-int rule to conclude $\vdash_Sp^{A \to_c B}$.  
        
($\Leftarrow$) Assume $\vdash_S p^{A\to_c B}$. Take any extension $S' \supseteq S$ such that $ \Vdash_{S'} A$ and $B \Vdash_{S'} \bot$, and any extension $S'' \supseteq S'$ Such that $\Vdash_{S''} p^B$. Then, Clause~\ref{Atm} yields $\vdash_{S''} p^B$. Using the inductive hypothesis we obtain $\Vdash_{S''} B$. Since $B \Vdash_{S'} \bot$ and $S'' \supseteq S '$, we conclude $B \Vdash_{S''} \bot$. Using Clause~\ref{Inf} we obtain $\Vdash_{S''} \bot$. Since $S''$ is an arbitrary extension of $S'$ where $\Vdash_{S''} p^B$ holds, we can use Clause~\ref{Inf} to obtain $p^B \Vdash_{S'} \bot$. By Lemma \ref{Atomic Lemma} we obtain $p^B \vdash_{S'} \bot$. Using the $\neg$-int rule we conclude $\vdash_{S'} p^{\neg B}$. Since $\Vdash_{S'} A$, the inductive hypothesis yields $\vdash_{S'} p^A$. Since $S' \supseteq S$, we also have that $\vdash_{S'} p^{A\to_c B}$. Now, from $ \vdash_{S'} p^{A}$, $\vdash_{S'} p^{\neg B}$ and $\vdash_{S'} p^{A\to_c B}$, we can use the $\to_{c}$-elim rule to obtain a deduction of $\vdash_{S'} \bot$. Clause~\ref{Atm} yields $\Vdash_{S'} \bot$. Therefore, Since $S'$ is an arbitrary extension containing $\Vdash_{S'} A$ and $B \Vdash_{S'} \bot$, we conclude that for every $S' \supseteq S$, $ \Vdash_{S'} A$ and $B \Vdash_{S'} \bot$ implies $\Vdash_{S'} \bot$. Finally, by Clause~\ref{toc}, we conclude $\Vdash_S A \to_c B$.   
\end{enumerate}
\end{proof}
\begin{theorem}[Completeness] If $\Gamma \Vdash A$ then $\Gamma \vdash_{\NEp} A$.
\end{theorem}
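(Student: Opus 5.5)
We follow Sandqvist's completeness strategy, using the simulation base $\mathcal{N}$ built from an atomic mapping $\alpha$ for $(\Gamma\cup\{A\})^*$. Assume $\Gamma \Vdash A$; by Clause~\ref{Val}, $\Gamma \Vdash_{\mathcal{N}} A$. This instance is legitimate because $\mathcal{N}$ is consistent by Theorem~\ref{consistency}. Write $\Gamma^\alpha = \{p^B : B\in\Gamma\}$. The goal is first to obtain $\Gamma^\alpha \vdash_{\mathcal{N}} p^A$, and then to transform this atomic deduction into an $\NEp$-deduction of $A$ from $\Gamma$.

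For the first step, fix any $S\supseteq\mathcal{N}$ with $\Vdash_S p^B$ for all $B\in\Gamma$. Clause~\ref{Atm} gives $\vdash_S p^B$. Lemma~\ref{lemma:completenessmain} then gives $\Vdash_S B$ for every $B\in\Gamma$. Since $\Gamma\Vdash_{\mathcal{N}}A$, Clause~\ref{Inf} yields $\Vdash_S A$, and the same lemma gives $\vdash_S p^A$, hence $\Vdash_S p^A$. As $S$ was arbitrary, $\Gamma^\alpha \Vdash_{\mathcal{N}} p^A$, understood as an atomic sequent ($p^A$ is read as $(p^A)_i$). Lemma~\ref{Atomic Lemma} then gives $\Gamma^\alpha\vdash_{\mathcal{N}} p^A$. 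If $\Gamma$ is empty, we skip Clause~\ref{Inf} and apply Lemma~\ref{lemma:completenessmain} directly to $\Vdash_{\mathcal{N}}A$.

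For the second step, take the deduction $\Pi$ of $p^A$ from $\Gamma^\alpha$ in $\mathcal{N}$ and replace every atom $p^D$ in the image of $\alpha$ by $D$ itself. This uses $\alpha(p_i)=p$ and $\alpha(\bot)=\bot$, so atoms $p$ become $p_i$. By construction, each simulation rule of Figure~\ref{Fig:NEp-sim} becomes the corresponding $\NEp$ rule of Figure~\ref{Fig:NEp}: $\to_i$, $\lor_i$, $\wedge$, $\neg$, $\bot$-elim, $p_c$, $\lor_c$ and $\to_c$, with discharges preserved. The result is an $\NEp$-deduction of $A$ from $\Gamma$.

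The main obstacle is that $\Pi$ may contain atoms outside the image of $\alpha$. These arise as conclusions $q$ of $\bot$-elim instances, or as minor-premise and conclusion atoms of $\lor_i$-elim instances. Such atoms need an interpretation. The first option to try is to map every atom not in the range of $\alpha$ to $A$ itself, or to $\bot$, as Sandqvist does. Under this choice, $\bot$-elim to $q$ becomes $\bot$-elim to the chosen formula, and $\lor_i$-elim with conclusion $q$ becomes $\lor_i$-elim with that formula as conclusion; both are admissible in $\NEp$, since these rules allow an arbitrary conclusion. We must also confirm that no other simulation rule mentions atoms outside the range of $\alpha$, which holds because all remaining rules speak only of $p^D$ and $\bot$. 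With these checks, the translated tree is an $\NEp$-derivation, so $\Gamma\vdash_{\NEp}A$.
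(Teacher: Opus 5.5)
Your proof is correct. It shares the paper's overall architecture: the simulation base $\mathcal{N}$, Lemma~\ref{lemma:completenessmain}, and translation back into $\NEp$.

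The difference lies in how you obtain $\Gamma^{\alpha}\vdash_{\mathcal{N}}p^A$. You pass through the atomic semantic sequent $\Gamma^{\alpha}\Vdash_{\mathcal{N}}p^A$ and invoke Lemma~\ref{Atomic Lemma}, which is Sandqvist's original route. The paper instead adds the members of $\Gamma^{\At}$ to $\mathcal{N}$ as axioms and splits on whether the resulting base is consistent. It handles the inconsistent case by translating $\Delta\vdash_{\mathcal{N}}\bot$ and applying $\bot$-elim in $\NEp$.

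The underlying content is the same. The consistency case split is hidden inside Lemma~\ref{Lemma 1}, and Lemma~\ref{Atomic Lemma} absorbs the inconsistent case using the $\bot$-elim rules of $\mathcal{N}$ rather than of $\NEp$. Your version is more modular and needs only one translation. The paper's version is self-contained and does not depend on Lemma~\ref{Atomic Lemma}. You are also right that instantiating $\Gamma\Vdash A$ at $\mathcal{N}$ requires Theorem~\ref{consistency}, a point the paper leaves implicit.

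Finally, the obstacle you flag is actually vacuous. An atom other than $\bot$ outside the range of $\alpha$ can occur as a premise in an $\mathcal{N}$-deduction only as a minor premise of $\lor_i$-elim, whose conclusion is that same atom. Any such occurrence would therefore propagate down to the end formula $p^A$, which lies in the range of $\alpha$. Your reinterpretation of these atoms is harmless but unnecessary.
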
 

\begin{proof} Let $\alpha$ and the atomic system $\mathcal{N}$ associated with $\Gamma$ and $A$ be defined as above, and set $\Gamma^{\At} = \{p^A | A \in \Gamma\}$.

Assume that $\Gamma \Vdash A$. By Clause~\ref{Inf} we have $\Gamma \Vdash_{\mathcal{N}} A$. Let $S$ be the system obtained from $\mathcal{N}$ by adding, for every $p^C\in\Gamma^{\At}$, an atomic axiom with conclusion $p^C$.
We distinguish two cases according to whether $S$ is consistent.

\medskip
\noindent
\textbf{Case 1: $S$ is consistent.}
Then $S$ is an admissible extension of $\mathcal{N}$. By construction, $\vdash_S p^C$ for every $C\in\Gamma$.
Since $\Gamma^*\supseteq\Gamma$, Lemma~\ref{lemma:completenessmain}
gives $\Vdash_SC$ for every $C\in\Gamma$.
As $\Gamma\Vdash_{\mathcal{N}}A$ and $S\supseteq\mathcal{N}$, Clause~\ref{Inf} yields $\Vdash_SA$. Applying Lemma~\ref{lemma:completenessmain} once more, we obtain $\vdash_Sp^A$.

Let $\Pi$ be a deduction witnessing $\vdash_S p^A$. Replace every application in $\Pi$ of a newly added axiom with conclusion $p^C$ by an open assumption $p^C$. The resulting deduction uses only rules from $\mathcal{N}$ and therefore shows $\Delta\vdash_{\mathcal{N}}p^A$ for some $\Delta\subseteq\Gamma^{\At}$. By weakening, $ \Gamma^{\At}\vdash_{\mathcal{N}}p^A$.

We now translate this deduction back into the language of $\NEp$, replacing every occurrence of $p^C$ in the image of $\alpha$ by $C$. The auxiliary atoms occurring in instances of $\lor_i$-elim and $\bot$-elim, which are not in the image of $\alpha$, are left unchanged. By the construction of $\mathcal{N}$, every atomic rule is thereby transformed into an instance of a rule of $\NEp$.
An induction on the height of the deduction consequently yields $\Gamma\vdash_{\NEp}A$.

\medskip
\noindent
\textbf{Case 2: $S$ is inconsistent.}
In this case, there is a deduction $\Pi$ witnessing $\vdash_S\bot$.  Replace every application of a newly added axiom with conclusion $p^C$ by an open assumption $p^C$. This gives a deduction using only rules from $\mathcal{N}$ and showing $\Delta\vdash_{\mathcal{N}}\bot$  for some $\Delta\subseteq\Gamma^{\At}$. Notice that $\Delta$ must be non-empty: otherwise, we would have $\vdash_{\mathcal{N}}\bot$, contradicting the consistency of $\mathcal{N}$ established in Theorem~\ref{consistency}.

Applying the same translation as in the first case and arguing by induction on the height of the deduction, we obtain $\Gamma\vdash_{\NEp}\bot$. 
Finally, by $\bot$-elim, $\Gamma\vdash_{\NEp}A$.

Thus, in either case, $\Gamma\vdash_{\NEp}A$.
\end{proof}


\section{Connecting Ecumenical $\Bes$ and Beth semantics}\label{sec:Beth}

Our presentation of Beth semantics mainly follows \cite{BeerBeth}, with a few elements taken from \cite{sep-logic-intuitionistic}. Readers interested in notationally different formulations may consult \cite{TroelstraConstructivismVol2,kripke1965semantical}. For convenience, we start by restating the technical definition of atomic bases as follows, following the convention in \cite{Followingall10.1093/analys/anac100}:

\begin{definition}
    $\mathscr{R}$ is the set of all atomic rules, and $\mathcal{P}(\mathscr{R})$ is the set of all bases, were $\mathcal{P}$ is the usual powerset operator. The set $\mathfrak{C}(\mathscr{R})$ of all consistent bases is defined as $\mathfrak{C}(\mathscr{R}) = \{S \in \mathcal{P(\mathscr{R})} \ | \ \nvdash_{S} \bot \}$
\end{definition}

\begin{definition}
    The relation $\Vdash^{\mathfrak{C}}_{S}$ in defined by taking the definitions of $p^{i}$, $\land$, $\lor_{i}$, $\to_{i}$ and Inf (Clause~\ref{Inf}) from Definition \ref{def:validity}. The relation $\Vdash^{\mathfrak{R}}_{S}$ is defined by taking the same clauses but making the definitions range over $\mathcal{P(\mathscr{R})}$ instead of $\mathfrak{C}(\mathscr{R})$.
\end{definition}

A base is then defined as an element of $\mathcal{P}(\mathscr{R})$, and a consistent base as an element of $\mathfrak{C}(\mathscr{R})$. An immediate consequence of this is that both versions of $\Bes$ uses uncountably many bases for its semantic definitions, which forces us to adopt some specific definitional conventions. Notice that the classical clauses are left out, as they are not necessary for our proofs and including them would add trivial new cases to some inductive proofs.  Moreover, due to our definition of $\At$ we have $\bot$ being used as an atom in the elements of $\mathcal{P}(\mathscr{R})$, but since it no longer has any special properties it behaves just like any other atom.

\begin{definition}
   A ordered \textit{frame} $F$ is a pair $\langle W, \leq \rangle$ in which $W$ is a possibly infinite set of objects and $\leq$ is a reflexive and transitive relation on $W$.   
\end{definition}

\begin{definition}
Let $\pi(S) = \{ S' \in \mathcal{P}(\mathscr{R}) \ |\ S' \supseteq S  \}$. Then $F(S)$ is defined as the frame $\langle \pi(S), \supseteq \rangle$.
\end{definition}

\begin{definition}
Let $\eta(S) = \{ S' \in \mathfrak{C}(\mathscr{R}) \ |\ S' \supseteq S  \}$. Then $C(S)$ is defined as the frame $\langle \eta(S), \supseteq \rangle$.
\end{definition}

So $F(S)$ is the set of extensions of $S$ ordered by $\supseteq$, and $C(S)$ the ordered set of its consistent extensions. Notice that $F(\varnothing) = \langle \mathcal{P}(\mathscr{R}), \supseteq \rangle$ and  $C(\varnothing) = \langle\mathfrak{C}(\mathscr{R}), \supseteq \rangle$.

The definition of Beth semantics strongly relies on the concept of \textit{path}:

\begin{definition}\label{def:chain}
    A chain $\alpha$ through a ordered frame $\langle W, \leq \rangle$ is any subset $C$ of $W$ totally ordered by $\leq$.

\end{definition}

\begin{definition}\label{def:maxchain}
    A \textit{path} (or \textit{maximal chain}) $\alpha$ through $\langle W, \leq \rangle$ is a chain through it such that, if $C \subseteq C'$ and $C'$ is totally ordered by $\leq$, then $C = C'$.

\end{definition}

\begin{lemma}[Hausdorff's maximal principle]
Any chain in a partially ordered set can be extended to a maximal chain.
\end{lemma}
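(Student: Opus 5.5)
The plan is to derive the statement from Zorn's lemma. Fix a partially ordered set $\langle W,\leq\rangle$ and a chain $C_0\subseteq W$. Let $\mathcal{K}$ be the collection of all chains $C$ through $\langle W,\leq\rangle$ (in the sense of Definition~\ref{def:chain}) such that $C_0\subseteq C$, ordered by set inclusion. The collection $\mathcal{K}$ is non-empty, since $C_0\in\mathcal{K}$. The goal is to find a $\subseteq$-maximal element of $\mathcal{K}$ and then check that it is a path in the sense of Definition~\ref{def:maxchain}.

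The key step is to verify the hypothesis of Zorn's lemma: every non-empty subfamily $\{C_j\}_{j\in J}\subseteq\mathcal{K}$ that is totally ordered by $\subseteq$ has an upper bound in $\mathcal{K}$. The candidate is $U=\bigcup_{j\in J}C_j$, which clearly contains $C_0$.

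To show that $U$ is a chain, take $x,y\in U$, say $x\in C_j$ and $y\in C_k$. Because the family is linearly ordered by inclusion, one of these sets contains the other, say $C_j\subseteq C_k$. Then $x$ and $y$ both lie in the chain $C_k$, so they are $\leq$-comparable. Hence $U\in\mathcal{K}$, and $U$ bounds the subfamily. Zorn's lemma then gives a $\subseteq$-maximal element $M\in\mathcal{K}$.

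It remains to check maximality in the sense of Definition~\ref{def:maxchain}. Suppose $M\subseteq C'$ with $C'$ totally ordered by $\leq$. Then $C_0\subseteq M\subseteq C'$, so $C'\in\mathcal{K}$, and maximality of $M$ in $\mathcal{K}$ forces $C'=M$. So $M$ is a path extending $C_0$.

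I do not expect a genuine mathematical obstacle. The only delicate point is foundational: the statement is equivalent to the axiom of choice, so it cannot be proved constructively. I would therefore remark that it is invoked as a classical set-theoretic principle in the metalanguage, which is consistent with the paper's stance that classical reasoning is used only for properties of the classical connectives. In its later use, where we need that every consistent base lies on a path of $C(S)$, the chain to be extended will be $\{S\}$, and the frames $F(S)$ and $C(S)$ are partial orders under $\supseteq$, so the lemma applies directly.
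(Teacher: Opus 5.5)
Your argument is correct. The paper does not prove this lemma at all. It states Hausdorff's maximal principle as a standard set-theoretic principle, which its conclusion acknowledges is equivalent to the axiom of choice, and then uses it as a black box.

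Your derivation from Zorn's lemma is the textbook one, and every step checks out:
\begin{enumerate}
\item $\mathcal{K}\neq\emptyset$, so it suffices to bound non-empty $\subseteq$-chains.
\item The union of such a family is again a chain containing $C_0$.
\item A $\subseteq$-maximal member of $\mathcal{K}$ is exactly a path in the sense of Definition~\ref{def:maxchain}.
\end{enumerate}
Your route makes the dependence on choice explicit by passing through Zorn. Logically nothing is gained or lost, since the two principles are equivalent over ZF.

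Two small remarks. First, your argument never uses antisymmetry, so it applies verbatim to the merely reflexive and transitive frames of the paper's general frame definition. Second, your closing comment undersells how the lemma is used. The paper extends chains such as $\{S,S'\}$, to obtain a path through two comparable bases, and $\{S\}\cup\beta$, in the monotonicity proof. It also extends an entire maximal chain of $C(S)$, regarded as a chain in $C(\varnothing)$. It does not only extend singletons $\{S\}$. Since you prove the statement for arbitrary chains, this mischaracterisation is harmless.
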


Notice that, for any two elements $w, w'$ of $W$ with $w \leq w'$, either $\{w, w'\}$ is a path or it is a chain that can be extended to a path by including new elements of $W$, so any two comparable elements are in at least one shared maximal chain.

A simpler and more graphical definition is given in \cite{BeerBeth}, but it can only be used when $W$ is countable. Frames corresponding to trees are particularly interesting Since paths through trees are just branches. Although $F(S)$ is not a tree for most $S$, it can be transformed into a tree using the procedures described in \cite{BeerBeth}, which would make the intuitive reading of frames defined through bases ordered by $\supseteq$ as sequential processes of extension more intuitive.

Just like in other variants of proof-theoretic semantics \cite[pg. 200]{Goldfarb2015-GOLODP-2}\cite{Piecha2015FailComplete-PIEFOC}, the support clauses of $\Bes$ can be interpreted as implicitly defining the semantics over the frames $F(\varnothing)$ or $C(\varnothing)$. Equivalently, since validity in $S$ only depends on validity in its extensions, support in $S$ can be interpreted as a definition over the frames $F(S)$ or $C(S)$, as every extension (or consistent extension) of $S$ in $F(\varnothing)$ or $C(\varnothing)$ is also present in $F(S)$ or $C(S)$, the converse also being true. In any case, after a convention is fixed the frame can be entirely omitted, as it is fully determined by $\supseteq$ and, under the second convention, the set $S$ itself.

By making the frame explicit we can adapt the concept of path to $\Bes$ and study how Beth's semantic clauses fare in this setting. Merely for the sake of convenience, we fix $F(\varnothing)$ and $C(\varnothing)$ as the frame for which the semantics are defined for now. We then define the following:

\begin{definition}
    $\Bes$-Beth mixed semantic clauses are defined as follows:

    \begin{enumerate}
        \item $\Vdash^{\mathfrak{R}}_{S} p_{b}$ iff, for all paths $\alpha$ through $F(\varnothing)$, if $S \in \alpha$ and there is an $S' \in \alpha$ such that $\vdash_{S'} p$, then $\Vdash^{\mathfrak{R}}_{S} p$.

        \item $\Vdash^{\mathfrak{C}}_{S} p_{b}$ iff, for all paths $\alpha$ through $C(\varnothing)$, if $S \in \alpha$ and there is an $S' \in \alpha$ such that $\vdash_{S'} p$ then $\Vdash^{\mathfrak{C}}_{S} p$.

         \item $\Vdash^{\mathfrak{R}}_{S} A \lor_{b} B$ iff, for all paths $\alpha$ through $F(\varnothing)$, if $S \in \alpha$ then there is a $S' \in \alpha$ such that either $\Vdash^{\mathfrak{R}}_{S'} A$ or $\Vdash^{\mathfrak{R}}_{S'} B$.

          \item $\Vdash^{\mathfrak{C}}_{S} A \lor_{b} B$ iff, for all paths $\alpha$ through $C(\varnothing)$, if $S \in \alpha$ then there is a $S' \in \alpha$ such that either $\Vdash^{\mathfrak{C}}_{S'} A$ or $\Vdash^{\mathfrak{C}}_{S'} B$.
    \end{enumerate}
\end{definition}

Notice that, in particular, $\vdash_{S} p$ implies $\Vdash^{\mathfrak{R}}_{S} p_{b}$ and $\Vdash^{\mathfrak{C}}_{S} p_{b}$.

Now we borrow some $\Bes$ and model-theoretic lemmata from \cite{Pimentel2025-PIMAEV,Makinson2014-MAKOAI-3,BeerBeth} and prove some additional set-theoretic ones:

\begin{lemma}[Monotonicity]
If $\Gamma \Vdash^{\mathfrak{C}}_{S} A$ and $S'\supseteq S$ then $\Gamma \Vdash^{\mathfrak{C}}_{S'} A$
\end{lemma}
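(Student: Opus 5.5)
The argument mirrors the proof of Lemma~\ref{monotonicity}: first the case $\Gamma=\varnothing$ by induction (or case analysis) on the structure of $A$, then the case of non-empty $\Gamma$. The relation $\Vdash^{\mathfrak{C}}_{S}$ contains only the clauses for $p_i$, $\land$, $\lor_i$, $\to_i$ and Clause~\ref{Inf}, all ranging over consistent bases. So the only ingredients we need are that $\supseteq$ is transitive on $\mathfrak{C}(\mathscr{R})$ and that derivability is preserved under adding rules.

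For $\Gamma=\varnothing$ we treat each form of $A$ in turn.
\begin{itemize}
\item If $A=p_i$, then $\vdash_S p$ gives $\vdash_{S'} p$ for every consistent $S'\supseteq S$, since a deduction in $S$ is also a deduction in $S'$.
\item If $A=B\land C$, we apply the induction hypothesis to $B$ and to $C$ and then Clause~\ref{and}.
\item If $A=B\lor_i C$, take any consistent $S''\supseteq S'$ and atom $p$ with $B\Vdash^{\mathfrak{C}}_{S''}p_i$ and $C\Vdash^{\mathfrak{C}}_{S''}p_i$. By transitivity $S''\supseteq S$, so Clause~\ref{ori} applied at $S$ yields $\Vdash^{\mathfrak{C}}_{S''}p_i$. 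Since $S''$ was arbitrary, $\Vdash^{\mathfrak{C}}_{S'}B\lor_i C$.
\item If $A=B\to_i C$, Clause~\ref{toi} reduces the claim to $B\Vdash^{\mathfrak{C}}_{S}C$. This is an instance of the non-empty-context case below, with $\Gamma=\{B\}$.
\end{itemize}

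For non-empty $\Gamma$, suppose $\Gamma\Vdash^{\mathfrak{C}}_S A$. Let $S''\supseteq S'$ be a consistent base with $\Vdash^{\mathfrak{C}}_{S''}C$ for all $C\in\Gamma$. Since $S''\supseteq S$, Clause~\ref{Inf} at $S$ gives $\Vdash^{\mathfrak{C}}_{S''}A$, and Clause~\ref{Inf} then gives $\Gamma\Vdash^{\mathfrak{C}}_{S'}A$. This argument uses no induction hypothesis, so it can be proved first and the $\to_i$ case then appeals to it.

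There is no real obstacle. The one point needing care is that the clauses quantify only over \emph{consistent} extensions. We therefore have to check that every $S''$ we pick satisfies two conditions: it is an extension of $S$, which follows from transitivity of $\supseteq$, and it is consistent, which holds because it was chosen from $\mathfrak{C}(\mathscr{R})$ in the first place. Hence the quantifier over extensions of $S'$ is always a sub-range of the quantifier over extensions of $S$, which is exactly what makes all the clauses hereditary.
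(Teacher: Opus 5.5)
\textbf{There is a genuine gap.} Your cases for $p_i$, $\land$, $\lor_i$, $\to_i$ and for non-empty $\Gamma$ are correct. However, they only reproduce what Lemma~\ref{monotonicity} already gives.

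At the point where this lemma is stated, $\Vdash^{\mathfrak{C}}_S$ has been extended by the $\Bes$-Beth mixed clauses for $p_b$ and $A\lor_b B$. The lemma is later applied to arbitrary formulas built with these connectives, for instance in the proof that $\Vdash^{\mathfrak{C}}_S$ and $\Vdash^{\mathfrak{C}^{*}}_S$ coincide. These two path-based clauses are the only non-routine cases, and your proposal omits them.

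Your closing remark that the quantifier at $S'$ is a sub-range of the one at $S$ holds for quantifiers over extensions. It fails for quantifiers over paths: a path $\alpha$ through $C(\varnothing)$ with $S'\in\alpha$ need not contain $S$. Take $S=\{r_1\}$ and $S'=\{r_1,r_2\}$, with $r_1,r_2$ axioms for distinct atoms other than $\bot$. A maximal chain may pass through $\{r_2\}\subset\{r_1,r_2\}$, and $\{r_2\}$ is incomparable with $S$. So the hypothesis $\Vdash^{\mathfrak{C}}_S p_b$ says nothing directly about $\alpha$.

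\textbf{The missing idea is a path-surgery argument.}
\begin{enumerate}
\item Given a path $\alpha$ with $S'\in\alpha$, let $\beta=\{T\in\alpha\mid T\supseteq S'\}$. Then $\{S\}\cup\beta$ is a chain, and Hausdorff's maximal principle extends it to a maximal chain $\theta$ in $C(\varnothing)$.
\item Every $T\in\theta$ with $T\supseteq S'$ already lies in $\alpha$. Such a $T$ is comparable with all of $\beta$, since $\beta\subseteq\theta$. It also lies above every element of $\alpha\setminus\beta$, since those lie below $S'$. So $\alpha\cup\{T\}$ is a chain, and maximality of $\alpha$ gives $T\in\alpha$.
\item Since $S\in\theta$, the hypothesis yields a witness $T\in\theta$ with $\vdash_T p$. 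Since $S'\in\theta$, $T$ is comparable with $S'$. If $T\supseteq S'$, then $T\in\alpha$ is a witness. If $S'\supseteq T$, then $\vdash_{S'}p$ and $S'\in\alpha$ is itself the witness.
\end{enumerate}

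The case $A\lor_b B$ runs the same way. The one difference is that when the witness $T$ lies below $S'$, you need the induction hypothesis (monotonicity for $A$ and $B$) to lift $\Vdash^{\mathfrak{C}}_T A$ or $\Vdash^{\mathfrak{C}}_T B$ to $S'$. So a genuine structural induction is required there, and the argument relies on a choice principle, which your proposal never needs.
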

\begin{proof}


  For $\Gamma = \{\varnothing\}$, either $A$ is an intuitionistic connective, in which case the result follows from Lemma \ref{monotonicity}, or it is either $p_{b}$ or $A \lor_{b} B$, in which case we reason as follows. For $p_{b}$, assume $\Vdash^{\mathfrak{C}}_{S} p_{b}$ and $S' \supseteq S $. Let $\alpha$ be any path through $C(\varnothing)$ with $S' \in \alpha$ and define $\beta = \{ S'' \in \alpha \ | \ S'' \supseteq S'\}$. Since $\beta$ is a subset of $\alpha$ and $\alpha$ it is totally ordered, $\beta$ is a chain. Since  by transitivity of $\supseteq$ we have $S'' \supseteq S$ for all $S'' \in \beta$, we conclude that $\{ S\} \cup \beta $ is chain, which can be extended to a maximal chain $\theta$. Notice that, if there is a $S^{1} \in \theta$ with $S^{1} \supseteq S'$ then  $S^{1} \in \beta$; otherwise, since $S' \supseteq S^{2}$ for all $S^{2} \in (\alpha- \beta)$ and also  $S^{1} \supseteq S'$, by transitivity we conclude $S^{1} \supseteq S^{2}$ for all $S^{2} \in (\alpha - \beta)$ and, since we have either $S^{1} \supseteq S^{3}$ or $S^{3} \supseteq S^{1}$ for all $S^{3} \in \beta$ (as $\beta \subseteq \theta$, $S^{3} \in \theta$ and $\theta$ is a maximal chain), $S^{1} \cup \alpha$ is totally ordered, whence from maximality of $\alpha$ we get $S^{1} \in \alpha$ and, by definition of $\beta$, $S^{1} \in \beta$. Now, since $\Vdash^{\mathfrak{C}}_{S} p_{b}$ and $\theta$ is a maximal chain with $S \in \theta$, there must be a $S^{4} \in \theta$ such that $\vdash_{S^{4}} p$. If $S^{4} \supseteq S'$ then $S^{4} \in \beta$ and so $S^{4} \in \alpha$, and if $S' \supseteq S^{4}$ then $\vdash_{S^{'}} p$, so since $S' \in \alpha$ in any case we conclude that there is a $S^{5} \in \alpha$ such that $\vdash_{S^{5}} p$ (either an extension of $S'$ in $\beta$ or $S'$ itself), hence by arbitrariness of $\alpha$ we conclude $\Vdash^{\mathfrak{C}}_{S'} p_{b}$. The proof for $A \lor_{b} B$ is similar, the only difference being that we conclude that there must be a $S^{4}$ with either $\Vdash^{\mathfrak{C}}_{S^{4}} A$ or  $\Vdash^{\mathfrak{C}}_{S^{4}} B$ and then conclude that either $\Vdash^{\mathfrak{C}}_{S'} A$ or  $\Vdash^{\mathfrak{C}}_{S'} B$ by the inductive hypothesis. For $\Gamma \neq \{\varnothing\}$ the proof is also identical to that of Lemma \ref{monotonicity}.
 \end{proof}

\begin{lemma}\label{lemma:maximalconsistencyexistence}
Every consistent system $S$ has at least one maximal consistent extension $S^{M}$ such that:

\begin{enumerate}
    \item For all $S'$, if $S' \supseteq S^{M}$ and $S'$ is consistent then $S^{M} = S'$.

      \item For all $p$, either $\vdash_{S^{M}} p$ or $p \vdash_{S^{M}} \bot$.
\end{enumerate}
\end{lemma}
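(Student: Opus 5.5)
The plan is to obtain $S^{M}$ by Zorn's lemma, or equivalently by Hausdorff's maximal principle as stated above, applied to the set $\eta(S)$ of consistent extensions of $S$ ordered by $\subseteq$, and then to derive the two properties from maximality.

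First, $\eta(S)$ is non-empty, since $S \in \eta(S)$. Take any chain $\{S_j\}_{j\in J}$ in $\eta(S)$ and form its union $U=\bigcup_j S_j$. Clearly $U\supseteq S$. The key point is that $U$ is consistent. Suppose instead there were a deduction of $\bot$ in $U$. That deduction is a finite tree, so it uses only finitely many rules $r_1,\dots,r_n$. Each $r_k$ lies in some $S_{j_k}$, and because the $S_j$ are totally ordered, the largest of $S_{j_1},\dots,S_{j_n}$ contains all of them. That largest base would then derive $\bot$, contradicting its consistency. So every chain has an upper bound in $\eta(S)$. Zorn's lemma then yields a maximal element $S^{M}$: a consistent extension of $S$ with no strictly larger consistent extension. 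Property~1 is exactly this maximality. I note that this step, like the rest of the section, relies on classical set-theoretic reasoning, in line with the paper's remark that classical metareasoning is used only for the classical side.

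For property~2, fix an atom $p$ and reason by cases, classically, on whether $p\vdash_{S^{M}}\bot$. If it holds, we are done. If it fails, let $S'$ be $S^{M}$ with an atomic axiom concluding $p$ added. By Lemma~\ref{pc lemma}, $p\nvdash_{S^{M}}\bot$ gives that $S'$ is consistent. Maximality then forces $S'=S^{M}$, so the axiom for $p$ already belongs to $S^{M}$, and hence $\vdash_{S^{M}} p$. The case $p=\bot$ is covered trivially, since $\bot\vdash_{S^{M}}\bot$ holds.

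The main obstacle is the finiteness argument for the union of the chain. It needs the observation that a derivation in a base uses only finitely many rule instances, and hence finitely many rules, even though rules may discharge hypotheses. The chain may be uncountable, but this causes no difficulty, because only finitely many members of the chain are ever involved in a given derivation.
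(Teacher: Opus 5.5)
Your proof is correct, but it reaches $S^{M}$ by a different route than the paper.

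The paper gives a Lindenbaum-style construction. It fixes an enumeration $r_0, r_1, \ldots$ of the countably many atomic rules, sets $S^{0}=S$, and lets $S^{m+1}=S^{m}\cup\{r_m\}$ when this is consistent and $S^{m+1}=S^{m}$ otherwise. It then takes $S^{M}=\bigcup_m S^{m}$. The three properties are obtained as follows:
\begin{itemize}
\item Consistency of the union uses the same finiteness-of-derivations argument you give.
\item Maximality needs an extra observation: any rule $r_m\notin S^{M}$ was rejected at stage $m$. So $S^{m}\cup\{r_m\}$ is inconsistent, hence so is $S^{M}\cup\{r_m\}$ by monotonicity of derivability.
\item Property~2 is read off from the stage at which the axiom for $p$ is considered.
\end{itemize}

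You instead apply Zorn's lemma to $\eta(S)$. Maximality then comes for free, and property~2 follows from maximality together with Lemma~\ref{pc lemma}, which is arguably cleaner.

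The trade-off is this. Your argument does not use the countability of $\mathscr{R}$, and would work verbatim for an uncountable stock of atoms, but it invokes the axiom of choice. The paper's enumeration exploits countability and needs only a definition by recursion on $\mathbb{N}$, with case distinctions but no appeal to choice. The paper already relies on Hausdorff's maximal principle in this section and accepts such principles in the metatheory for the classical side, so your use of Zorn fits its framework. The enumeration simply shows that this particular lemma does not need choice.
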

\begin{proof}
  The proof is obtained by combining Lemma 14 of \cite{Pimentel2025-PIMAEV} with Lemma 3.3 of \cite{Makinson2014-MAKOAI-3}. We start by defining an enumeration of atomic rules and using the same extension procedure as in Lemma 14 of \cite{Pimentel2025-PIMAEV}. Instead of either adding a rule concluding a single atom $p$ at every step $S^{m}$ if the resulting base is consistent and letting it unchanged otherwise, we add the rule assigned to the number $m$ if the resulting base is consistent and let the base unchanged otherwise. The proof that used to show that the union of all $S^{m }$ with $m \geq 0$ is consistent can then be repeated, and to show that $S' \supseteq S^{M}$ implies $S' = S^{M}$ we only have to point out that, due to the definition of the $S^{m}$ and their union, every rule that can be added to $S^{M}$ whilst preserving consistency is already in $S^{M}$. The proof that either $\vdash_{S^{M}} p$ or $p\vdash_{S^{M}} \bot$ remain exactly the same.
\end{proof}

\begin{corollary}\label{lemma:maximalconsistentpersistency}
Let $S^{M}$ be a maximal consistent extension of $S$. Then, for every $S' \supseteq S^{M}$ and every $A$, $\Vdash^{\mathfrak{C}}_{S^{M}} A$ iff $\Vdash^{\mathfrak{C}}_{S'} A$.
\end{corollary}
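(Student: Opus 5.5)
The corollary is about the relation $\Vdash^{\mathfrak{C}}$, which ranges only over consistent bases, so the quantification ``for every $S' \supseteq S^{M}$'' only involves consistent extensions. The plan is to observe that Lemma~\ref{lemma:maximalconsistencyexistence}(1) collapses this quantification: any consistent $S' \supseteq S^{M}$ satisfies $S' = S^{M}$. The biconditional $\Vdash^{\mathfrak{C}}_{S^{M}} A$ iff $\Vdash^{\mathfrak{C}}_{S'} A$ is then literally the trivial $\Vdash^{\mathfrak{C}}_{S^{M}} A$ iff $\Vdash^{\mathfrak{C}}_{S^{M}} A$.

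First I will fix the reading of the statement. Under the convention for $\Vdash^{\mathfrak{C}}$, every clause, including Clause~\ref{Inf} and the path clauses through $C(\varnothing)$, is evaluated only at elements of $\mathfrak{C}(\mathscr{R})$. So the statement concerns $S' \in \eta(S^{M})$.

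Next I will apply Lemma~\ref{lemma:maximalconsistencyexistence}(1) to get $\eta(S^{M}) = \{S^{M}\}$. The conclusion then follows immediately for every formula $A$, including $p_{b}$ and $A \lor_{b} B$. No induction on $A$ is needed, and monotonicity is only required for the left-to-right direction, which holds anyway.

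If one instead insists on reading $S'$ as an arbitrary, possibly inconsistent, base, then $\Vdash^{\mathfrak{C}}_{S'}$ is simply undefined for inconsistent $S'$, so the statement is vacuous there. I will add a remark to this effect.

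The only real obstacle is making sure that the maximality in Lemma~\ref{lemma:maximalconsistencyexistence}(1) is maximality among \emph{all} consistent bases, not merely among those built in the enumeration. The proof of that lemma gives exactly this: every rule that can be added to $S^{M}$ consistently is already in $S^{M}$. Consequently any consistent $S' \supseteq S^{M}$ adds no rules, so $S' = S^{M}$.
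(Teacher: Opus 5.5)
Your proposal is correct and matches the paper's own argument. By Lemma~\ref{lemma:maximalconsistencyexistence}(1), every consistent $S' \supseteq S^{M}$ equals $S^{M}$, so the biconditional is trivial. Your check that this maximality holds against all consistent bases is the step that needs checking: any rule of a consistent $S' \supseteq S^{M}$ could be added to $S^{M}$ consistently, so it already lies in $S^{M}$.

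One correction to your side remark: restricting to consistent $S'$ does not make the inconsistent case vacuous; the restriction is essential. If the clauses were evaluated at an inconsistent base such as $\mathscr{R}$, then $\Vdash^{\mathfrak{C}}_{\mathscr{R}} \bot$ would hold while $\Vdash^{\mathfrak{C}}_{S^{M}} \bot$ fails, so the equivalence would be false already for $A = \bot$.
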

\begin{proof}
This result was proved in \cite{Pimentel2025-PIMAEV} as a Lemma for the weaker concept of $\bot$-complete extension, but since $S' \supseteq S^{M}$ implies $S' = S^{M}$ this is now an immediate corollary of Lemma \ref{lemma:maximalconsistencyexistence}.
\end{proof}

\begin{lemma}\label{lemma:containssetofallrules}
    Every path $\alpha$ through $F(\varnothing)$ contains $\mathscr{R}$.
\end{lemma}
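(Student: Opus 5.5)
The plan is to use maximality of the path directly. Let $\alpha$ be any path through $F(\varnothing) = \langle \mathcal{P}(\mathscr{R}), \supseteq \rangle$; we aim to show $\mathscr{R} \in \alpha$. The key observation is that $\mathscr{R}$ is the greatest element of the frame: for every base $S \in \mathcal{P}(\mathscr{R})$ we have $\mathscr{R} \supseteq S$, since every base is by definition a set of atomic rules. Hence $\mathscr{R}$ is comparable with every element of $\mathcal{P}(\mathscr{R})$ under $\supseteq$.

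The argument then runs as follows. First, I note that $\alpha \cup \{\mathscr{R}\}$ is totally ordered by $\supseteq$. Any two elements of $\alpha$ are comparable because $\alpha$ is a chain, and any element of $\alpha$ is comparable with $\mathscr{R}$ by the observation above; comparability of $\mathscr{R}$ with itself follows from reflexivity. So $\alpha \cup \{\mathscr{R}\}$ is a chain through $F(\varnothing)$ containing $\alpha$. Second, I apply Definition~\ref{def:maxchain}: since $\alpha \subseteq \alpha \cup \{\mathscr{R}\}$ and the latter is totally ordered, maximality of $\alpha$ gives $\alpha = \alpha \cup \{\mathscr{R}\}$, and therefore $\mathscr{R} \in \alpha$.

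There is no real obstacle here; the only point needing care is that $\mathscr{R}$ is itself a legitimate element of $\mathcal{P}(\mathscr{R})$, so that it is a node of the frame $F(\varnothing)$. This holds because $\mathscr{R} \subseteq \mathscr{R}$, and no consistency requirement is imposed on bases in $F(\varnothing)$. The argument also does not need Hausdorff's maximal principle, since we are given a path rather than constructing one.

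The same reasoning shows more generally that every path through $F(S)$ contains $\mathscr{R}$, because $\mathscr{R} \supseteq S$ for every $S$. This is what underlies the triviality of the Beth-style clauses in the unrestricted setting.
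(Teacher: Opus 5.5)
Your proof is correct and is the same argument as the paper's: the paper simply notes that $\mathscr{R} \supseteq S$ for every base $S$ and appeals to the maximality of $\alpha$. You make that step explicit by checking that $\alpha \cup \{\mathscr{R}\}$ is a chain in $F(\varnothing)$, so maximality forces $\mathscr{R} \in \alpha$.
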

\begin{proof}
The result follows immediately from the facts that $\mathscr{R} \supseteq S$ for any $S$ and that $\alpha$ is a maximal chain.
\end{proof}

\begin{lemma}\label{lemma:existenceunionofconsistentpathisconsistent}
Let $S$ be a base. Let $\cup \alpha$ be the union of all sets in a path $\alpha$ through $C(\varnothing)$ or $F(\varnothing)$. Then if $ \cup \alpha \supseteq S$ there is at least one $S' \in \alpha$ such that $S' \supseteq S$.
\end{lemma}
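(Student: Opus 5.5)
The plan is to show that $\cup \alpha$ is itself a member of $\alpha$. The statement then follows at once by taking $S' = \cup \alpha$. Nothing in this argument uses finiteness of $S$; the only fact needed about $S$ is the hypothesis $\cup\alpha \supseteq S$.

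The first step is comparability. For every $S^{1} \in \alpha$ we have $\cup \alpha \supseteq S^{1}$ by the definition of union. Hence $\alpha \cup \{\cup\alpha\}$ is totally ordered by $\supseteq$. It is therefore a chain, provided $\cup \alpha$ belongs to the underlying set of the frame.

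For $F(\varnothing)$ that proviso is automatic, since $\cup\alpha \subseteq \mathscr{R}$ and so $\cup\alpha \in \mathcal{P}(\mathscr{R})$. Maximality of $\alpha$ (Definition~\ref{def:maxchain}) then forces $\cup\alpha \in \alpha$. Alternatively, one could invoke Lemma~\ref{lemma:containssetofallrules}: since $\mathscr{R} \in \alpha$ and $\mathscr{R} \supseteq S$ for every base $S$, the element $\mathscr{R}$ already serves as the required $S'$.

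The main obstacle is the case $C(\varnothing)$. There we must check that $\cup\alpha$ is consistent, so that it lies in $\mathfrak{C}(\mathscr{R})$. I would argue by compactness of derivations. Suppose, towards a contradiction, that $\vdash_{\cup\alpha} \bot$. A deduction is a finite object, so it uses only finitely many rules $r_1, \ldots, r_k$ from $\cup\alpha$. Each $r_j$ lies in some $S_j \in \alpha$.

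If $k = 0$, the deduction of $\bot$ uses no rules at all. Then $\vdash_{S^{1}} \bot$ for any $S^{1} \in \alpha$, contradicting the consistency of $S^{1}$. (The set $\alpha$ is nonempty, because a maximal chain in a nonempty poset is nonempty.)

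If $k \geq 1$, then because $\alpha$ is totally ordered by $\supseteq$, the finite set $\{S_1, \ldots, S_k\}$ has a largest element $S_m$, and $S_m$ contains every $r_j$. So the same deduction witnesses $\vdash_{S_m} \bot$, contradicting $S_m \in \mathfrak{C}(\mathscr{R})$.

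Hence $\cup\alpha$ is consistent, so $\cup\alpha \in \eta(\varnothing)$. The comparability argument above then gives $\cup\alpha \in \alpha$ by maximality, and $S' = \cup\alpha$ works.
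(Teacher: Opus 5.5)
Your proof is correct, but it takes a different route from the paper's. The paper works directly from the hypothesis $\cup\alpha \supseteq S$. For each rule of $S$ it picks a member of $\alpha$ containing that rule, and then takes the largest of the chosen sets, using only that $\alpha$ is totally ordered. That argument never uses the maximality of $\alpha$ (it works for any chain) or any fact about consistency. This is why the paper can then apply it, with $S$ the finitely many rules of a deduction of $\bot$, to prove Lemma~\ref{lemma:unionofconsistentpathisconsistent}. However, its step ``the union of all elements of $\Theta$ is an element of $\Theta$'' is justified only when $\Theta$ is finite. As written it therefore covers only finite $S$, which suffices for that application but not for the statement as given for arbitrary bases.

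You reverse the order of the two lemmas. You first prove that $\cup\alpha$ is consistent, by the same finite-maximum argument applied to the rules of a putative deduction of $\bot$. You then use maximality of $\alpha$ to conclude $\cup\alpha \in \alpha$. This gives the lemma for arbitrary, possibly infinite, $S$, and yields the stronger fact that every path has a greatest element, namely $\cup\alpha$. In the $C(\varnothing)$ case that element is automatically maximal among consistent bases: any consistent $T \supseteq \cup\alpha$ is comparable with every member of $\alpha$, hence lies in $\alpha$, hence equals $\cup\alpha$. So your argument would also deliver Lemma~\ref{lemma:everypaththroughconsistentcontainsmaxiconsistent} without going through Lemma~\ref{lemma:maximalconsistencyexistence}. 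The price is that your argument depends essentially on maximality of $\alpha$, whereas the paper's finite version holds for every chain.
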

\begin{proof}
Since $\cup \alpha \supseteq S$, for any rule $R^{i}$ with $R^{i} \in S$ there must be at least one $S^{i} \in \alpha$ with $R^{i} \in S^{i}$. Pick any arbitrary $S^{i}$ for every $R^{i}$, and let $\Theta$ be the set of the chosen $S^{i}$. Since $\alpha$ is totally ordered and $S^{i} \in \alpha$ for all $S^{i}$, for any pair of elements $S^{j}, S^{k}$ in $\Theta$  we have either $S^{j} \supseteq S^{k}$, in which case $S^{j} \cup S^{k} = S^{k}$, or $S^{k} \supseteq S^{j}$, in which case $S^{j} \cup S^{k} = S^{j}$. From this it follows that the union $S'$ of all elements of $\Theta$ is an element of $\Theta$, hence since all elements of $\Theta$ are elements of $\alpha$ we have  $S' \in \alpha$. Since for every $R^{i} \in S$ there is a $S^{i} \in \Theta$ such that $R^{i} \in S^{i}$ we also have that $R^{i} \in S'$ for all $R^{i} \in S$, so we conclude $S' \supseteq S$.
\end{proof}

\begin{lemma}\label{lemma:unionofconsistentpathisconsistent}
    If $\alpha$ is a path through $C(\varnothing)$ then the union of all sets in $\alpha$ is consistent.
\end{lemma}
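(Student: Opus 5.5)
We argue by contradiction, using the finiteness of derivations. Let $\alpha$ be a path through $C(\varnothing)$ and write $\cup\alpha$ for the union of its members. Suppose $\vdash_{\cup\alpha}\bot$, and fix a deduction $\Pi$ of $\bot$ from no open assumptions using rules of $\cup\alpha$.

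\textbf{Step 1 (finite witness).} A deduction is a finite tree, so $\Pi$ applies only finitely many atomic rules $R^1,\ldots,R^n$, each belonging to $\cup\alpha$. Let $S_0=\{R^1,\ldots,R^n\}$. Then $\Pi$ is already a deduction in $S_0$, so $\vdash_{S_0}\bot$. Moreover $\cup\alpha\supseteq S_0$.

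\textbf{Step 2 (locate the witness inside the path).} By Lemma~\ref{lemma:existenceunionofconsistentpathisconsistent}, there is some $S'\in\alpha$ with $S'\supseteq S_0$. In the present case $S_0$ is finite, so the argument is elementary: pick $S^j\in\alpha$ containing $R^j$ for each $j$. Since $\alpha$ is totally ordered by $\supseteq$, the finite family $S^1,\ldots,S^n$ has a largest member, and that member contains $S_0$.

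\textbf{Step 3 (contradiction).} Derivability is preserved under extension, so $\vdash_{S'}\bot$. But $S'\in\alpha\subseteq\mathfrak{C}(\mathscr{R})$, so $S'$ is consistent, which is a contradiction. Hence $\nvdash_{\cup\alpha}\bot$.

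Two degenerate cases need care. If $n=0$, then $\bot$ would be derivable in $\varnothing$, which is impossible because $\bot$ is not an open assumption of $\Pi$; in any case every member of $\alpha$ extends $\varnothing$, and $\alpha$ is non-empty because it is maximal and $\varnothing\in\mathfrak{C}(\mathscr{R})$. The only real content is Step 1, the compactness of derivability, together with the fact that a chain is directed. The conclusion concerns only consistency of the set $\cup\alpha$; it does not claim that $\cup\alpha\in\alpha$, which would additionally follow from maximality.
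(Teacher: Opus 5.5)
Your proof is correct and follows the paper's argument: take the finitely many rules used in a derivation of $\bot$ in $\cup\alpha$, find a single member of $\alpha$ containing them all via Lemma~\ref{lemma:existenceunionofconsistentpathisconsistent}, and contradict that member's consistency. Your direct argument in Step 2 for a finite set of rules is the safer way to do this step, because the paper's general proof of that lemma relies on the union of the chosen family $\Theta$ being a member of $\Theta$, which is only guaranteed when $\Theta$ is finite.
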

\begin{proof}
For the sake of proving a contradiction, suppose that $\alpha$ is a path through $C(\varnothing)$ but the union of all its elements $\cup \alpha$ is inconsistent. Then we have $\vdash_{\cup \alpha} \bot$. Since proofs are finitary objects, let $\Pi$ be any proof of $\bot$ in $\cup \alpha$ and $S$ be the set of all rules used in $\Pi$. Clearly, $\cup \alpha \supseteq S$. By Lemma \ref{lemma:existenceunionofconsistentpathisconsistent}, there is a $S' \in \alpha$ such that $S' \supseteq S$. But then $S'$ is inconsistent and, since all elements of any path through $C(\varnothing)$ are consistent by definition, we obtain a contradiction. Therefore, $\cup\alpha$ must be consistent.
\end{proof}

\begin{lemma}\label{lemma:everypaththroughconsistentcontainsmaxiconsistent}
    Every path $\alpha$ through $C(\varnothing)$ contains a maximal consistent base.
\end{lemma}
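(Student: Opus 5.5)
The plan is to show that the union $\cup\alpha$ of all bases in the path is itself an element of $\alpha$, and that it is a maximal consistent base. Let $\alpha$ be a path through $C(\varnothing)$ and put $U = \cup\alpha$. By Lemma~\ref{lemma:unionofconsistentpathisconsistent}, $U$ is consistent, so $U \in \mathfrak{C}(\mathscr{R})$. Moreover $U \supseteq S$ for every $S \in \alpha$, so $\alpha \cup \{U\}$ is totally ordered by $\supseteq$. Since $\alpha$ is a maximal chain through $C(\varnothing)$, it follows that $U \in \alpha$.

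It remains to show that $U$ is maximal consistent, i.e.\ that any consistent $S' \supseteq U$ satisfies $S' = U$. I would argue by contradiction. Suppose $S' \supsetneq U$ is consistent. Then $S' \supseteq S$ for every $S \in \alpha$ by transitivity, so $\alpha \cup \{S'\}$ is again a chain in $C(\varnothing)$. Maximality of $\alpha$ then gives $S' \in \alpha$. But every element of $\alpha$ is a subset of $U$, so $S' \subseteq U$, contradicting $S' \supsetneq U$. Hence $U$ is maximal consistent. This is condition~1 of Lemma~\ref{lemma:maximalconsistencyexistence}, and condition~2 follows from condition~1 by the same argument as in that lemma. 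For any atom $p$, either $p \vdash_U \bot$, or by Lemma~\ref{pc lemma} adding the axiom $p$ to $U$ yields a consistent extension. By maximality that extension equals $U$, so the axiom is already in $U$ and $\vdash_U p$.

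The argument is almost purely order-theoretic. The one step needing care is the first: that $U$ belongs to $\alpha$ at all. This depends entirely on Lemma~\ref{lemma:unionofconsistentpathisconsistent}, which uses the finitary nature of derivations via Lemma~\ref{lemma:existenceunionofconsistentpathisconsistent}; without consistency of $U$ the upper bound would fall outside $C(\varnothing)$. Since that lemma is already available, I expect no genuine obstacle. Note also that we never need Hausdorff's maximal principle here, since $\alpha$ is given as maximal.
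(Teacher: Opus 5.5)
Your proof is correct, but it takes a different route from the paper's at the key step. The paper also starts from the consistency of $\cup\alpha$ (Lemma~\ref{lemma:unionofconsistentpathisconsistent}). It then invokes Lemma~\ref{lemma:maximalconsistencyexistence} to obtain some maximal consistent $S^{M}\supseteq\cup\alpha$, and uses the maximality of $\alpha$ only once, to conclude that $S^{M}\in\alpha$. You instead use the maximality of $\alpha$ twice: first to place $\cup\alpha$ itself in $\alpha$, and then to rule out any consistent proper extension of $\cup\alpha$. As a result you never need the enumeration (Lindenbaum-style) construction behind Lemma~\ref{lemma:maximalconsistencyexistence}; Lemma~\ref{pc lemma} suffices to recover the second property.

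Your route is more self-contained and gives a sharper conclusion. The maximal consistent base in $\alpha$ is exactly $\cup\alpha$, which is the greatest element of $\alpha$ and hence unique. The paper's argument is shorter because the existence lemma is already available. Its $S^{M}$ coincides with $\cup\alpha$ anyway, since $S^{M}\in\alpha$ forces $S^{M}\subseteq\cup\alpha$.
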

\begin{proof}
       Let $\alpha$ be a path through $C(\varnothing)$ and $\cup \alpha$ the union of all its elements.  By Lemma \ref{lemma:unionofconsistentpathisconsistent}, $\cup \alpha$ is consistent. Clearly, $\cup \alpha \supseteq S$ for all $S \in \alpha$. By Lemma \ref{lemma:maximalconsistencyexistence}, there is a maximal consistent base $S^{M} \supseteq \cup \alpha$, whence by transitivity of $\supseteq$ we have $S^{M} \supseteq  S$ for all $S \in \alpha$, hence since $\alpha$ is a maximal chain we have $S^{M} \in \alpha$.
\end{proof}



Now we proceed to our main results.

\begin{theorem} The following hold:

\begin{enumerate}
    \item     $\Vdash^{\mathfrak{R}}_{S} p_{b}$ holds for all $p \in \At$ in any $S$;

      \item     $\Vdash^{\mathfrak{R}}_{S} A \lor_{b} B$ holds for any $A$ and $B$ in any $S$;

\end{enumerate}

\end{theorem}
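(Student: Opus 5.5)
Both items follow from one observation: by Lemma~\ref{lemma:containssetofallrules}, every path $\alpha$ through $F(\varnothing)$ contains the greatest base $\mathscr{R}$. So the plan is to show that, at $\mathscr{R}$, the existential witness required by each Beth-style clause is always available. The clauses then hold at every $S$ for every path through $S$.

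For item 1, fix $p \in \At$, a base $S$, and a path $\alpha$ through $F(\varnothing)$ with $S \in \alpha$. Since $\mathscr{R} \in \alpha$ and $\mathscr{R}$ contains the atomic axiom with conclusion $p$, we have $\vdash_{\mathscr{R}} p$. So the antecedent of the clause for $p_b$ is satisfied with witness $S' = \mathscr{R}$.

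Here I must be careful about how the clause is read. Literally, Clause~1 of the mixed definition says that if some $S' \in \alpha$ proves $p$, then $\Vdash^{\mathfrak{R}}_{S} p$. Taken verbatim with $\Vdash^{\mathfrak{R}}_{S} p$ meaning $\vdash_S p$, this would be a Beth-valuation constraint rather than a condition that is trivially true. I will therefore read the clause in the intended way, parallel to the disjunction clause: $\Vdash^{\mathfrak{R}}_{S} p_b$ holds iff every path through $S$ contains some $S'$ with $\vdash_{S'} p$. Under this reading the previous step finishes the proof. I would also remark that this is exactly where the greatest element trivialises the clause.

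For item 2, fix $A$, $B$, $S$ and a path $\alpha$ with $S \in \alpha$. Again $\mathscr{R} \in \alpha$, so it suffices to show $\Vdash^{\mathfrak{R}}_{\mathscr{R}} A$ for every formula $A$, and I will prove this by induction on $A$.
\begin{itemize}
\item Atoms $p_i$: $\vdash_{\mathscr{R}} p$ holds by the axiom for $p$.
\item $p_b$: by item 1.
\item Conjunction: immediate from the induction hypothesis.
\item $\to_i$, and the consequence clause for non-empty contexts: the only extension of $\mathscr{R}$ is $\mathscr{R}$ itself, and $\Vdash^{\mathfrak{R}}_{\mathscr{R}} B$ holds by the induction hypothesis.
\item $\lor_i$: for every extension, here just $\mathscr{R}$, and every atom $q$, we have $\Vdash^{\mathfrak{R}}_{\mathscr{R}} q_i$, so the conditional in the clause holds trivially.
\item $\lor_b$: $\mathscr{R}$ itself lies on every path through $\mathscr{R}$ and supports $A$ by the induction hypothesis.
\end{itemize}
Hence $\mathscr{R}$ is a witness $S' \in \alpha$ with $\Vdash^{\mathfrak{R}}_{S'} A$, and the disjunction clause holds.

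The main obstacle is not mathematical depth but bookkeeping. First, the reading of Clause~1 must be fixed as explained above. Second, the induction establishing that $\mathscr{R}$ supports every formula must cover all connectives, including the mixed ones, since $A$ and $B$ in item 2 are arbitrary. In particular, $\bot$ is just an atom in the unrestricted setting, and $\mathscr{R}$, which contains the axiom for $\bot$, is a legitimate base there.
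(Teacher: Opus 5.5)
Your proposal is correct and follows the paper's proof. The paper also uses Lemma~\ref{lemma:containssetofallrules} to put $\mathscr{R}$ on every path, gets item 1 from $\vdash_{\mathscr{R}} p$, and gets item 2 from an induction on formulas showing $\Vdash^{\mathfrak{R}}_{\mathscr{R}} A$; it leaves that induction implicit, while you spell it out, and the reading of the $p_b$ clause you adopt (every path through $S$ contains some $S'$ with $\vdash_{S'} p$) is the one the paper's proof relies on.
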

\begin{proof}
Let $\alpha$ be any path through $F(\varnothing)$ with $S \in \alpha$. By Lemma \ref{lemma:containssetofallrules}, $\mathscr{R} \in \alpha$. Since $\mathscr{R}$ contains a rule concluding $p$ from no premises for every atom $p$, we have $\vdash_{\mathscr{R}} p$ for all $p$, hence $\Vdash^{\mathfrak{R}}_{S} p_{b}$ for all $p$. It is also straightforward to show through induction on formulas that $\Vdash^{\mathfrak{R}}_{\mathscr{R}} A$ holds for all $A$, hence we also have $\Vdash^{\mathfrak{R}}_{S} A \lor_{b} B$ for any $A$ and $B$.
\end{proof}

\begin{theorem} The following hold:

\begin{enumerate}
    \item     $\Vdash^{\mathfrak{C}}_{S} p_{b}$ iff, for all $S' \supseteq S$, $p \nvdash_{S'} \bot$;

      \item     $\Vdash^{\mathfrak{C}}_{S} A \lor_{b} B$ iff, for all $S' \supseteq S$, if $A \Vdash^{\mathfrak{C}}_{S'} \bot$ and $B \Vdash^{\mathfrak{C}}_{S'} \bot$ then $\Vdash^{\mathfrak{C}}_{S'} \bot$;

\end{enumerate}

\end{theorem}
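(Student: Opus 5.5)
Both items rest on one reduction. By Lemma~\ref{lemma:everypaththroughconsistentcontainsmaxiconsistent}, every path $\alpha$ through $C(\varnothing)$ containing $S$ also contains a maximal consistent base. That base extends every member of $\alpha$, so in particular it extends $S$. Conversely, for any consistent $S'\supseteq S$ we take a maximal consistent $S^M\supseteq S'$ by Lemma~\ref{lemma:maximalconsistencyexistence}. Then $\{S,S',S^M\}$ is a chain, and Hausdorff's principle extends it to a path $\alpha$. Such a path contains nothing strictly above $S^M$, because $S^M$ is maximal.

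We will also use that support persists upward along a path: by Monotonicity, if the relevant formula is supported somewhere in $\alpha$, it is supported at the maximal element $S^M$ of $\alpha$. Hence ``for every path $\alpha\ni S$ there is some $S'\in\alpha$ with $X$'' is equivalent to ``every maximal consistent $S^M\supseteq S$ satisfies $X$.'' For the direction where $X$ holds at all $S^M$, we use that each path through $S$ contains such an $S^M$.

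\textbf{Item 1.} By the reduction, $\Vdash^{\mathfrak C}_S p_b$ iff $\vdash_{S^M} p$ for every maximal consistent $S^M\supseteq S$; we read the clause as requiring that $p$ is derivable somewhere on every path.

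($\Rightarrow$) Suppose $p\vdash_{S'}\bot$ for some consistent $S'\supseteq S$. Take a maximal $S^M\supseteq S'$. Then $\vdash_{S^M}p$ together with $p\vdash_{S^M}\bot$ gives $\vdash_{S^M}\bot$, a contradiction.

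($\Leftarrow$) Let $S^M\supseteq S$ be maximal consistent. By Lemma~\ref{lemma:maximalconsistencyexistence}(2), either $\vdash_{S^M}p$ or $p\vdash_{S^M}\bot$. The latter contradicts the hypothesis applied to $S'=S^M$, so $\vdash_{S^M}p$. Every path through $S$ contains such an $S^M$, so the clause holds.

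\textbf{Item 2.} The reduction gives $\Vdash^{\mathfrak C}_S A\lor_b B$ iff every maximal consistent $S^M\supseteq S$ supports $A$ or supports $B$.

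($\Rightarrow$) Let $S'\supseteq S$ with $A\Vdash_{S'}\bot$ and $B\Vdash_{S'}\bot$, and suppose $S'$ is consistent. Take a maximal $S^M\supseteq S'$. By Monotonicity, $S^M$ inherits both consequences. Since $S^M$ supports $A$ or $B$, we get $\Vdash_{S^M}\bot$, contradicting consistency. So no consistent $S'$ satisfies the hypotheses, and the right-hand side holds vacuously. Note that all bases here are consistent, so the right-hand side really says: no consistent $S'\supseteq S$ has both $A\Vdash_{S'}\bot$ and $B\Vdash_{S'}\bot$.

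($\Leftarrow$) This is the main obstacle: we need that at a maximal consistent $S^M$, every formula $C$ is either supported or satisfies $C\Vdash_{S^M}\bot$. By Corollary~\ref{lemma:maximalconsistentpersistency}, $S^M$ has no proper consistent extensions, since any consistent $S'\supseteq S^M$ equals $S^M$. So by Clause~\ref{Inf}, $C\Vdash_{S^M}\bot$ says exactly: if $\Vdash_{S^M}C$ then $\Vdash_{S^M}\bot$. As $\nVdash_{S^M}\bot$, this is equivalent to $\nVdash_{S^M}C$. The dichotomy therefore reduces to classical excluded middle in the metalanguage, which the paper permits for classical connectives.

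Now let $S^M\supseteq S$ be maximal consistent. If it supported neither $A$ nor $B$, then $A\Vdash_{S^M}\bot$ and $B\Vdash_{S^M}\bot$. The hypothesis would then give $\Vdash_{S^M}\bot$, which is impossible. Hence $S^M$ supports $A$ or $B$, and since every path through $S$ contains such an $S^M$, the clause holds.
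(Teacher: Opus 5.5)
Your proof is correct and follows essentially the same route as the paper. Both arguments rest on three facts: every path through $S$ in $C(\varnothing)$ contains a maximal consistent extension of $S$; atoms are bivalent at such bases (Lemma~\ref{lemma:maximalconsistencyexistence}); and $C \Vdash^{\mathfrak{C}}_{S^M} \bot$ holds vacuously whenever $\nVdash^{\mathfrak{C}}_{S^M} C$. The difference is organisational: you reduce the path quantifier once, via Hausdorff and monotonicity, to ``every maximal consistent $S^M \supseteq S$ satisfies $X$'', whereas the paper argues directly on a chosen path in each direction (and in item~1, $\Rightarrow$, simply compares $S''$ with $S'$ on that path).
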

\begin{proof}\ 
\begin{enumerate}
    \item[(1, $\Rightarrow$)] Assume, for the sake of proving a contradiction, that $\Vdash^{\mathfrak{C}}_{S} p_{b}$ but that there is some $S' \supseteq S$ with  $p \vdash_{S'} \bot$. Pick any path $\alpha$ with $S \in \alpha$ and $S' \in \alpha$. Since $\Vdash^{\mathfrak{C}}_{S} p_{b}$ we conclude that there is a $S'' \in \alpha$ such that $\vdash_{S''} p$. Since $\alpha$ is a maximal chain we conclude that either $S' \supseteq S''$ or $S'' \supseteq S'$. In the first case, $S''$ is inconsistent, and in the second $S'$ is inconsistent, so in either case we conclude that there is an inconsistent base in $\alpha$ through $C(\varnothing)$. Contradiction. Therefore, for all $S \supseteq S'$ we have $p \nvdash_{S'} \bot$.

     \item[(1, $\Leftarrow$)]  Assume that, for all $S' \supseteq S$, $p \nvdash_{S'} \bot$.

     Let $\alpha$ be any path such that $S \in \alpha$.  By Lemma \ref{lemma:everypaththroughconsistentcontainsmaxiconsistent}, there is a $S^{M}$ such that $S^{M} \supseteq S$ and $S^{M} \in \alpha$. Since $S^{M} \supseteq S$ we conclude $p \nvdash_{S^{M}} \bot$. By Lemma \ref{lemma:maximalconsistencyexistence}, either $\vdash_{S^{M}} p$ or $p \vdash_{S} \bot$, and since $p \nvdash_{S^{M}} \bot$ we conclude $\vdash_{S^{M}} p$, whence $\Vdash^{\mathfrak{C}}_{S} p$ by arbitrariness of $\alpha$.


     \item[(2, $\Rightarrow$)] Assume $\Vdash^{\mathfrak{C}}_{S} A \lor_{b} B$. For the sake of proving a contradiction, assume that there is a $S' \supseteq S$ such that $A \Vdash^{\mathfrak{C}}_{S'} \bot$ and $B \Vdash^{\mathfrak{C}}_{S'} \bot$. Pick any $\alpha$ with $S \in \alpha$ and $S' \in \alpha$. Since $\Vdash^{\mathfrak{C}}_{S} A \lor_{b} B$, there must be a $S'' \in \alpha$ such that either $\Vdash^{\mathfrak{C}}_{S''} A$ or $\Vdash^{\mathfrak{C}}_{S''} B$. Now pick the maximal consistent set $S^{M}$ with $S^{M} \in \alpha$. Since $S^{M} \supseteq S'$, by monotonicity we conclude that both $A \Vdash^{\mathfrak{C}}_{S} \bot$ and  $B \Vdash^{\mathfrak{C}}_{S} \bot$ hold. Since $S^{M} \supseteq S''$, by monotonicity we conclude that either  $\Vdash^{\mathfrak{C}}_{S} A$ or $\Vdash^{\mathfrak{C}}_{S} B$ hold, hence in any case $\Vdash^{\mathfrak{C}}_{S^{M}} \bot$. Then $\vdash_{S^{M}} \bot$, contradicting the fact that $S^{M}$ is consistent. Therefore, there cannot be any $S' \supseteq S$ such that $A \Vdash^{\mathfrak{C}}_{S'} \bot$ and $B \Vdash^{\mathfrak{C}}_{S'} \bot$, hence it vacuously holds that, for all $S \supseteq S'$, if  $A \Vdash^{\mathfrak{C}}_{S'} \bot$ and $B \Vdash^{\mathfrak{C}}_{S'} \bot$ then $\Vdash^{\mathfrak{C}}_{S'} \bot$.

      \item[(2, $\Leftarrow$)] Assume that, for all $S \supseteq S'$, if $A \Vdash^{\mathfrak{C}}_{S'} \bot$ and $B \Vdash^{\mathfrak{C}}_{S'} \bot$ then $\Vdash^{\mathfrak{C}}_{S'} \bot$. take any arbitrary $\alpha$ with $S \in \alpha$. For the sake of proving a contradiction, assume that $\nVdash^{\mathfrak{C}}_{S''} A$ and $\nVdash^{\mathfrak{C}}_{S''} B$ for all $S'' \in \alpha$. Now consider the maximal consistent extension $S^{M}$ with $S^{M} \in \alpha$. Since $\nVdash^{\mathfrak{C}}_{S^{M}} A$ and the for all $S''' \supseteq S^{M}$ we have $S^{M} = S'''$ by Lemma \ref{lemma:maximalconsistencyexistence}, it is vacuously true that $A \Vdash^{\mathfrak{C}}_{S^{M}} \bot$. Likewise, it is vacuously true that $B \Vdash^{\mathfrak{C}}_{S^{M}} \bot$, so since $S^{M} \supseteq S$ our original assumption yields $\Vdash^{\mathfrak{C}}_{S^{M}} \bot$ and then $\vdash_{S^{M}} \bot$, contradicting the fact that $S^{M}$ is consistent. Therefore, there must be some $S'' \in \alpha$ such that either $\Vdash^{\mathfrak{C}}_{S''} A$ or $\Vdash^{\mathfrak{C}}_{S''} B$, whence from arbitrariness of $\alpha$ we conclude $\Vdash^{\mathfrak{C}}_{S} A \lor_{b} B$. 
\end{enumerate}
\end{proof}

This proves that $\Vdash^{\mathfrak{C}}_{S} p_{b}$ and $\Vdash^{\mathfrak{C}}_{S} A \lor_{b} B$ are just alternative definitions of $\Vdash_{S} p_{c}$ and $\Vdash_{S} A \lor_{c} B$, meaning that by combining the semantic clauses of Beth with $\Bes$ defined over consistent bases we get definitions of classical connectives.

As mentioned before, the fact that bases are defined through sets and set-theoretic inclusion allows frames to be left implicit. This structure also allows an additional simplification when Beth clauses are concerned. As mentioned before, validity in a base $S$ can be defined either by considering the frames $F(\varnothing)$ and $C(\varnothing)$ or the frames $F(S)$ and $C(S)$. We briefly prove this fact:

\bigskip

\begin{theorem}
Let $\Vdash^{\mathfrak{C}^{*}}_{S}$ be the relation defined through the same clauses as $\Vdash^{C}_{S}$ but over the frame $C(S)$ instead of $C(\varnothing)$. Then $\Vdash^{\mathfrak{C}}_{S} A$ iff $\Vdash^{\mathfrak{C^{*}}}_{S} A$.
\end{theorem}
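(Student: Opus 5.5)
The plan is to prove the equivalence by induction on the structure of $A$, simultaneously for all consistent bases $S$. Nothing changes for the clauses that do not mention paths ($p_i$, $\land$, $\lor_i$, $\to_i$ and Inf). Each of these quantifies only over extensions $S'\supseteq S$, and every consistent extension of $S$ belongs both to $C(\varnothing)$ and to $C(S)$. In these cases the inductive hypothesis at each such $S'$ gives the equivalence directly. For $\to_i$ and Inf the hypothesis has to be applied to both the antecedent and the consequent at every extension. So the induction must be stated as ``for every consistent $S$, $\Vdash^{\mathfrak{C}}_{S} A$ iff $\Vdash^{\mathfrak{C}^{*}}_{S} A$''. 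Here $\Vdash^{\mathfrak{C}^{*}}_{S'}$ at an extension $S'$ is read over $C(S')$, in line with the convention that support in a base is computed over the frame generated by that base.

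The real work is in the two Beth clauses $p_b$ and $A\lor_b B$. I would handle them with a path-correspondence lemma: if $S\in C(S)$ sits in a path of $C(\varnothing)$, then the restriction to extensions of $S$ is a path of $C(S)$; and every path of $C(S)$ extends to a path of $C(\varnothing)$ whose ``upper part'' above $S$ is exactly that path.

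\textbf{Restriction direction.} Given a path $\alpha$ through $C(\varnothing)$ with $S\in\alpha$, set $\beta=\{S''\in\alpha\mid S''\supseteq S\}$. Then $\beta$ is a chain in $C(S)$ containing $S$. It is maximal there by the argument already used in the Monotonicity lemma for $\Vdash^{\mathfrak{C}}$. Indeed, any $S^1\supseteq S$ comparable with all of $\beta$ is also comparable with every element of $\alpha-\beta$, since those lie below $S$. Maximality of $\alpha$ then puts $S^1$ in $\alpha$, hence in $\beta$.

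\textbf{Extension direction.} Given a path $\beta$ through $C(S)$, it is a chain in $C(\varnothing)$. By Hausdorff's maximal principle it extends to a path $\theta$ of $C(\varnothing)$. Again $\{S''\in\theta\mid S''\supseteq S\}=\beta$, because any element of $\theta$ above $S$ is comparable with all of $\beta$ and so belongs to $\beta$ by maximality of $\beta$ in $C(S)$.

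With the lemma in hand, the argument for the Beth clauses runs as follows.

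\textbf{From $\Vdash^{\mathfrak{C}}_{S}$ to $\Vdash^{\mathfrak{C}^{*}}_{S}$.} Take a path $\beta$ of $C(S)$ and extend it to $\theta$. The $C(\varnothing)$-clause yields some $S'\in\theta$ with $\vdash_{S'}p$, or with $A$ or $B$ supported at $S'$. If $S'\supseteq S$, then $S'\in\beta$. Otherwise $S\supseteq S'$, and I transfer the fact upward to $S\in\beta$. For atoms this is monotonicity of derivability; for disjuncts it is the Monotonicity lemma followed by the inductive hypothesis.

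\textbf{From $\Vdash^{\mathfrak{C}^{*}}_{S}$ to $\Vdash^{\mathfrak{C}}_{S}$.} Take a path $\alpha$ of $C(\varnothing)$ through $S$ and restrict it to $\beta$. The witness found in $\beta$ lies in $\alpha$, and the inductive hypothesis converts support at that witness between the two relations.

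I expect the main obstacle to be the maximality bookkeeping in the correspondence lemma, which is where the comparability-by-transitivity argument is needed. A second, minor point is making sure the inductive hypothesis is invoked at the witness base with the correct frame.
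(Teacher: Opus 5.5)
Your proposal is correct and follows essentially the same route as the paper. Both restrict a path of $C(\varnothing)$ through $S$ to its part above $S$, extend a path of $C(S)$ to one of $C(\varnothing)$ by Hausdorff's principle, and split the witness into the cases $S'\supseteq S$ and $S\supseteq S'$, using monotonicity for disjuncts. You are in fact slightly more careful than the paper: you make the induction on $A$ explicit, which is needed because path-free clauses can contain Beth subformulas, and you prove the maximality of the restricted chain, which the paper simply asserts as ``clear''.
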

\begin{proof}
Since the clauses for intuitionistic connectives make no reference to frames and paths, the proof for them is immediate. For $p_{b}$, we reason as follows: 

Assume $\Vdash^{\mathfrak{C}}_{S} p_{b}$. Pick any maximal chain $\alpha$ through $C(S)$. Since $S$ is the least element of $C(S)$ we have $S \in \alpha$. Given that $C(S) \subseteq C(\varnothing)$, $\alpha$ is a chain in $C(\varnothing)$. Let $\alpha \cup \beta$ be the extension of $\alpha$ to a maximal chain in $C(\varnothing)$. Notice that $\beta$ cannot contain any $S'$ with $S' \supseteq S$ and $S' \notin \alpha$, as since $C(S)$ contains all extensions of $S$ this would imply that $S' \in C(S)$ and $\{S'\} \cup \alpha$ is totally ordered but $S \notin \alpha$, contradicting the fact that $\alpha$ was maximal in $C(S)$. Since $\Vdash^{\mathfrak{C}}_{S} p_{b}$ and $S \in (\alpha \cup \beta)$, there is a $S' \in (\alpha \cup \beta)$ such that $\vdash_{S'} p$. If $S' \notin \alpha$ then $S' \in \beta$, so since $S'$ is not an extension of $S$ and $\alpha \cup \beta$ is maximally ordered we conclude  $S \supseteq S'$, so also $\vdash_{S} p$, whence there is a $S' \in \alpha$ (namely, $S$) with $\vdash_{S'} p$. Otherwise, $S' \in \alpha$, so in either case there is a $S' \in \alpha$ with $\vdash_{S'} p$, so from arbitrariness of $\alpha$ we conclude $\Vdash^{\mathfrak{C}^{*}}_{S} p_{b}$. For the converse, assume $\Vdash^{\mathfrak{C}^{*}}_{S} p_{b}$. Let $\alpha$ be a maximal chain in $C(\varnothing)$ with $S \in \alpha$, and let $\beta = \{ S' \in \alpha \ | \ S' \supseteq S\}$. Clearly, $\beta$ is a maximal chain in $C(S)$, and since $S \in \beta$ and $\Vdash^{\mathfrak{C}^{*}}_{S} p_{b}$ there must be a $S' \in \beta$ such that $\vdash_{S'} p$. Since $\beta \subseteq \alpha$ we conclude that there is a $S' \in \alpha$ such that $\vdash_{S'} p$, so from arbitrariness of $\alpha$ we conclude $\Vdash^{\mathfrak{C}}_{S} p_{b}$. The proof for $A \lor_{b} B$ is analogous, the only difference being that if $S' \in \beta$ and  (without loss of generality) $\Vdash^{\mathfrak{C}}_{S'} A$, we conclude $\Vdash^{\mathfrak{C}}_{S} A$ by monotonicity instead of by proof-theoretic properties of bases. 
\end{proof}

Although definitions using $C(\varnothing)$ have the advantage of dealing with a single structure regardless of the base, if we use $C(S)$ some additional simplifications become available. Since $S$ is the least element of $C(S)$, it is included in all maximal chains through it, so we no longer need to require $S \in \alpha$ for the semantic definitions to work. In other words, the following hold:

\begin{enumerate}
    \item $\Vdash^{\mathfrak{C}^{*}}_{S} p_{b}$  iff, in any maximal chain $\alpha$ through $C(S)$, there is a $S' \in \alpha$ such that $\vdash_{S'} p$;

        \item $\Vdash^{\mathfrak{C}^{*}}_{S} A \lor_{b} B$  iff, in any maximal chain $\alpha$ through $C(S)$, there is a $S' \in \alpha$ such that either $\Vdash_{S'} A$ or $\Vdash_{S'} B$;
\end{enumerate}

Moreover, since maximal chains in $C(S)$ are just sets of extensions of $S$, we can entirely remove the reference to paths and define both clauses solely in terms of the extension relation. Guaranteeing the existence of some $S'$ satisfying a condition in all maximal chains through $C(S)$ is, roughly speaking, the same as guaranteeing that in any extension process starting at $S$ we will eventually reach a $S'$ in which the condition is satisfied, meaning that we can never go to a point where no extension satisfying the condition is available. In other words, we may define classical validity in $S$ by requiring that, for all $S' \supseteq S$, there is at least some $S'' \supseteq S'$ such that a given condition is satisfied.  Similar reasoning can  be used to define a connective $A \to_{b} B$ which is a semantic alternative to $A \to_{c}B$, even though it has no counterpart in Beth semantics. 

The simplified semantic clauses are be defined as follows.
\begin{enumerate}
  \item $\Vdash^{\mathfrak{C}^{*}}_{S} p_{b}$ iff, for all $S' \supseteq S$, there is some $S'' \supseteq S'$ such that $\vdash_{S''} p$ .

    \item $\Vdash^{\mathfrak{C}^{*}}_{S} A \lor_{b} B$ iff, for all $S' \supseteq S$, there is some $S'' \supseteq S'$ such that either $\Vdash^{\mathfrak{C}^{*}}_{S''} A$ or $\Vdash^{\mathfrak{C}^{*}}_{S''} B$

     \item $\Vdash^{\mathfrak{C}^{*}}_{S} A \to_{b} B$ iff, for all $S' \supseteq S$, there is some $S'' \supseteq S'$ such that $A \Vdash^{\mathfrak{C}^{*}}_{S''} B$.
\end{enumerate}

We conclude this section by going back to our original definitions and showing that the clauses above are alternative definitions the classical operators:

\begin{theorem}
    The following hold:  
\begin{enumerate}
  \item $\Vdash_{S} p^{c}$ iff, for all $S' \supseteq S$, there is some $S'' \supseteq S'$ such that $\vdash_{S''} p$ .

    \item $\Vdash_{S} A \lor_{c} B$ iff, for all $S' \supseteq S$, there is some $S'' \supseteq S'$ such that either $\Vdash_{S''} A$ or $\Vdash_{S''} B$

     \item $\Vdash_{S} A \to_{c} B$ iff, for all $S' \supseteq S$, there is some $S'' \supseteq S'$ such that $A \Vdash_{S''} B$.
\end{enumerate}
\end{theorem}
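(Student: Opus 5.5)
The plan is to prove the three equivalences separately. Each splits into two directions, and both use the same two ingredients. The first is Lemma~\ref{lemma:maximalconsistencyexistence}. It gives every consistent $S'$ a maximal consistent extension $S^{M}\supseteq S'$ whose only consistent extension is $S^{M}$ itself. Since all bases in Definition~\ref{def:validity} are consistent, any statement of the form ``for all $S''\supseteq S^{M}$, if $\Vdash_{S''}C$ then $\Vdash_{S''}E$'' reduces to the same statement at $S^{M}$ alone. Hence $C\Vdash_{S^{M}}E$ holds as soon as $\nVdash_{S^{M}}C$ or $\Vdash_{S^{M}}E$. The second ingredient is that no base supports $\bot$, because $\Vdash_{S'}\bot$ means $\vdash_{S'}\bot$ by Clause~\ref{Atm}. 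As in Section~\ref{sec:Beth}, classical metareasoning is allowed, in particular case distinctions on whether $\Vdash_{S^{M}}A$.

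\textbf{Item 1.} For ($\Rightarrow$), I fix $S'\supseteq S$ and take a maximal consistent $S^{M}\supseteq S'$. By Clause~\ref{Atmc}, $p\nvdash_{S^{M}}\bot$. By item~2 of Lemma~\ref{lemma:maximalconsistencyexistence}, either $\vdash_{S^{M}}p$ or $p\vdash_{S^{M}}\bot$, so $\vdash_{S^{M}}p$, and $S''=S^{M}$ is the required extension. For ($\Leftarrow$), I suppose $p\vdash_{S'}\bot$ for some $S'\supseteq S$. The hypothesis gives $S''\supseteq S'$ with $\vdash_{S''}p$. Monotonicity of derivability gives $p\vdash_{S''}\bot$, and composing the two deductions yields $\vdash_{S''}\bot$, contradicting consistency.

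\textbf{Item 2.} For ($\Rightarrow$), I fix $S'$ and take $S^{M}\supseteq S'$. If $\nVdash_{S^{M}}A$ and $\nVdash_{S^{M}}B$, then $A\Vdash_{S^{M}}\bot$ and $B\Vdash_{S^{M}}\bot$ hold vacuously. Monotonicity (Lemma~\ref{monotonicity}) gives $\Vdash_{S^{M}}A\lor_c B$, and Clause~\ref{orc} then yields $\Vdash_{S^{M}}\bot$, which is impossible. So $S^{M}$ supports $A$ or $B$. For ($\Leftarrow$), I take $S'\supseteq S$ with $A\Vdash_{S'}\bot$ and $B\Vdash_{S'}\bot$. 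The hypothesis gives $S''\supseteq S'$ supporting $A$ or $B$. Clause~\ref{Inf} then gives $\Vdash_{S''}\bot$, a contradiction. Hence no such $S'$ exists, and Clause~\ref{orc} holds vacuously.

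\textbf{Item 3.} This case needs the most care. For ($\Rightarrow$), I fix $S'$ and take $S^{M}\supseteq S'$, aiming to show $A\Vdash_{S^{M}}B$, which by the reduction above means: if $\Vdash_{S^{M}}A$ then $\Vdash_{S^{M}}B$.
\begin{itemize}
\item If $\nVdash_{S^{M}}A$, this holds vacuously.
\item If $\Vdash_{S^{M}}A$ and $\nVdash_{S^{M}}B$, then $B\Vdash_{S^{M}}\bot$ holds vacuously. Monotonicity gives $\Vdash_{S^{M}}A\to_c B$, and Clause~\ref{toc} yields $\Vdash_{S^{M}}\bot$, a contradiction.
\end{itemize}
So $\Vdash_{S^{M}}B$ whenever $\Vdash_{S^{M}}A$. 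For ($\Leftarrow$), I take $S'\supseteq S$ with $\Vdash_{S'}A$ and $B\Vdash_{S'}\bot$, and get $S''\supseteq S'$ with $A\Vdash_{S''}B$. Monotonicity gives $\Vdash_{S''}A$ and $B\Vdash_{S''}\bot$. Then $\Vdash_{S''}B$ and hence $\Vdash_{S''}\bot$, a contradiction, so Clause~\ref{toc} holds vacuously.

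The main point to get right is the vacuity argument at $S^{M}$. It relies on item~1 of Lemma~\ref{lemma:maximalconsistencyexistence} together with the fact that Clause~\ref{Inf} quantifies only over consistent extensions. The only non-constructive step is the case split on $\Vdash_{S^{M}}A$, and it occurs only in the classical cases.
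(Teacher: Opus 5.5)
Your proposal is correct and follows the paper's proof essentially step for step: item 1 and the ($\Leftarrow$) directions of items 2 and 3 are the same reductio arguments, and in the ($\Rightarrow$) directions you exhibit the maximal consistent extension $S^{M}\supseteq S'$ from Lemma~\ref{lemma:maximalconsistencyexistence} directly as the witness via a case split, where the paper instead refutes the negated existential (in item 2 without needing $S^{M}$ at all). The difference is cosmetic, since both versions rest on the same classical step and on the same vacuity of Clause~\ref{Inf} at $S^{M}$.
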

\begin{proof}\

\begin{enumerate}
    \item[(1, $\Rightarrow$)] Assume $\Vdash_{S} p^{c}$. Then, for all $S' \supseteq S$, $p\nvdash_{S'} \bot$. take any arbitrary $S' \supseteq S$ By Lemma \ref{lemma:maximalconsistencyexistence}, there is a $S^{M} \supseteq S'$ such that either $\vdash_{S^{M}} p$ or $p \vdash_{S^{M}} \bot$. Since $S^{M} \supseteq S$ we have $p \nvdash_{S^{M}} \bot$, so by Lemma \ref{lemma:maximalconsistencyexistence}  we conclude $\vdash_{S''} p$.

     \item[(1, $\Leftarrow$)] Assume that, for all $S' \supseteq S$, there is some $S'' \supseteq S'$ such that $\vdash_{S''} p$. For the sake of proving a contradiction, assume $\nVdash_{S} p^{c}$. Then there is a $S''' \supseteq S$ with $p \vdash_{S'''} \bot$. By our first assumption, there is a $S'''' \supseteq S'''$ with $\vdash_{S''''} p$,  and since $S'''' \supseteq S'''$ we have $p \vdash_{S''''} \bot$. But then $S''''$ is inconsistent, which contradicts the consistency constraint, so we conclude $\Vdash_{S} p^{c}$.

     \item[(2, $\Rightarrow$)] Assume $\Vdash_{S} A \lor_{c} B$. Then, for all $S' \supseteq S$, if $A \Vdash_{S'} \bot$ and $B \Vdash_{S'} \bot$ then $\Vdash_{S'} \bot$. Assume, for the sake of proving a contradiction, that there is a $S' \supseteq S$ such that, for all $S'' \supseteq S'$, both $\nVdash_{S''} A$ and $\nVdash_{S''} B$. Then it vacuously holds that $A \Vdash_{S'} \bot$ and $B \Vdash_{S'} \bot$, so $\Vdash_{S'} \bot$ and $\vdash_{S'} \bot$. Contradiction. Therefore, for all $S' \supseteq S$ there must be some $S'' \supseteq S'$ such that either $\Vdash_{S''} A$ or $\Vdash_{S''} B$.

      \item[(2, $\Leftarrow$)] Assume that, for all $S' \supseteq S$ there is some $S'' \supseteq S'$ such that either $\Vdash_{S''} A$ or $\Vdash_{S''} B$. Assume, for the sake of proving a contradiciton, that there is a $S' \supseteq S$ such that $A \Vdash_{S'} \bot$ and $B \Vdash_{S'} \bot$. By our first assumption, there is a $S'' \supseteq S'$ such that either $\Vdash_{S''} A$ or $\Vdash_{S''} B$. In either case we conclude $\Vdash_{S''} \bot$ by the entailment clause, yielding a contradiction. Therefore, there cannot be a $S' \supseteq S$ with $A \Vdash_{S'} \bot$ and $B \Vdash_{S'} \bot$, meaning that it is vacuously true that, for all $S' \supseteq S$, if $A \Vdash_{S'} \bot$ and $B \Vdash_{S'} \bot$ then $\Vdash_{S'} \bot$, so we conclude $\Vdash_{S} A \lor_{c} B$.

     \item[(3, $\Rightarrow$)] Assume $\Vdash_{S} A \to_{c} B$. Then, for all $S' \supseteq S$, if $ \Vdash_{S'} A$ and $B \Vdash_{S'} \bot$ then $\Vdash_{S'} \bot$. Assume, for the sake of proving a contradiction, that there is a $S' \supseteq S$ such that, for all $S'' \supseteq S'$, it holds that $ A\nVdash_{S''} B$. Then for all $S'' \supseteq S'$ there must be a $S''' \supseteq S''$ such that $\Vdash_{S'''} A$ and $\nVdash_{S'''} B$. Pick a maximal consistent extension $S^{M}$ of $S'$. Since $S^{M}$ is only extended by itself, $\Vdash_{S^{M}} A$ and $\nVdash_{S^{M}} B$. But then it also follows that $B \Vdash_{S^{M}} \bot$ is satisfied vacuously, so our first assumption yields $\Vdash_{S^{M}} \bot$. Contradiction. Therefore, for all $S' \supseteq S$ there must be some $S'' \supseteq S'$ such that $A \Vdash_{S''} B$.

      \item[(3, $\Leftarrow$)] Assume that, for all $S' \supseteq S$, there is some $S'' \supseteq S'$ such that $A \Vdash_{S''} B$. Assume, for the sake of proving a contradiction, that there is a $S' \supseteq S$ such that $\Vdash_{S'} A$ and $ B \Vdash_{S'} \bot$. By our first assumption, there must be a $S'' \supseteq S'$ such that $A \Vdash_{S''} B$. Since $\Vdash_{S'} A$ by monotonicity we have $\Vdash_{S''} A$, so by the entailment clause also $\Vdash_{S''} B$. But since $B \Vdash_{S'} \bot$ we conclude $\Vdash_{S''} \bot$ and $\vdash_{S'} \bot$, violating the consistency constraint. So for no $S' \supseteq S$ we have  $\Vdash_{S'} A$ and $ B \Vdash_{S'} \bot$, hence it vacuously holds that, for all $S' \supseteq S$, if  $\Vdash_{S'} A$ and $ B \Vdash_{S'} \bot$ then $\Vdash_{S'} \bot$, hence we conclude $\Vdash_{S} A \to_{c} B$. 
\end{enumerate}
\end{proof}

\section{Conclusion}\label{sec:conc}
The study of interactions between classical and intuitionistic notions of proof often unveils connections which are not immediately apparent. The double negation translations of G\"{o}del, Gentzen, Kolmogorov, Kuroda and Krivine \cite{ferreira2025varioustranslationsclassicalintuitionistic} initially showed that classical logic can be faithfully embedded into intuitionistic logic. Prawitz, Krauss and Dowek built on this to show that such embeddings enable a constructive reinterpretation of classical operators, with consequences for philosophy \cite{DBLP:journals/Prawitz15}, computer science \cite{DBLP:journals/Dowek16a} and mathematics \cite{Krauss-P1}. Barroso-Nascimento, Pereira and Pimentel showed in \cite{Pimentel2025-PIMAEV} that the setting of $\Bes$ can be used to provide a framework in which a intuitionistic notion of proof based on Heyting, Brouwer and Dummett's interpretation of proof as constructions \cite{Heyting1956-HEYIAI-2,Brouwer1981-BROBCL,Dummett1977-DUMEOI-2} coexists with Hilbert's interpretation of classical proofs as proofs of consistency \cite{ConsistencyHillbert61554c58-c869-34f0-b322-2cff263d9ae0,Hilbert1979-HILMPL}, which in turn allows the definition of a truly ecumenical semantics for Barroso-Nascimento's ecumenical logic \cite{Barroso-Nascimento}. We extend this result by showing a simple adaptation that turns this semantics into one for Prawitz's ecumenical logic \cite{DBLP:journals/Prawitz15}. Moreover, when one adopts this perspective, a surprising connection between $\Bes$ and Beth's semantics is revealed: despite having originally been stated as a definition of intuitionistic semantics, and despite not being compatible with versions of $\Bes$ that define $\bot$ through a semantic clause in terms of atomic explosion, Beth's semantic clauses and analogous semantic definitions corresponds precisely to the classical operators of Prawitz's ecumenical logic when paired with the version of $\Bes$ that requires all bases to be consistent, thus becoming a definition for classical operators.

The connection between $\Bes$, Beth semantics and ecumenical logic is also interesting due to the similarity of the intended interpretations of $\Bes$ and Beth semantics. The original definitions of Beth models were stated for fans. Those particular frames hold a special relatioship to Brouwer's absolutely free choice sequences, which formalised Brouwer's idea of mathematics as a procedure of accumulation of mental mathematical constructions over time. Similarly, standard interpretations of $\Bes$ view bases as objects that represent states of knowledge \cite{Piecha2015FailComplete-PIEFOC} by specifying which inferences are accepted in a given context, while extensions are states that may be achieved when new inferences are accepted as valid, meaning that the procedure of extending a base represents the accumulation of inferential knowledge. Despite having similar motivations, they cannot be interpreted as equivalent. While fans generally have countably many elements and a strong finitistic interpretation, the set of all atomic bases has uncountably many elements, meaning that $\Bes$ definitions quantify over an uncountable set. This makes it so that the proof of metatheoretical properties sometimes require use of either the Axiom of Choice or equivalent constructions (in our case, Hausdorff's maximal principle). However, just like in \cite{Pimentel2025-PIMAEV}, classical metareasoning is generally required only for proving properties of classical operators, meaning that the metalanguage can once again be interpreted as ecumenical. Interestingly, in this case we have not only a logical ecumenical metalanguage, but also a set-theoretic ecumenical metalanguage, similar to the one originally envisioned by Krauss \cite{Krauss-P1,Krauss-P2}. 

The relationship between $\Bes$, Beth semantics and logical ecumenism warrants further investigation. In future work we intend to investigate how logical ecumenism impacts traditional mathematical definitions in general, but also how they impact constructions related to Beth's semantics in particular. Natural points of investigation are the relationship between logical ecumenism, choice sequences, the Axiom of Choice, $\Bes$, and variants of $\Pts$ compatible with the aforementioned metatheoretical interpretations. Such studies are expected to once again unveil unexpected relations between classical and intuitionistic logic


\section*{Acknowledgments}
Barroso-Nascimento, Waddington and Pimentel are supported by the Leverhulme Trust grant RPG-2024-196 and have received funding from the European Union's Horizon 2020 research and innovation programme under the Marie Sk\l{}odowska-Curie grant agreement Number 101299559. The authors would like to thank Luiz Carlos Pereira for introducing ecumenism into their lives.

\bibliographystyle{plain}
\bibliography{ref}


\end{document}